\newtheorem{thm}{Theorem}
\newtheorem{lemma}[thm]{Lemma}
\newtheorem{prop}[thm]{Proposition}
\newtheorem{coro}[thm]{Corollary} 
\newtheorem{remark}[thm]{Remark}
\newtheorem{assumption}[thm]{Assumption}
\newcommand{\HhalfG}{\cH^{1/2}(\Gamma)}
\newcommand{\HmhalfG}{\cH^{-1/2}(\Gamma)}
\newcommand{\dpt}{d\ell}
\newcommand{\im}{\textrm i} 
\renewcommand{\ker}{\text{Ker}}
\newcommand{\chomo}{\cC_z}
\newcommand{\nuGb}{\nu_2}
\newcommand{\rflc}{F}
\newcommand{\eps}{\varepsilon}
\newcommand{\bbeta}{\boldsymbol\beta}
\newcommand{\Kone}{K}
\newcommand{\Ktwo}{K'}
\newcommand{\sgn}{\text{sgn}}
\newcommand{\kp}{k_{\parallel}}
\newcommand{\bbZ}{\mathbb Z}
\newcommand{\be}{\mathbf e}
 \newcommand{\bn}{\mathbf n}
 \newcommand{\bp}{\mathbf p}
\newcommand{\bw}{\mathbf w} \newcommand{\bx}{\mathbf x} 
\newcommand{\by}{\mathbf y}  
\newcommand{\bA}{\mathbf A}
 \newcommand{\cB}{\mathcal B}
\newcommand{\cC}{\mathcal C} \newcommand{\cD}{\mathcal D} 
 \newcommand{\cH}{\mathcal H}
\newcommand{\cI}{\mathcal I} 
\newcommand{\cK}{\mathcal K} \newcommand{\cL}{\mathcal L}
\newcommand{\cM}{\mathcal M} \newcommand{\cN}{\mathcal N}
\newcommand{\cO}{\mathcal O}  
 \newcommand{\cR}{\mathcal R}
\newcommand{\cS}{\mathcal S}
\newcommand{\A}{A}
\newcommand{\B}{B}
\newcommand{\C}{C}
\newcommand{\J}{J}
\newcommand{\Lint}{\cL^{\text{int}}}
\title{Mathematical Theory for Photonic Hall Effect in Honeycomb Photonic Crystals}
\author{
    Wei Li    
    \thanks{Department of Mathematical Sciences, DePaul University, Chicago, IL 60614.
    \tt wei.li@depaul.edu.} \,\,\,  
     Junshan Lin
  \thanks{Department of Mathematics and Statistics, Auburn University, Auburn, AL 36849.  \tt jzl0097@auburn.edu.}\,\,\,
  Jiayu Qiu\thanks{Department of Mathematics, ETH Z\"{u}rich, R\"{a}mistrasse 101, CH-8092 Z\"{u}rich, Switzerland.
    \tt jiayu.qiu@sam.math.ethz.ch.}\,\,\,
  Hai Zhang
  \thanks{Department of Mathematics, 
 HKUST,  Clear Water Bay, Kowloon, Hong Kong S.A.R., China.  \tt haizhang@ust.hk.}}
\date{}
\begin{document}

\maketitle
\begin{abstract}
In this work, we develop a mathematical theory for the photonic Hall effect and prove the existence of guided electromagnetic waves at the interface of two honeycomb photonic crystals. The guided wave resembles the edge states in electronic systems: it is induced by the topological Hall effect, and the wave propagates along the interface but not in the bulk media. Starting from a symmetric honeycomb photonic crystal that attains Dirac points at the high-symmetry points of the Brillouin zone,  $K$ and $K'$, we introduce two classes of perturbations for the periodic medium. The perturbations lift the Dirac degeneracy, forming a spectral band valley at the points $K$ and $K'$ with well-defined topological phase that depends on the sign of the perturbation parameters.
By employing the layer potential techniques and spectral analysis, we investigate the existence of guided wave along an interface when two honeycomb photonic crystals are glued together. In particular, we elucidate the relationship between the existence of the interface mode and the nature of perturbations imposed on the two periodic media separated by the interface.

\end{abstract}

\section{Introduction}
\subsection{Background}
Recent developments in topological insulators have opened up new avenues for guiding classical waves robustly. In topological insulators, an insulating bulk electronic material supports localized edge states on its surface that is immune to backscattering and system disorder \cite{bernevig-13, Hasan-Kane-10,  Qi-Zhang-11}. Such features can also be realized in photonic structures by mimicking the quantum Hall effect in topological insulator using active components to break the time-reversal symmetry of the photonic system or by relying on an analogue of the quantum valley Hall or spin Hall effect using passive components to break the spatial symmetry of the photonic system \cite{Khanikaev-12, Khanikaev-15, Lu-14, Ma_Shvets-15, Chan-19, Ozawa-19, raghu-08, Yang-15, Wang-09,  Wu_Hu-15}.  Photonic crystals engineered in such a way attain nontrivial topological phases, and they support localized electromagnetic wave modes propagating along the medium interface, which is referred to as the photonic Hall effect.

In general, the photonic Hall effect is built upon a periodic dielectric  medium with certain symmetries, such as photonic crystal with a honeycomb lattice. The spectral band structure of the corresponding elliptic operator attains Dirac points where two dispersion surfaces cross linearly. For honeycomb photonic crystals,
Dirac points generically appear at the high-symmetry points of the Brillouin zone, $K$ and $K'$. By introducing specific perturbations to the high-symmetry periodic medium that break time-reversal symmetry or inversion symmetry of the photonic system, the degeneracy at the Dirac points is lifted. Meanwhile, the local extrema of the dispersion surfaces is attained at $K$ and $K'$, forming spectral band valleys. The Berry curvature near the valley locations $K$ and $K'$ carry the topological phase of the periodic media. Indeed, a valley Chern number can be defined by integrating the Berry curvature around $K$ and $K'$, and it can be shown that two opposite perturbations of the periodic medium give rise to opposite Chern numbers at $K$ or $K'$. When two photonic crystals with opposite valley Chern numbers at $K$ or $K'$ are connected together, a localized electromagnetic wave mode can be supported along their interface.

The objective of this paper is to develop a mathematical theory for the existence of the interface (edge) modes when two honeycomb photonic crystals attaining spectral band valleys are connected together.
 On the two sides of the interface, the honeycomb photonic crystals are perturbed from 
 the periodic medium with the Dirac points. The  perturbations are either from the same class or two different classes. 
 We examine the spectral gap opening near the Dirac points and the formation of spectral band valleys at $K$ and $K'$ points of the Brillouin zone when the perturbation is introduced. It is shown that the Berry curvature swap signs when the sign of the perturbation parameter swaps.
 It is then proved that interface (edge) modes  bifurcating from $\Kone$ and $\Ktwo$ emerge for the joint photonic structure when the two joining periodic media attain various perturbations; see Theorem \ref{thm:edge} and Corollary \ref{thm:edgearm} for details. The study provides a rigorous mathematical theory for the magneto-optical photonic Hall effect and valley Hall effect that have been explored experimentally in \cite{Lu-14, Ma_Shvets-15, Yang-15, Wang-09}. In a forthcoming paper, we shall develop mathematical theory to investigate the photonic spin Hall effect that relies on double-degenerate Dirac points and a geometric perturbation of the periodic medium.

We would like to point out that significant progress has made in recent years regarding the mathematical theory of the Hall effect and the existence of the topological edge state in graphene; see, for instance, \cite{Drouot-Wenstein-20, Fefferman-Thorp-Weinsein-16, Lee-Thorp-Weinstein-Zhu-19}. Therein the authors consider the continuum Schr\"{o}dinger operators and the joint edge operators using the domain wall-models, which assume that two bulk media are ``connected'' adiabatically over a length scale that is much larger than the period of the structure. A two-scale analysis is presented in \cite{Drouot-Wenstein-20, Fefferman-Thorp-Weinsein-16, Lee-Thorp-Weinstein-Zhu-19} for the edge operator, which yields an effective Dirac equation for the slowly varying amplitudes. In this work, we consider photonic crystal models wherein two different periodic media are connected directly along the interface and the separation of the scale is no longer present. Hence the multiscale expansion method that have been developed for the domain-wall models can not be applied to study the edge modes directly, and the wave in the joint structure considered in the work can no longer be described by an effective Dirac operator. To address the new challenges in the spectral analysis brought by the discontinuities of coefficients and the presence of the sharp interface in the PDE model, we apply the mathematical framework based on a combination of layer potential theory, asymptotic analysis, and the generalized Rouch\'e theorem.
In addition, our result implies the \textit{bulk-edge correspondence} for the topological photonic crystals. More precisely, the number of the interface modes is determined by the difference of the bulk invariants (valley Chern number) across the interface. We refer to
\cite{Bal-19, Druout-20-3, Thiang-Zhang-22,bal2024continuous} and references therein for the bulk-edge correspondence in several other PDE models by using different techniques.

\subsection{Main results}
\subsubsection{Periodic elliptic operators}
We consider a family of elliptic operators modeling the propagation of transverse magnetic (TM) wave in honeycomb photonic crystals  wherein the electric vectors are in the plane of propagation. The periodic differential operator is defined by
\begin{equation}\label{eq:pert}
\cL(\eps,\delta):= - \nabla\cdot\A(\eps,\delta;\bx) \nabla,
\end{equation}
where the material tensor $\A(\eps,\delta;\bx)$ depends on two parameters, $\eps$ and $\delta$, each representing one class of perturbation for the periodic medium specified in \eqref{eq:BC} and Assumption \ref{ass:symBC}.

The honeycomb lattice is denoted by $\Lambda := \bbZ \be_1 \oplus  \bbZ \be_2 := \{ \ell_1 \be_1 + \ell_2 \be_2 :  \ell_1, \ell_2 \in \bbZ \}$,
where the lattice vectors $\be_1=(\frac{\sqrt3}{2},-\frac{1}{2})^T$ and $\be_2=(\frac{\sqrt3}{2},\frac{1}{2})^T$. 
Here and henceforth, $\cC_z := \{ \ell_1 \be_1 + \ell_2 \be_2 :  \ell_1, \ell_2\in [-1/2,1/2) \}$ represents
the fundamental cell of the lattice shown in Figure \ref{fig:periodic_cell} (left).
The reciprocal lattice is given by $\Lambda^* = \{ 2\pi\ell_1 \bbeta_1 + 2\pi\ell_2 \bbeta_2 :  \ell_1, \ell_2\in\mathbb Z\}$, where the reciprocal lattice vectors
$\bbeta_1=(\frac{1}{\sqrt3},-1)^T$ and  $\bbeta_2=(\frac{1}{\sqrt3},1)^T$
satisfy $\be_i\cdot\bbeta_j=\delta_{ij}$ for $i,j=1,2$.
The hexagon-shaped fundamental cell of $\Lambda^*$, or the Brillouin zone, is denoted by $\cB_z$ and shown in Figure \ref{fig:periodic_cell} (right).

\begin{figure}[h]
\begin{center}
\includegraphics[height=4cm]{./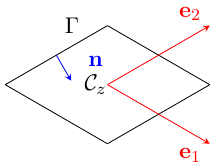} 
\includegraphics[height=5cm]{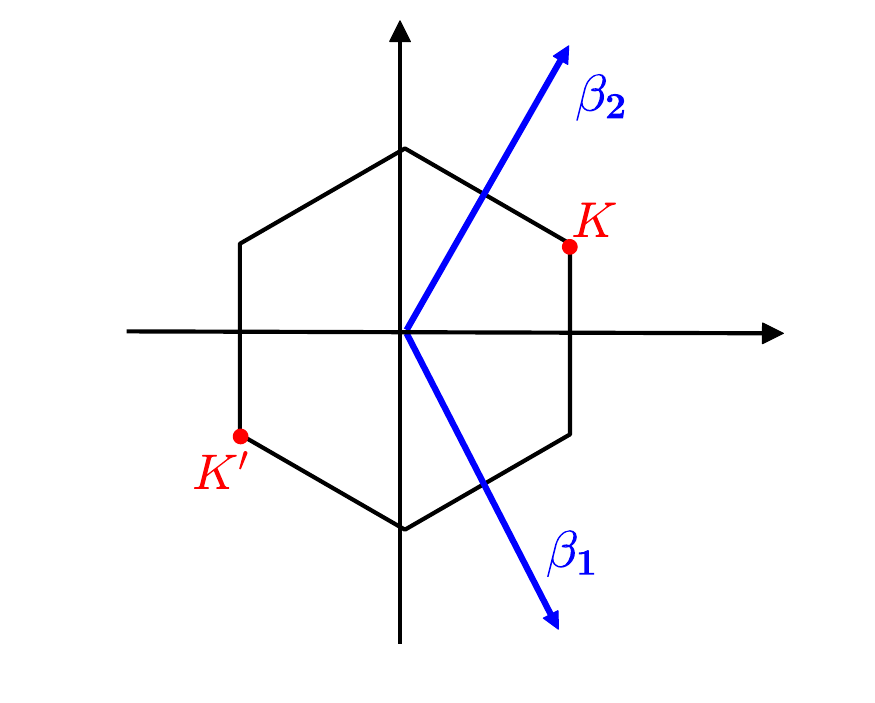}
\caption{The fundamental cell $\cC_z$ (left) and the Brillouin zone $\cB$ (right).}
\label{fig:periodic_cell}
\end{center}
\end{figure}

Let $R$ be the clockwise rotation matrix with an angle of $\frac{2\pi}{3}$ and $F$ the reflection matrix about the $x_2$-axis defined as follows:
\begin{equation}\label{eq:RF}
R\bx := \begin{pmatrix}-\frac{1}{2} & \frac{\sqrt3}{2} \\ -\frac{\sqrt3}{2} & -\frac{1}{2}\end{pmatrix}\bx,\quad
\rflc \bx := \begin{pmatrix} -1&0 \\ 0 &1\end{pmatrix}\bx.
\end{equation}
For each $2\times2$ orthogonal matrix $O$, the induced transform on functions is defined as
\[
\cO f(\bx) := f(O^{-1}\bx).
\]

\begin{assumption}\label{ass:symFull}
The material tensor $\A(\eps,\delta;\cdot)\in L^{\infty}(\mathbb R^2, \mathbb M_{2\times2})$ attains the following properties:
\begin{enumerate}
\item \textbf{Ellipticity:} There exists a constant $\gamma>0$ such that for all $\bx\in\mathbb{R}^2$ and for all $\xi\in\mathbb{R}^2$,
\[
\xi^T \A(\eps,\delta;\bx) \xi \geq \gamma |\xi|^2.
\]
\item \textbf{Hermiticity:} $\A(\eps,\delta;\bx)$ satisfies $\overline{\A(\eps,\delta;\bx)}^T = \A(\eps,\delta;\bx)$.
\item \textbf{Honeycomb periodicity:} The tensor is $\Lambda$-periodic such that
\[
\A(\eps,\delta;\bx+\be)=\A(\eps,\delta;\bx)\quad \text{for all } \be\in\Lambda.
\]
\item \textbf{$\frac{2\pi}{3}$-rotational invariance:} 
\[
\A(\eps,\delta;R^{-1}\bx) = R^{-1} \A(\eps,\delta;\bx)R.
\]

\end{enumerate}
\end{assumption}

\medskip

For brevity we denote the unperturbed operator by 
\[
\cL_0:=\cL(0,0),
\]
which attains further symmetries as follows. 

\begin{assumption}\label{ass:symA}
The unperturbed material tensor, $\A_0(\bx):= \A(0,0;\bx)$, has the following symmetries:
\begin{enumerate}
\item \textbf{Reflection invariance:} 
$\A_0(\rflc^{-1}\bx) = \rflc^{-1} \A_0(\bx) \rflc $.
\item \textbf{Time-reversal/Conjugate invariance:} $\overline{\A_0(\bx)} = \A_0(\bx)$.
\end{enumerate}
\end{assumption}
We refer to Figure \ref{fig:periodic_media} for an illustration of the symmetries imposed on the tensor $\A_0(\bx)$. Let $ K:=2\pi(\frac{1}{\sqrt3},\frac{1}{3})=2\pi(\frac{1}{3}\bbeta_1+\frac{2}{3}\bbeta_2)$ and $K':=-K$ be two high symmetry points of the Brillouin zone as shown in Figure \ref{fig:periodic_cell}. Then under a generic nondegeracy condition, the spectral band structure of $\cL_0$ exhibits Dirac points $(\Kone,\lambda_*)$ and $(\Ktwo,\lambda_*)$. More precisely, the dispersion relations near $\bp_* = K$ are expressed by
\begin{equation*}
(\lambda-\lambda_*)^2 = m_*^2 \, |\bp-\Kone|^2+O(|\bp-\Kone|^3),
\end{equation*}
wherein $\lambda_*$ is an eigenvalue of multiplicity two for the operator $\cL_0$. We refer to Theorem \ref{lem:Dirac} for more details.

\begin{assumption}[The no-fold condition along the direction $\bbeta$]\label{lem:assNoFold}
Let $\bbeta\in\mathbb R^2$ be a fixed Bloch wave vector and $\lambda_*$ denote the Dirac eigenfrequency at $K$ or $K'$. For any quasi-momentum of the form $\bp = \Kone + \ell\bbeta$ with $\ell\in \mathbb{R}$, the spectral band of $\cL_0$ attains the value $\lambda_*$ only if
\[
\bp\in (\Kone+\Lambda^*)\cup(\Ktwo+\Lambda^*).
\]
This condition ensures the absence of band folding along the $\bbeta$ direction.
\end{assumption}

\begin{figure}[!htbp]
\begin{center}
\includegraphics[width=5cm]{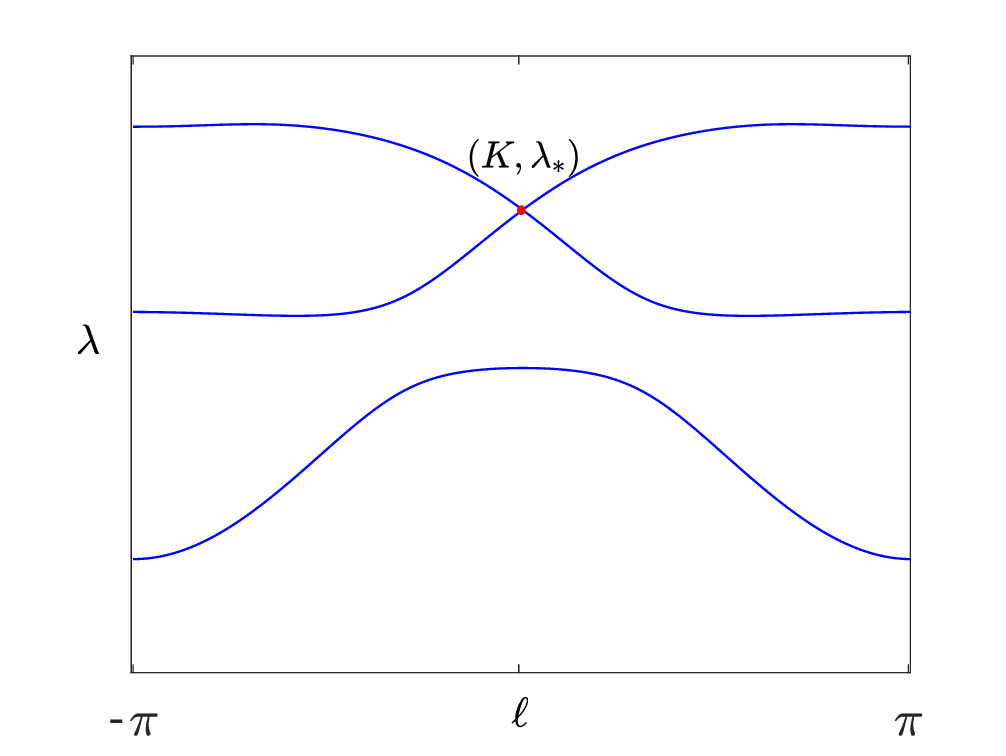}
\includegraphics[width=5cm]{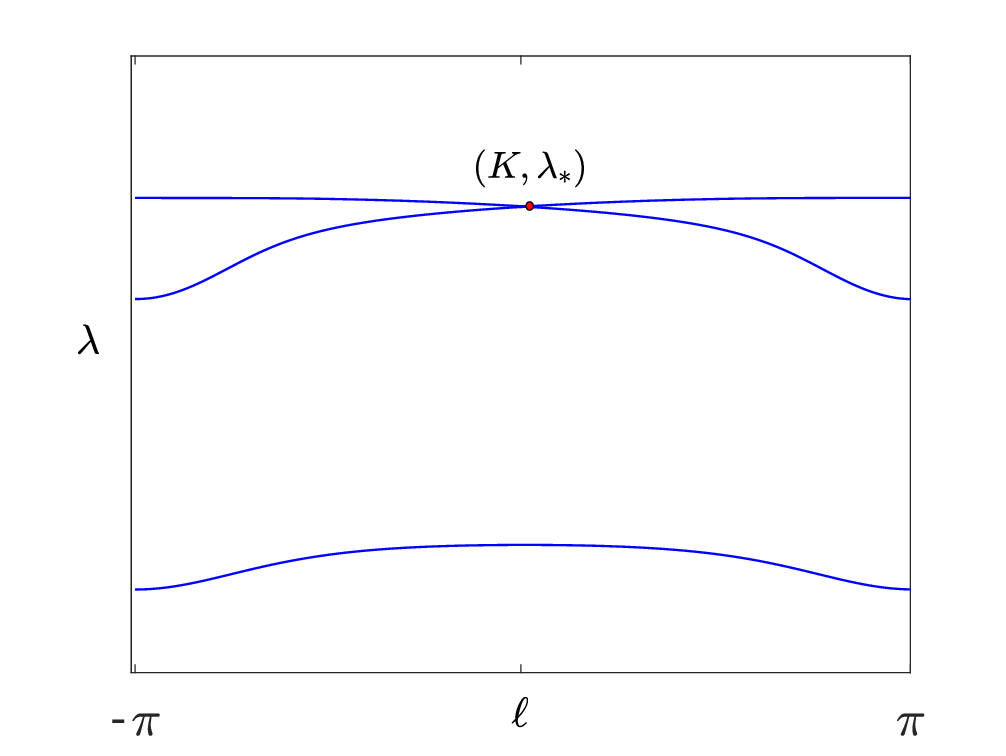}
\includegraphics[width=5cm]{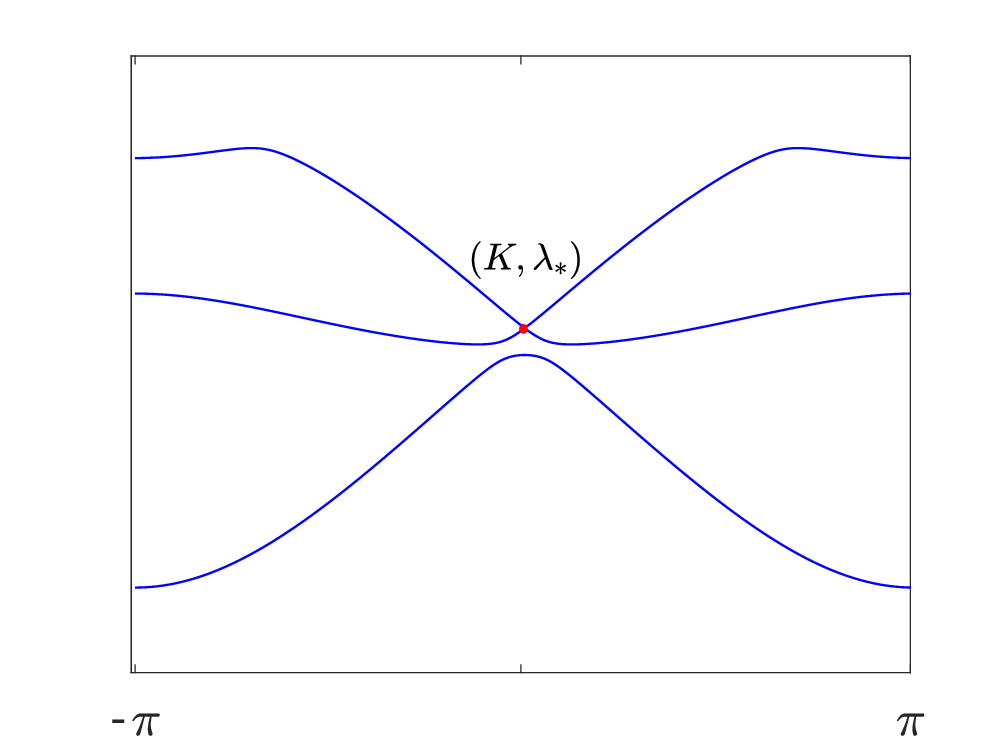}
\includegraphics[width=5cm]{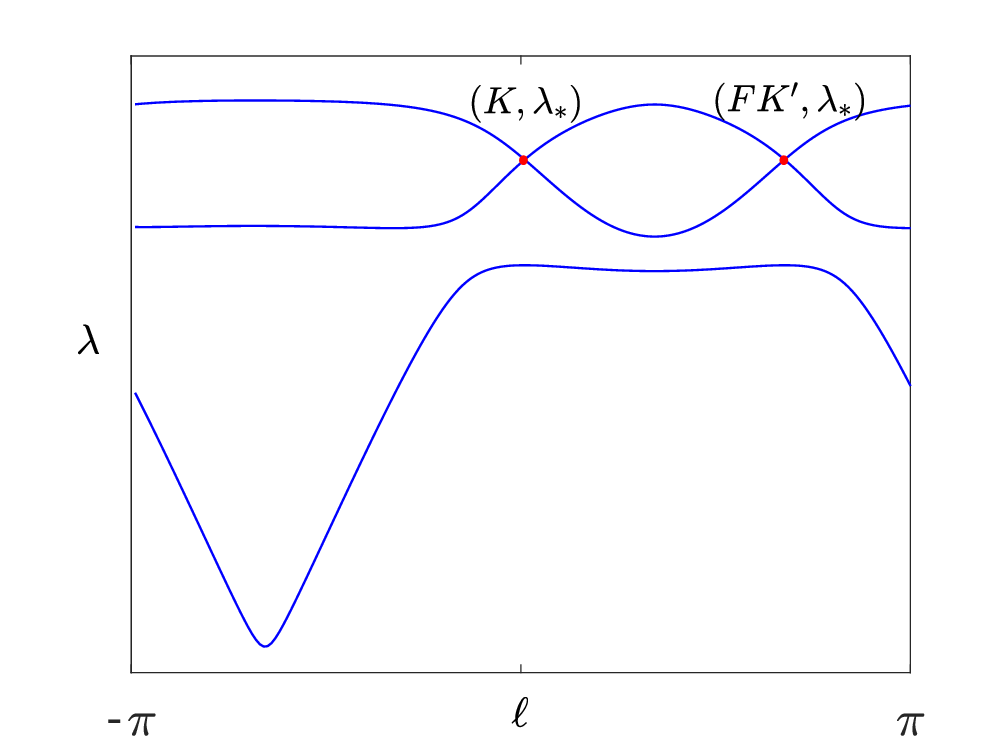}
\includegraphics[width=5cm]{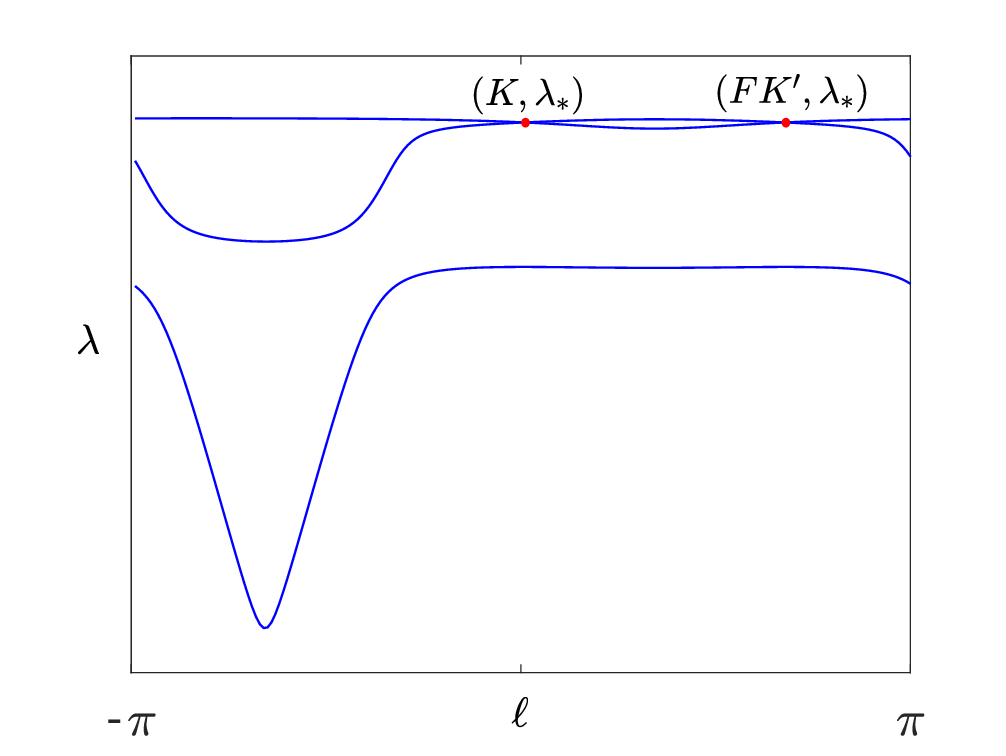}
\includegraphics[width=5cm]{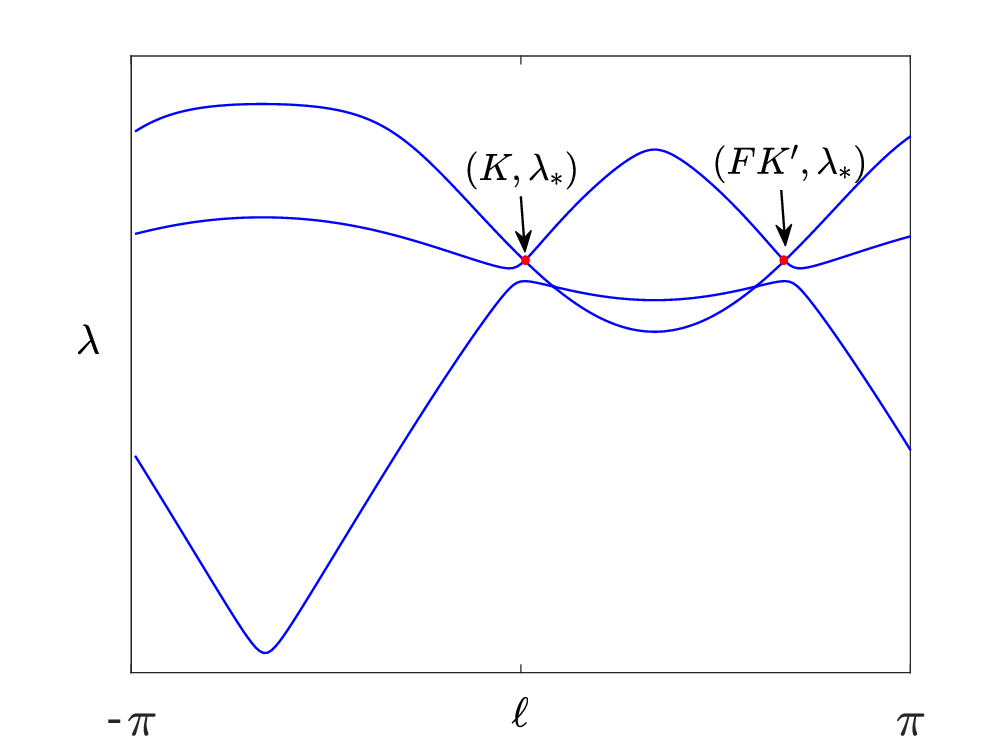}
\caption{The spectral band of $\cL_0$
when $\bp \in \{\Kone+\ell\bbeta$, $\ell\in [-\pi, \pi]\}$ for a photonic crystal consisting of an infinite array of inclusions shown in Figure \ref{fig:periodic_media} (left). The
material tensor $\A(0,0;\bx) = a\, \chi_D(\bx)$ for $\bx\in\cC_z$, where $D\in\cC_z$ is a Lipchitz domain that in invariant under $R$ and $\rflc$.
Top: $\bbeta=\bbeta_1:=(\frac{1}{\sqrt3},-1)^T$; 
Bottom: $\bbeta=\bbeta_1^a:=(0, -2)^T$.
The no-fold condition holds for the spectrum in  the first two columns when $a=\frac{1}{30}$ and $\frac{1}{100}$ respectively. The no-fold condition does not hold for the spectrum in  the last column when $a=\frac{1}{15}$.
}
\vspace*{-20pt}
\label{fig:no_fold}
\end{center}
\end{figure}

\begin{remark}
 Numerical evidence indicates that  
 Assumption~\ref{lem:assNoFold} holds for photonic crystals consisting of reasonably high-contrast dielectric materials; see Figure \ref{fig:no_fold}. In Theorem \ref{thm:edge} and Corollary \ref{thm:edgearm}, we assume that the no-fold condition holds when $\bbeta=\bbeta_1=(\frac{1}{\sqrt3},-1)^T$ and $\bbeta=\bbeta_1^a:=(0, -2)^T$, respectively. 
\end{remark}

The periodic elliptic operator $\cL(\eps,\delta)$ defined in \eqref{eq:pert} is perturbed from $\cL_0$.
At the leading order, the perturbations of the tensor $\A(\eps,\delta;\bx)$ in $\eps$ and $\delta$ are described by the derivatives
\begin{equation}\label{eq:BC}
\C(\bx):= \partial_\eps \A(\eps,0;\bx)\big|_{\eps=0},\quad 
\B(\bx):= \partial_\delta \A(0,\delta;\bx)\big|_{\delta=0}.
\end{equation}
The matrices $C$ and $B$ preserve the Hermiticity and $2\pi/3-$rotational symmetry as summarized in Assumption~\ref{ass:symFull}. We further assume that $C$ and $B$ attain the following properties:

\begin{assumption}\label{ass:symBC}
The matrices $\C,\B\in L^{\infty}(\mathbb R^2, \mathbb M_{2\times2})$ satisfy
\begin{enumerate}
\item \textbf{Reflection anti-invariance:} 
$
\C(\rflc^{-1}\bx) = -\rflc^{-1}\C(\bx) \rflc$, $\B(\rflc^{-1}\bx) = -\rflc^{-1}\B(\bx) \rflc$.
 
\item \textbf{Time-reversal invariance/anti-invariance:} 
$\overline{\C} = \C$, $\overline{\B} = -\B$. 
\end{enumerate}
\end{assumption}

\subsubsection{The interface modes for the joint operators}\label{sec:interfaces}
Let $\gamma:=\{t\be_2: t\in\mathbb{R}\}$ be the line along $\be_2$. We consider an elliptic operator defined as follows over the left and right side of $\gamma$:
\begin{equation}\label{eq:edge_operator}
    \Lint(\eps_L,\delta_L,\eps_R,\delta_R):= \left\{
    \begin{matrix}
        \cL(\eps_L,\delta_L), & \bx\cdot\be_2^\perp<0; \\
        \cL(\eps_R,\delta_R), &  \bx\cdot\be_2^\perp>0.
    \end{matrix}
    \right.
\end{equation}
The operator $\Lint(\eps_L,\delta_L,\eps_R,\delta_R)$ is associated with the joint medium formed by gluing two photonic crystals with the coefficients $ \A(\eps_L,\delta_L;\cdot)$ and $\A(\eps_R,\delta_R;\cdot)$ along a zigzag interface (see Figure \ref{fig:periodic_media}, right).
It is convenient to consider weak solutions associated with the operator $\Lint(\eps_L,\delta_L,\eps_R,\delta_R)$. For this purpose, we introduce the following sesquilinear form
\begin{equation}\label{eq:interfaceopgen}
\begin{aligned}
\langle v, \Lint(\eps_L,\delta_L,\eps_R,\delta_R)  u\rangle_{\mathbb R^2} := & \int_{\bx\cdot\be_2^\perp<0} \overline{ \nabla v(\bx)}\cdot  \A(\eps_L,\delta_L;\bx) \nabla  u(\bx))\,d\bx \\
& +
\int_{\bx\cdot\be_2^\perp>0} \overline{ \nabla v(\bx)}\cdot  \A(\eps_R,\delta_R;\bx) \nabla  u(\bx))\, d\bx.    
\end{aligned}
\end{equation}

\begin{figure}[!htbp]
\begin{center}
\includegraphics[height=4.8cm]{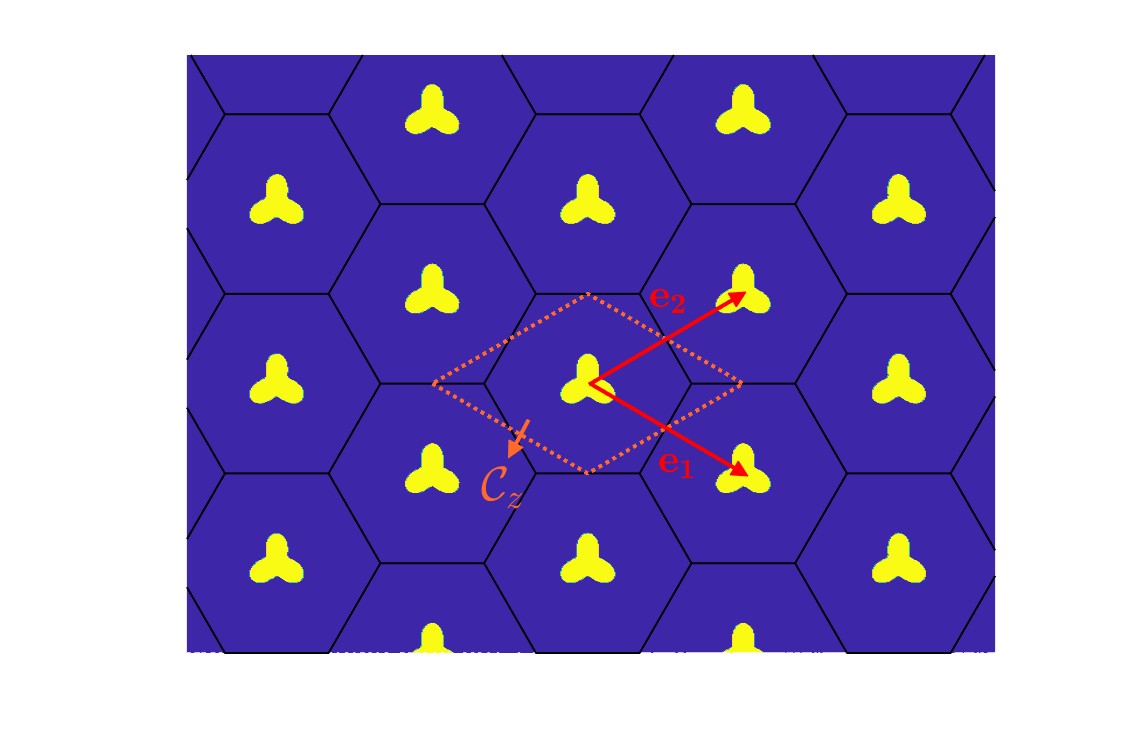} \hspace*{-1cm}
\includegraphics[height=5.2cm]{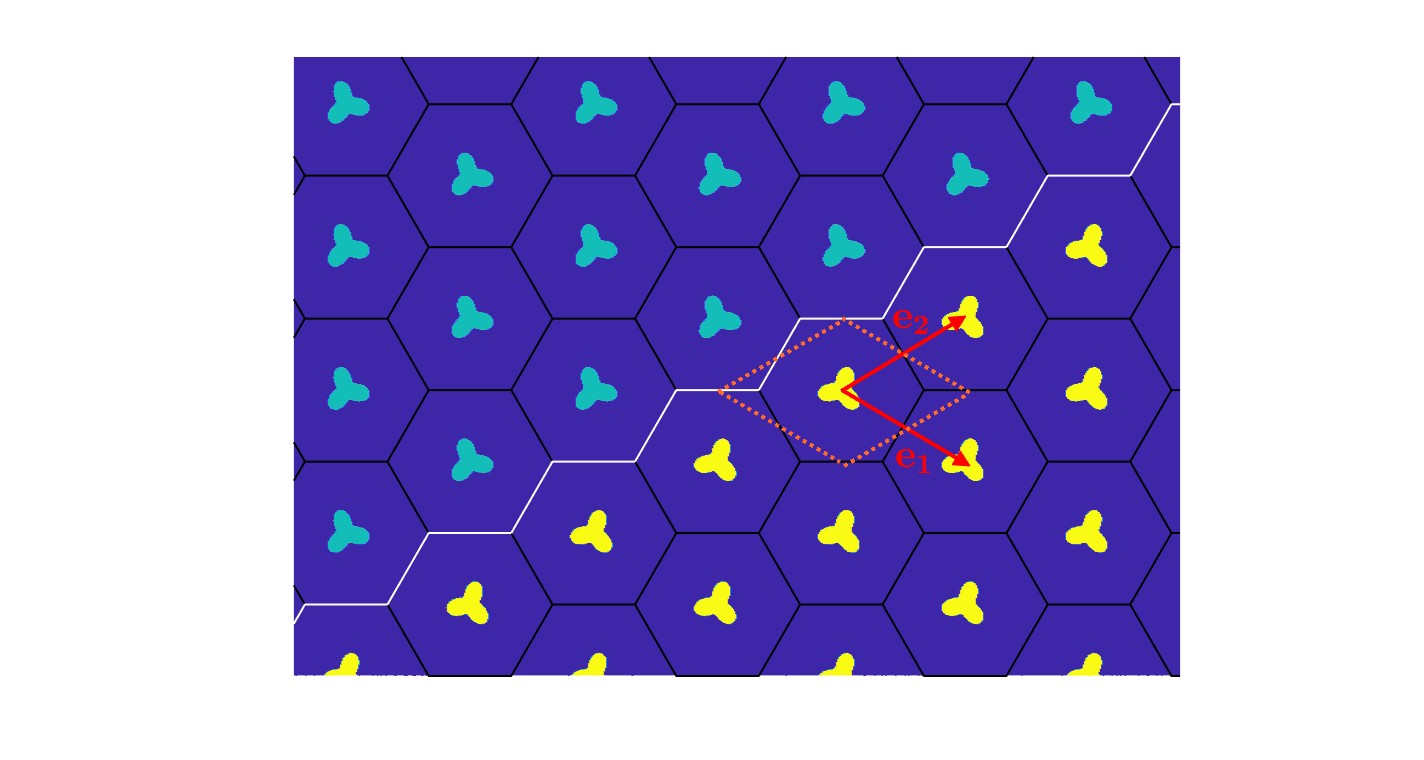}
\end{center}
\vspace*{-20pt}
\caption{Left: The medium for the elliptic operator $\cL_0:=\cL(0,0)$, for which the tensor $\A_0(\bx):=A(0,0,;\cdot)$ satisfies Assumptions \ref{ass:symFull} and \ref{ass:symA}.
Right: The medium of the joint honeycomb structure over which the elliptic operator $\Lint(\eps_L,\delta_L,\eps_R,\delta_R)$ is defined. The interface $\gamma$ of two periodic media is along the $\be_2$ direction. 
The perturbations of $\cL_0$, $(\eps_L,\delta_L)$ and $(\eps_R,\delta_R)$, are on either sides of of the interface.
}
\label{fig:periodic_media}
\end{figure}

The interface is periodic along $\be_2$. The unit cell with respect to this periodicity is $\Omega:= \bigcup_{m\in\mathbb Z}(\cC_z + m\be_1)$, an infinite strip region along the $\be_1$ direction. Define the function space
\begin{equation}\label{eq:edgespace}
H:=\left\{u\in H^1_{\text{loc}}(\mathbb R^2):
u\in H^1(\Omega)\right\}.
\end{equation}
An \emph{interface mode} for the operator $\Lint(\eps_L,\delta_L,\eps_R,\delta_R)$ is a function $u\in H$ satisfying
\begin{equation}\label{eq:edgedef}
\begin{cases}
\Lint(\eps_L,\delta_L,\eps_R,\delta_R) u = \lambda u,\\
u(\bx+\be_2)=e^{i\kp}u(\bx)
\end{cases}
\end{equation}
for some $\lambda,\kp\in\mathbb R$. In the above, $\kp$ is the quasi-momentum along the interface direction $\be_2$ and $\lambda$ is the eigenvalue. Note that for $u\in H$, there holds $|u(\bx)|\to0$
as $|\bx\cdot\be_1|\to\infty$. Namely, an interface mode propagates long the interface direction $\be_2$ while decays along the transverse direction.

Throughout the work, for clarity, we assume that the perturbation parameters satisfy $\eps_j\cdot\delta_j = 0$ for $j=L,R$.
The main result of this work is summarized as follows:

\begin{thm}\label{thm:edge} 
Let  $\kp^*:=K\cdot\be_2$ and  $\mathfrak d$ be an arbitrary constant in $(0,1)$.
Suppose Assumption~\ref{lem:assNoFold} hold along $\bbeta_1$, and the two constants $t_1$ and $t_2$ defined in Proposition~\ref{lem:Tderiv} are nonzero. 

\noindent
\begin{enumerate}
    \item [Case 1.] $(\eps_L,\delta_L)=(\eps, 0)$ and $(\eps_R,\delta_R)=(-\eps, 0)$: If $\kp=\kp^*$ (or $\kp=-\kp^*$), there exists exactly one interface mode in $H$ with the eigenvalue $\lambda\in (\lambda_* - \mathfrak d|t_1\eps|, \lambda_* + \mathfrak d|t_1\eps|)$ when $|\eps|\ll 1$.

    \item [Case 2.] $(\eps_L,\delta_L)=(0, \delta)$ and $(\eps_R,\delta_R)=(0, -\delta)$:  If $\kp=\kp^*$ (or $\kp=-\kp^*$), there exists exactly one interface mode in $H$ with the eigenvalue $\lambda\in (\lambda_* - \mathfrak d|t_2\delta|, \lambda_* + \mathfrak d|t_2\delta|)$ when $|\delta|\ll 1$.

    \item [Case 3.] $(\eps_L,\delta_L)=(0, \delta)$ and $(\eps_R,\delta_R)=(\eps, 0)$: Suppose $\eps$ and $\delta$ are sufficiently small while $\rho:=\frac{t_1\eps}{t_2\delta}$ is fixed. If $\rho>0$, there is no interface mode at $\kp=\kp^*$, but there is exactly one interface mode at $\kp=-\kp^*$ with an eigenvalue $\lambda\in (\lambda_* - \mathfrak d|t_1\eps|, \lambda_* + \mathfrak d|t_1\eps|)\cap (\lambda_* - \mathfrak d|t_2\delta|, \lambda_* + \mathfrak d|t_2\delta|)$.
    If $\rho<0$, there is exactly one edge mode at $\kp=\kp^*$ with an eigenvalue $\lambda\in (\lambda_* - \mathfrak d|t_1\eps|, \lambda_* + \mathfrak d|t_1\eps|)\cap (\lambda_* - \mathfrak d|t_2\delta|, \lambda_* + \mathfrak d|t_2\delta|)$, but there is no edge mode at $\kp=-\kp^*$. 
\end{enumerate}
All the above eigenvalues satisfy $\lambda= \lambda_* + o(\max(|\eps|,|\delta|)$.
\end{thm}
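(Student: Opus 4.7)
The plan is to recast the interface eigenvalue problem as the characteristic-value problem for a matrix-valued analytic function on the trace space on $\gamma$, and then apply the generalized Rouch\'e theorem to count its zeros. First I would Floquet-transform the equation $\Lint u = \lambda u$ along $\be_2$ with quasi-momentum $\kp$, reducing it to a boundary-value problem on the strip $\Omega$ with pseudo-periodic boundary conditions. On each half of $\Omega$ (left and right of $\gamma$), I construct via layer potentials and the resolvent of the half-strip perturbed operator the Dirichlet-to-Neumann maps $\cT_L(\lambda,\kp;\eps_L,\delta_L)$ and $\cT_R(\lambda,\kp;\eps_R,\delta_R)$; an interface mode exists if and only if the combined operator $\cT_L+\cT_R$ has a nontrivial kernel. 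The generalized Rouch\'e theorem applied to this operator pencil then reduces the existence and counting of interface modes to a finite-dimensional computation.

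\textbf{Asymptotic analysis near the Dirac point.} The key structural fact is that, under Assumption~\ref{lem:assNoFold} along $\bbeta_1$, the only bulk resonance near $(\kp^*,\lambda_*)$ comes from the two-dimensional Dirac eigenspace at $K$; the analogous statement holds at $-\kp^*$ for $K'$. Hence the singular part of $\cT_L+\cT_R$ projects onto this two-dimensional subspace, producing a $2\times 2$ effective matrix $M(\lambda,\kp)$ whose vanishing determinant selects the interface eigenvalues. Invoking Proposition~\ref{lem:Tderiv}, the perturbations contribute mass-like terms on the Dirac subspace: the $\C$ perturbation contributes $t_1\eps$ which, being time-reversal invariant, is of the \emph{same sign} at $K$ and $K'$, while the $\B$ perturbation contributes $t_2\delta$ which, being time-reversal anti-invariant, \emph{flips sign} between $K$ and $K'$. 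The transverse slope of the Dirac dispersion and the evanescent-mode selection on each half-strip provide the remaining linear-in-$(\lambda-\lambda_*)$ and linear-in-$(\kp-\kp^*)$ entries of $M$, yielding a reduced $(1+1)$-dimensional Dirac operator with a step mass across $\gamma$.

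\textbf{Case analysis and Rouch\'e counting.} With $M$ in hand, each case becomes a Jackiw--Rebbi computation. In Case~1 the mass step is $(t_1\eps,-t_1\eps)$ at both $K$ and $K'$; in Case~2 it is $(t_2\delta,-t_2\delta)$ at $K$ and $(-t_2\delta,t_2\delta)$ at $K'$; in either case the sign change of the mass guarantees a single exponentially decaying mode on each half-strip whose traces match, giving exactly one root of $\det M$ in the prescribed $\lambda$-disk at each of $\kp=\pm\kp^*$. In Case~3 the left side carries the $\delta$ mass and the right side the $\eps$ mass, so at $K$ the step is $(t_2\delta,t_1\eps)$ which changes sign exactly when $\rho<0$, while at $K'$ the $\B$ mass flips, giving step $(-t_2\delta,t_1\eps)$, which changes sign exactly when $\rho>0$. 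This produces the stated $\text{sgn}(\rho)$-dependent asymmetry between $\kp^*$ and $-\kp^*$. Applying the generalized Rouch\'e theorem to $\det M$ along a small circle of radius $\mathfrak d\,\min(|t_1\eps|,|t_2\delta|)$ about $\lambda_*$, together with $t_1,t_2\neq 0$, yields the exact counts and the refined bound $\lambda-\lambda_*=o(\max(|\eps|,|\delta|))$.

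\textbf{Main obstacle.} The principal technical difficulty is the \emph{uniform} construction of the half-strip DtN maps $\cT_L,\cT_R$ together with their joint perturbation expansions in $\eps,\delta$ and their singular behavior as $\lambda\to\lambda_*$. The zigzag strip $\Omega$ contains an entire row of fundamental cells $\cC_z$, so the Bloch eigenpair at the Dirac momentum $K$ must be reexpressed in the strip geometry, and its evanescent continuations into the two half-strips must be tracked so as to retain the Dirac two-dimensional structure while ruling out propagating-mode contamination via the no-fold condition. Combining this with the $\frac{2\pi}{3}$-rotational and reflection-(anti)invariance properties of $\A_0,\C,\B$ to identify the Pauli-like structure of the mass contributions is the core technical step; once this analytic framework is in place, the projection to the Dirac subspace and the Jackiw--Rebbi matching are the straightforward finish.
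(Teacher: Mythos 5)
Your overall architecture (Floquet reduction along $\be_2$, a one-dimensional interface reduction via boundary operators on $\gamma$, and a generalized Rouch\'e count near $\lambda_*$) is the same skeleton the paper uses, and your final sign bookkeeping in Cases 1--3 lands on the correct conclusions. However, there is a genuine gap at the heart of your argument: the step where you pass from the singular part of $\cT_L+\cT_R$ to ``a reduced $(1+1)$-dimensional Dirac operator with a step mass'' and then count modes by a Jackiw--Rebbi matching is exactly the effective-Dirac description that is \emph{not} available for a sharp interface; the paper emphasizes that without scale separation the joint structure cannot be described by an effective Dirac operator, and its entire machinery exists to replace that heuristic. Concretely, in the rigorous interface formulation the limiting characteristic-value problem at the relevant valley does \emph{not} have a simple root: the paper's limiting operators (Lemma~\ref{lem:Up}) have $h=0$ as a characteristic value of multiplicity \emph{two}, so the Gohberg--Sigal count alone yields two characteristic values of the perturbed operator, and one must then prove that exactly one of them corresponds to a genuine interface mode while the other is spurious. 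This is done through the distinction between the sufficient and necessary systems built from $\mathbb T_s^{\eps,\delta}$, $\mathbb T_t^{\eps,\delta}$, $\mathbb T_n^{\eps,\delta}$ (Lemma~\ref{lem:edgestate}) together with the rank-one and at-most-one-solution lemmas (Lemmas~\ref{lem:rank1} and~\ref{lem:uniquestn}). Your proposal has no mechanism playing this role: you implicitly assume each root of $\det M$ is a mode and that the multiplicity is one, which is precisely what must be proved. (A clean Dirichlet-to-Neumann formulation might in principle avoid spurious roots, but you would then need to justify the half-strip DtN maps at gap energies, their uniform $\eps,\delta$-asymptotics, and the multiplicity of the limiting kernel --- none of which is sketched; the paper instead works with Cauchy-data layer potentials, whose limiting kernel and range are imported from prior work, and accepts the spurious-root bookkeeping.)

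Two further points. First, the quantitative input you treat as routine is in fact the technical core: the expansion of the strip Green function near $\lambda_*$ produces the structure $\tilde{\mathbb T}^{0,0}(\lambda_*)+\beta(h)\mathbb P\pm\xi(h)\mathbb Q$ with the sign of the $\xi$-term tracking $\mathrm{sgn}(t_1\eps)$ or $\mathrm{sgn}(t_2\delta)$ and differing between $K$ and $K'$ (Propositions~\ref{lem:oplim}--\ref{lem:oplimdtwo}); without this phase-transition structure the $2\times2$ reduction and the $\rho$-dependent asymmetry cannot be established. Second, your symmetry claim is stated backwards relative to Proposition~\ref{lem:Tderiv}: in the paper's basis $w_1'=\bar w_2$, $w_2'=\bar w_1$ the time-reversal-invariant $\C$-perturbation gives $\mathrm{diag}(t_1,-t_1)$ at $K$ but $\mathrm{diag}(-t_1,t_1)$ at $K'$, while the time-reversal-breaking $\B$-perturbation gives $\mathrm{diag}(t_2,-t_2)$ at both valleys. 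Your convention amounts to swapping the basis at $K'$, which flips both masses simultaneously, so your relative-sign conclusions (hence the $\rho>0$ versus $\rho<0$ dichotomy) are unaffected; but as written the statement contradicts the computed matrices and should not be attributed to time-reversal alone.
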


\begin{figure}[!htbp]
\begin{center}
\includegraphics[height=5cm]{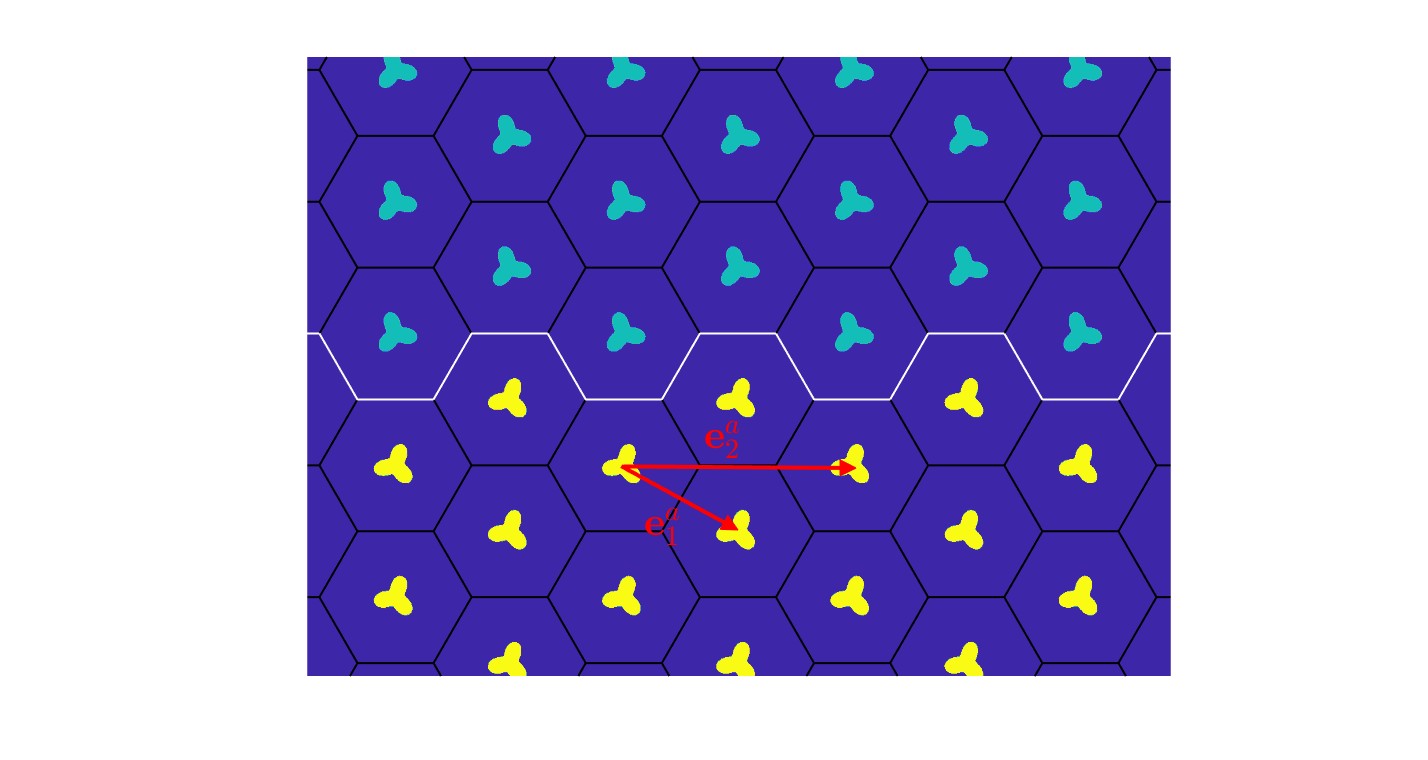}
\end{center}
\vspace*{-20pt}
\caption{The medium of the joint honeycomb structure with an armchair interface along the $\be_2^a$ direction.}
\label{fig:joint_medium_armchair}
\end{figure}

\medskip
The method for the analysis of interface modes along the zigzag interface can be extended to the joint medium with an armchair interface as shown in Figure \ref{fig:joint_medium_armchair}. To this end, let us express the honeycomb lattice as $\Lambda := \bbZ \be_1^a \oplus  \bbZ \be_2^a$, where the new lattice vectors 
\begin{equation*}
\be_1^a=\be_1=(\frac{\sqrt3}{2},-\frac{1}{2})^T, \quad \be_2^a:=\be_1+\be_2=(\sqrt{3},0)^T.
\end{equation*}
The armchair interface direction is along $\be_2^a$, hence the corresponding interface operator and the spectral problem are expressed in the form of \eqref{eq:edge_operator} and \eqref{eq:edgedef}, with $\be_2$ being replaced by $\be_2^a$. 
Note that $\Kone\cdot\be_2^a=2\pi$ and $\Ktwo \cdot\be_2^a = -2\pi$, which are both equivalent to $0$ by the periodicity of $e^{\im\kp}$. 
Define $\kp^{*,a}:=0$. Then the edge modes along the armchair interface at $\kp^{*,a}$ are given in the following corollary:
\begin{coro}\label{thm:edgearm} 
Let  $\kp^{*,a}=0$ and  $\mathfrak d$ be an arbitrary constant in $(0,1)$.
Suppose Assumption~\ref{lem:assNoFold} hold along $\bbeta_1$, and the two constants $t_1$ and $t_2$ defined in Proposition~\ref{lem:Tderiv} are nonzero.

\noindent
\begin{enumerate}
    \item [Case 1.] $(\eps_L,\delta_L)=(\eps, 0)$ and $(\eps_R,\delta_R)=(-\eps, 0)$: If $\kp=\kp^{*,a}$, there exist exactly two interface modes in $H$ with the eigenvalue $\lambda\in (\lambda_* - \mathfrak d|t_1\eps|, \lambda_* + \mathfrak d|t_1\eps|)$ when $|\eps|\ll 1$.

    \item [Case 2.] $(\eps_L,\delta_L)=(0, \delta)$ and $(\eps_R,\delta_R)=(0, -\delta)$:  If $\kp=\kp^{*,a}$, there exist exactly two interface modes in $H$ with the eigenvalue $\lambda\in (\lambda_* - \mathfrak d|t_2\delta|, \lambda_* + \mathfrak d|t_2\delta|)$ when $|\delta|\ll 1$.

    \item [Case 3.] $(\eps_L,\delta_L)=(0, \delta)$ and $(\eps_R,\delta_R)=(\eps, 0)$: Suppose $\eps$ and $\delta$ are sufficiently small while $\rho:=\frac{t_1\eps}{t_2\delta}$ is fixed. There  exists exactly one interface mode at $\kp=\kp^{*,a}$ with an eigenvalue $\lambda\in (\lambda_* - \mathfrak d|t_1\eps|, \lambda_* + \mathfrak d|t_1\eps|)\cap (\lambda_* - \mathfrak d|t_2\delta|, \lambda_* + \mathfrak d|t_2\delta|)$. 
\end{enumerate}
All the above eigenvalues satisfy $\lambda= \lambda_* + o(\max(|\eps|,|\delta|)$.
\end{coro}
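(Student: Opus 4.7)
The plan is to follow the blueprint developed for Theorem~\ref{thm:edge}: recast the interface eigenvalue problem through layer potentials on $\gamma^a:=\{t\be_2^a:t\in\mathbb R\}$, reduce the existence of interface modes to locating zeros of a characteristic operator $\fT^a(\lambda,\kp;\eps,\delta)$, and apply the generalized Rouch\'e theorem in combination with the Dirac-point asymptotics of $\cL_0$. Concretely, I would introduce the armchair strip $\Omega^a:=\bigcup_{m\in\mathbb Z}(\cC_z+m\be_1^a)$, impose the Bloch condition $u(\bx+\be_2^a)=e^{\im\kp}u(\bx)$, and construct half-space quasi-periodic Green functions for $\cL(\eps_L,\delta_L)$ and $\cL(\eps_R,\delta_R)$ by the same resolvent/layer potential machinery used in the zigzag analysis (the construction depends only on the quasi-periodicity in the interface direction and on exponential transverse decay above $\lambda_*$). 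Matching Dirichlet and Neumann traces across $\gamma^a$ then produces $\fT^a$, whose leading-order asymptotics near $\lambda_*$ are governed by the same constants $t_1,t_2$ of Proposition~\ref{lem:Tderiv}.

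The decisive new geometric feature is that $\Kone\cdot\be_2^a=2\pi$ and $\Ktwo\cdot\be_2^a=-2\pi$, both $\equiv 0\pmod{2\pi}$. Hence at the sole critical value $\kp^{*,a}=0$ both Dirac points lie on a single Bloch fiber, whereas in the zigzag setting they sit on distinct fibers $\pm\kp^*$. Under Assumption~\ref{lem:assNoFold} along $\bbeta_1^a=(0,-2)^T$, no additional bands fold onto this fiber near $\lambda_*$, so the unperturbed fiber spectrum near $\lambda_*$ is exactly four-dimensional, spanned by the two Dirac eigenstates at $\Kone$ and the two at $\Ktwo$. The central analytic step is to show that $\fT^a(\lambda,0;\eps,\delta)$ block-diagonalizes to leading order into two decoupled $2\times 2$ blocks, one per Dirac point. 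Two ingredients enter: \emph{(i)} $\Kone$ and $\Ktwo$ carry distinct transverse Bloch quasi-momenta along $\bbeta_1^a$, so their Bloch harmonics are orthogonal under the Fourier decomposition along $\be_2^a$; \emph{(ii)} the reflection anti-invariance and time-reversal (anti)invariance of $\C$ and $\B$ in Assumption~\ref{ass:symBC}, together with the $\tfrac{2\pi}{3}$-rotational symmetry, kill the off-diagonal coupling between the $\Kone$- and $\Ktwo$-eigenspaces to the order relevant for the bifurcation.

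Once the block-diagonalization is in place, each $2\times 2$ block is, after relabeling of the asymptotic coefficients, identical to the characteristic matrix analyzed in the proof of Theorem~\ref{thm:edge}, and the Rouch\'e counting transfers block-by-block. In Cases~1 and~2, each block contributes exactly one simple zero of $\fT^a$ near $\lambda_*$, yielding two interface modes at $\kp=0$. In Case~3, the sign of $\rho=t_1\eps/(t_2\delta)$ determines which block acquires a zero while the other remains invertible near $\lambda_*$, so exactly one mode arises, regardless of the sign of $\rho$. The energy window $(\lambda_*-\mathfrak d|t_1\eps|,\lambda_*+\mathfrak d|t_1\eps|)\cap(\lambda_*-\mathfrak d|t_2\delta|,\lambda_*+\mathfrak d|t_2\delta|)$ and the refined remainder $\lambda=\lambda_*+o(\max(|\eps|,|\delta|))$ are inherited from the same asymptotic expansions. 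The main obstacle is precisely justifying the block-diagonalization: in the zigzag case the two Dirac points automatically decouple because they live on separate Bloch fibers, whereas here I must quantitatively show that the perturbative $\Kone$--$\Ktwo$ cross-coupling produced by $\C$ or $\B$ is of strictly higher order in $(\eps,\delta)$ than the diagonal terms driving the bifurcation, which is where the symmetries in Assumptions~\ref{ass:symA}--\ref{ass:symBC} are essential.
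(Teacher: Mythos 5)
Your overall route is the one the paper intends: it proves Theorem \ref{thm:edge} in detail and disposes of Corollary \ref{thm:edgearm} by saying the proof "follows the same lines", the only genuinely new feature being that $\Kone\cdot\be_2^a\equiv\Ktwo\cdot\be_2^a\equiv 0$, so at $\kp^{*,a}=0$ both Dirac points sit on one Bloch fiber, the singular part of the armchair Green function carries $\mathbb P,\mathbb Q$ and $\mathbb P',\mathbb Q'$ simultaneously, the reduced characteristic matrices become $4\times4$ and block-diagonal, and the Gohberg--Sigal/Rouch\'e counts of the zigzag case add up block-by-block to give $2$, $2$, $1$. Your per-block counting, including the observation that in Case 3 the sign of $\rho$ only decides \emph{which} valley block produces the mode, matches this.

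The weak point is precisely the step you single out as the crux, the block-diagonalization, where both of your proposed mechanisms are misstated. There is no $\Kone$--$\Ktwo$ cross-coupling generated by $\C$ or $\B$ that the symmetries of Assumption \ref{ass:symBC} need to "kill": the perturbed operators $\cL(\eps,\delta)$ are still $\Lambda$-periodic, hence conserve the bulk quasimomentum and cannot mix Bloch eigenspaces attached to the distinct quasimomenta $\Kone$ and $\Ktwo$; Assumption \ref{ass:symBC} is what gives the diagonal blocks the $t_1,t_2$ structure of Proposition \ref{lem:Tderiv}, not what decouples the valleys. Likewise, the decoupling is not Fourier orthogonality along $\be_2^a$: at $\kp^{*,a}=0$ the traces of $v_i$ and $v_j'$ on the armchair interface have the \emph{same} quasi-periodicity along $\be_2^a$, so they are not orthogonal harmonics in $L^2(\Gamma^a)$. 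What actually vanishes are the cross flux pairings $c_i(\vec v_j')$ and $c_i'(\vec v_j)$ (the analogue of \eqref{eq:civj}), because the two families of Bloch modes carry distinct transverse quasimomenta along $\bbeta_1^a$ and the energy-flux form of \cite{Fliss-16} is quasi-orthogonal on such modes; this is what makes $\mathbb P,\mathbb Q$ and $\mathbb P',\mathbb Q'$ act on complementary two-dimensional subspaces of the limiting kernel. Two further points to make the argument complete: the no-fold hypothesis must be invoked along $\bbeta_1^a=(0,-2)^T$ (as in the paper's remark following Assumption \ref{lem:assNoFold}), so that the limiting kernel is exactly four-dimensional; and "one simple zero per block" is not yet a mode count --- as in Section \ref{sec:proofmain}, characteristic values of the gluing operator must be converted into interface modes via the $\mathbb T_s$, $\mathbb T_t$, $\mathbb T_n$ bookkeeping (the analogues of Lemma \ref{lem:edgestate} and Lemmas \ref{lem:rank1}--\ref{lem:uniquestn}), which is where the "exactly two" in Cases 1--2 and "exactly one" in Case 3 are actually established.
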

For clarity of the presentation, in the rest of the paper, we present the spectral analysis for \eqref{eq:edgedef} and prove Theorem \ref{thm:edge}. The proof of Corollary \ref{thm:edgearm} follows the same lines.

\subsection{An example of the material fulfilling the assumptions}
Let us discuss one realization of the periodic media described above. The photonic crystal consists of an infinite array of dielectric inclusions that are embedded in a homogeneous background. More specifically, let \( D\subset \cC_z \) be a Lipchitz domain that is invariant under the rotation $R$ and the reflection~$\rflc$ such that $R(D)=D$ and $F(D)=D$. Consider the material tensor 
\begin{align*}
  & \A(\eps,\delta;\bx)=a\, \chi_D\big(R^{\eps}\bx\big) + \delta\, \begin{pmatrix}0 & -\im \\ \im & 0\end{pmatrix} \chi_D(\bx) \quad\mbox{for} \quad \bx \in \C_z, \\
  & \A(\eps,\delta;\bx+\be) = \A(\eps,\delta;\bx) \quad\mbox{for} \; \be \in \Lambda.
\end{align*}
In the above, \(a\) is a positive constant representing the dielectric constant of the inclusion, 
\(\chi_D\) is the characteristic function of the region \(D\), i.e., \(\chi_D(\bx)=1\) when \(\bx\in D\) and \(0\) otherwise, and \(R^\eps\) denotes the rotation matrix describing a clockwise rotation by an angle \(\eps\), given by
  \[
  R^\eps = \begin{pmatrix} \cos \eps & \sin \eps \\ -\sin \eps & \cos \eps \end{pmatrix}.
  \]
In particular, when there is no perturbation (\(\eps=\delta=0\)), we have
\[
\A(0,0;\bx) = a\, \chi_D(\bx), \quad\mbox{for} \; \bx \; \in \C_z;
\quad
\A(0,0;\bx+\be) = \A(0,0;\bx) \quad\mbox{for} \; \be \in \Lambda.
\]
This corresponds to a photonic crystal with dielectric inclusions defined by \( \bigcup_{\be\in\Lambda} (D+\be)\). The two types of perturbations are given as follows:
\begin{enumerate}
    \item 
The \(\eps\)-perturbation (geometric perturbation): 
   The term 
   \[
   a\, \chi_D\big(R^{\eps}\bx\big)
   \]
   represents a rotation of the inclusions by an angle of \(\eps\). This means that at the leading order, a small rotation is described by
   \[
   \C(\bx) = \frac{d}{d\eps}\A(\eps,0;\bx)\Big|_{\eps=0},
   \]
   which, by the chain rule, reads
   \[
   \C(\bx) = \nabla\A(0,0;\bx)\cdot \J\bx.
   \]
   In the above,
   \[
   \J=\begin{pmatrix}0 & 1 \\ -1 & 0\end{pmatrix}
   \]
   is the standard rotation matrix associated with an infinitesimal rotation. In other words, for each entry of \(\A(0,0;\bx)\), we have
   \[
   \C_{k,l}(\bx)= \nabla \A_{k,l}(0,0;\bx)\cdot \J\bx.
   \]

\item 
The \(\delta\)-perturbation (opto-magnetic perturbation):  
   The term
   \[
   \delta\, \begin{pmatrix}0 & -\im \\ \im & 0\end{pmatrix} \chi_D(\bx)
   \]
   a change in the material properties by doping the inclusions with opto-magnetic materials and applying a magnetic field whose strength is proportional to \(\delta\)~\cite{Wang-09}. At the first order, this yields the perturbation tensor
   \[
   \B(\bx) = \chi_D(\bx)\, \begin{pmatrix}0 & -\im \\ \im & 0\end{pmatrix}.
   \]
\end{enumerate}

To summarize,  the \(\eps\)-perturbation rotates the inclusions with the action of \( R^\eps \), with the leading-order described by \(\C\). The perturbation breaks the reflection symmetry of the elliptic operator.
The \(\delta\)-perturbation introduces an anti-Hermitian contribution reflecting a magneto-optical doping that is represented by \(\B\). This perturbation breaks the time-reversal and the reflection symmetry.

\begin{lemma}\label{lem:realization}
The matrices $\A$, $\B$ and $\C$ satisfy Assumptions~\ref{ass:symA} and \ref{ass:symBC}.
\end{lemma}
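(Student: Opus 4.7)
The plan is a direct verification from the explicit realization. The entire argument reduces to three algebraic facts. First, because $\rflc$ is an improper orthogonal matrix, conjugation by $\rflc$ reverses orientation and gives $\rflc\, R^\eps\, \rflc^{-1} = R^{-\eps}$. Second, a direct $2\times 2$ matrix computation yields
\[
\rflc\,\sigma\,\rflc^{-1} = -\sigma, \qquad \overline{\sigma} = -\sigma, \qquad \text{where } \sigma := \begin{pmatrix} 0 & -\im \\ \im & 0 \end{pmatrix}.
\]
Third, the assumption $\rflc(\D) = \D$ yields the pointwise identity $\chi_\D(\rflc^{-1}\bx) = \chi_\D(\bx)$.

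Assumption~\ref{ass:symA} for $\A_0$ then follows at once: since $\A_0(\bx) = a\chi_\D(\bx)\,I$ is a real scalar multiple of the identity, $\rflc^{-1}\A_0(\bx)\rflc = \A_0(\bx)$ trivially, and the $\rflc$-invariance of $\chi_\D$ gives $\A_0(\rflc^{-1}\bx) = \A_0(\bx) = \rflc^{-1}\A_0(\bx)\rflc$, while $\overline{\A_0} = \A_0$ by reality of $a\chi_\D$.

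For Assumption~\ref{ass:symBC}, the matrix $\B(\bx) = \chi_\D(\bx)\sigma$ can be handled directly: $\B(\rflc^{-1}\bx) = \chi_\D(\bx)\sigma = \B(\bx)$ by the $\rflc$-invariance of $\chi_\D$, and the second algebraic identity yields $-\rflc^{-1}\B(\bx)\rflc = -\chi_\D(\bx)\,\rflc^{-1}\sigma\rflc = \chi_\D(\bx)\sigma = \B(\bx)$, while $\overline{\B} = \chi_\D\,\overline{\sigma} = -\B$. For $\C$, rather than manipulating the distributional $\nabla\chi_\D$ pointwise, I would first establish the integrated identity
\[
\A(\eps,0;\rflc^{-1}\bx) = a\chi_\D(R^\eps \rflc^{-1}\bx)\,I = a\chi_\D(\rflc^{-1} R^{-\eps}\bx)\,I = a\chi_\D(R^{-\eps}\bx)\,I = \A(-\eps,0;\bx),
\]
and then differentiate in $\eps$ at $\eps = 0$ to obtain $\C(\rflc^{-1}\bx) = -\C(\bx) = -\rflc^{-1}\C(\bx)\rflc$, the last equality because $\C$ is likewise a scalar multiple of $I$. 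Reality of all factors in $\C = a(\nabla\chi_\D \cdot \J\bx)\,I$ gives $\overline{\C} = \C$. There is no substantive obstacle; the only step requiring any care is precisely this detour through the integrated identity, which sidesteps a naive pointwise manipulation of the distribution $\nabla\chi_\D$.
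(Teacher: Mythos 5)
Your verification is correct, and for the routine items ($\A_0$ real scalar, $\B=\chi_D\sigma$ with $\rflc\sigma\rflc=-\sigma$, $\overline\sigma=-\sigma$) it matches what the paper dismisses as obvious. Where you genuinely diverge is the one nontrivial identity, the reflection anti-invariance of $\C$: the paper works pointwise from the chain-rule expression $\C_{k,l}(\bx)=\nabla\A_{k,l}(0,0;\bx)\cdot\J\bx$, changing variables under $\rflc$ and using $\rflc\J\rflc=-\J$ together with the reflection invariance of $\A_0$, whereas you differentiate in $\eps$ the exact finite-perturbation symmetry $\A(\eps,0;\rflc^{-1}\bx)=\A(-\eps,0;\bx)$, which follows from $\rflc R^{\eps}\rflc^{-1}=R^{-\eps}$ and $F(D)=D$. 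Both are valid; your route has the advantage you point out, namely that it avoids manipulating the distributional gradient $\nabla\chi_D$ and instead differentiates an identity that holds for every $\eps$, while the paper's computation is more general in that it works for an arbitrary $\A_0$ satisfying Assumption~\ref{ass:symA}, not just the scalar realization $a\chi_D I$. One small remark: the paper's proof also verifies the $\frac{2\pi}{3}$-rotational invariance $R\,\C(R^{-1}\bx)\,R^{-1}=\C(\bx)$, which is asserted in the text preceding Assumption~\ref{ass:symBC} as part of the realization (it is not literally contained in Assumptions~\ref{ass:symA} or~\ref{ass:symBC}, so your proof of the lemma as stated is complete); if you want to cover it, your method extends in one line, since $R^{\eps}$ commutes with $R$ and $R(D)=D$ give $\A(\eps,0;R^{-1}\bx)=\A(\eps,0;\bx)$, and differentiating in $\eps$ yields $\C(R^{-1}\bx)=\C(\bx)=R^{-1}\C(\bx)R$ because $\C$ is a scalar multiple of $I$.
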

The proof of the lemma is provided in Appendix~\ref{sec:realization}.

\subsection{Outline}
The rest of the paper is organized as follows. In Section 2, we study the band structure of the periodic wave operator $\cL(\eps, \delta)$. In particular, we discuss the lift of the Dirac degeneracy and the spectral gap opening at the high-symmetry points of the Brillouin zone when $\eps$ or $\delta$ is nonzero. In Section 3, we introduce the Green functions in the infinite strip region for the perturbed periodic photonic structures and present its asymptotic behavior near the Dirac point. Section 4 is devoted to the interface modes for the joint wave operator $\Lint(\eps_L,\delta_L,\eps_R,\delta_R)$. We formulate the boundary integral equation for the eigenvalue problem and perform the asymptotic analysis of integral operators to prove Theorem \ref{thm:edge}.

\section{Band structure for the periodic operators and Berry curvature} 
\subsection{Floquet-Bloch theory}
The spectrum of the elliptic operator $\cL(\eps,\delta)$ is analyzed following the standard Floquet-Bloch theory. We give a brief summary as follows and refer to \cite{Kuchment-12} for more details.
To this end, let us define the space of functions with the quasi-momentum $\bp\in\cB$ as follows: 
\begin{equation*}
H^1_{\bp}:= \{ u\in H^1_{\text{loc}}(\mathbb R^2): u(\bx+\be_i) = e^{\im \bp\cdot \be_i} u(\bx)\}.
\end{equation*} 
There holds $u\in H^1_{\bp}$ if and only if $\tilde u (\bx):= e^{- \im \bp\cdot \bx} u(\bx) \in H^1_{\mathbf{0}}$. 
Introduce the self-adjoint operator $\cL(\eps,\delta,\bp): H^1_{\mathbf{0}} \times H^1_{\mathbf{0}}\to\mathbb C$ defined as
\begin{equation}\label{eq:twoquasi}
\cL(\eps,\delta,\bp) :=  - (\nabla + \im\bp)\cdot\A(\eps,\delta;\bx) (\nabla + \im\bp).
\end{equation}
Then for $u,v\in H^1_{\bp}$ and $ \tilde u(\bx) =e^{-\im \bp\cdot \bx} u(\bx)$ and $ \tilde v(\bx) =e^{-\im \bp\cdot \bx} v(\bx)$, we have
\begin{equation}\label{eq:L_and_Lp}
\begin{aligned}
\langle v, \cL(\eps,\delta)  u\rangle_{\cC_z} &:= \int_{\cC_z} \overline{ \nabla v(\bx)}\cdot \A(\eps,\delta;\bx) \nabla  u(\bx))\,d\bx \\
&= \int_{\cC_z} \overline{ (\nabla + \im \bp) \tilde v(\bx)}\cdot \A(\eps,\delta;\bx) (\nabla + \im \bp)\tilde u(\bx) )\,d\bx \\
&=: \langle \tilde v, \cL(\eps,\delta,\bp) \tilde u\rangle_{\cC_z}.
\end{aligned}
\end{equation}
The spectral band structure of the operator $\cL(\eps,\delta)$ is given by all the eigenpair $(\bp,\lambda)\in\cB\times\mathbb R$, for which there is a solution $(\lambda,u)\in\mathbb R\times H^1_{\bp}$ to the eigenvalue problem
\begin{equation}\label{eq:disp_zig}
\langle v, \cL(\eps,\delta) u\rangle_{\cC_z} = \lambda \langle v, u\rangle_{\cC_z} \quad \forall v\in H^1_{\bp}.
\end{equation}
In view of \eqref{eq:L_and_Lp}, this is equivalent to finding the eigenpair $(\bp,\lambda)\in\cB\times\mathbb R$ such that there exists $\tilde u\in H^1_{\mathbf 0}$, such that
\begin{equation}\label{eq:disp_zig2}
\langle \tilde v, \cL(\eps,\delta,\bp) \tilde u\rangle_{\cC_z} = \lambda \langle \tilde v, \tilde u\rangle_{\cC_z} \quad \forall \tilde v\in H^1_{\mathbf{0}}.
\end{equation}

For each $\bp\in\cB$, from the spectral theory of self-adjoint elliptic operators, the spectrum of $\cL(\eps,\delta,\bp)$ consists of a countable set of real eigenvalues $\lambda_{n,\eps,\delta}(\bp)$ (band functions) that are ordered increasingly:
\begin{equation}
\lambda_{1,\eps,\delta}(\bp)\leq \lambda_{2,\eps,\delta}(\bp)\leq \lambda_{3,\eps,\delta}(\bp)\leq\cdots.
\end{equation}
The corresponding eigenfunction $u_{n,\eps,\delta}(\bx;\bp)$ are normalized such that $\| u_{n,\eps,\delta} \|_{L^2(\cC_z)}=1$.
Here $\lambda_{n,\eps,\delta}(\bp)$ and $u_{n,\eps,\delta}(\cdot,\bp)$ are chosen to be piecewise analytic in $\bp$.

In the rest of this section, we first show when $\eps=\delta=0$, the band structure of $\cL_0$ has linear degeneracies at $K$ and $K'$, the high-symmetry points of the Brillouin zone. More precisely, two dispersion surfaces cross linearly at $\bp=K$ or $K'$, forming Dirac cones. Next we show when $\varepsilon\neq0$ or $\delta\neq0$, the Dirac degeneracies are lifted in the operator $\cL(\varepsilon,\delta)$ describing the perturbed medium.

\subsection{Dirac point of the operator $\cL_0$}
Define 
\begin{equation}
H_{\bp,n}^1:=\{u\in H^1_{\text{loc}}(\mathbb R^2): u(\bx+\be_i) = e^{\im \bp\cdot \be_i} u(\bx), \cR u(\bx) = \tau^n u(\bx)\}, \quad \tau = e^{\im2\pi/3}.
\end{equation}
Due to the symmetries of the operator $\cL_0$, Dirac points exist at the high symmetry points $\Kone$ and $\Ktwo$ under the following assumption.  
\begin{assumption}[Non-degeneracy condition]\label{ass:Dirac}
Suppose the following conditions are satisfied.
\begin{enumerate}
\item
$\lambda_*\in\mathbb R$ is a simple eigenvalue of $\cL_0$ restricted to $H_{K,1}^1$, with the eigenfunction $w_1(\bx)\in H_{K,1}^1$. Without loss of generality, $w_1$ can be chosen with $\|w_1\|_{L^2(\cC_z)}=1$.
\item 
$\lambda_*$ is a simple eigenvalue of $\cL_0$ restricted to $H_{K,2}^1$, with the eigenfunction $w_2(\bx)\in H_{K,2}^1$. Further more, $w_2(\bx)=Fw_1(\bx):=w_1(\rflc\bx)$. 
\item 
$\lambda_*$ is not an eigenvalue of $\cL_0$ restricted to $H_{K,0}^1$.
\item There holds
\begin{equation}\label{eq:slope}
m_*:=\frac{1}{2}\left|\overline{\left\langle w_1,(A_0\frac{1}{\text{\normalfont i}}\nabla + \frac{1}{\text{\normalfont i}}\nabla \cdot A_0)w_2\right\rangle}_{\cC_z}\cdot\left(
\begin{matrix}
1\\ \text{\normalfont i}     
\end{matrix}
\right)\right|\neq0.
\end{equation}
\end{enumerate}
\end{assumption}

Under Assumption~\ref{ass:Dirac}, we have
the following theorem on the existence of Dirac points. The local behavior of the dispersion relations and the eigenspaces near the Dirac points are also characterized. The proof follows from the perturbation argument and the Fredholm alternative. For conciseness of presentation, we refer to ~\cite{Berkolaiko-18, Lee-Thorp-Weinstein-Zhu-19, Cassier-Weinstein-21,Li2023} for details. 
\begin{thm}\label{lem:Dirac}
Suppose Assumption~\ref{ass:Dirac} holds. Then $(\Kone,\lambda_*)$ is a Dirac point of $\cL_0$. The dispersion relation \eqref{eq:disp_zig} of $\cL_0$ takes the form  
\begin{equation*}
(\lambda-\lambda_*)^2 = m_*^2 \, |\bp-\Kone|^2+O(|\bp-\Kone|^3),
\end{equation*}
for $m_*\in\mathbb R$ with $m_*>0$ as defined in \eqref{eq:slope}.
In addition, the basis of the eigenspace at the Dirac point $(\Kone,\lambda_*)$ can be chosen as $w_1$ and $w_2$ that satisfy $\|w_i\|_{L^2(\cC_z)=1}$ and 
\begin{equation*}
Rw_1(\bx):=w_1(R^{-1}\bx) = \tau w_1(\bx),\quad Rw_2(\bx):=w_2(R^{-1}\bx) = \overline{\tau}  w_2(\bx),\quad w_2(\bx)=Fw_1(\bx):=w_1(\rflc\bx),
\end{equation*}
in which $\tau = e^{\im\frac{2\pi}{3}}$. 

$(\Ktwo,\lambda_*)$ is also a Dirac point of $\cL_0$. 
The dispersion relation \eqref{eq:disp_zig} of $\cL_0$ takes the form  
\begin{equation*}
(\lambda-\lambda_*)^2 = m_*^2 \, |\bp-\Ktwo|^2+O(|\bp-\Ktwo|^3).
\end{equation*}
The eigenspace at the Dirac point $(\Kone,\lambda_*)$ is  spanned by
\begin{equation}\label{eq:onetwo}
w_1'(\bx) := \bar w_2(\bx),\quad w_2'(\bx):=\bar w_1(\bx),
\end{equation}
which attain the following symmetries:
\begin{equation*}
Rw_1'(\bx):=w_1'(R^{-1}\bx) = \tau w_1'(\bx),\quad Rw_2'(\bx):=w_2'(R^{-1}\bx) = \overline{\tau}w_2'(\bx),\quad w_2'(\bx)=w_1'(\rflc\bx).
\end{equation*}

\end{thm}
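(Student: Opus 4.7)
\medskip

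\noindent\textbf{Proof proposal.} The plan is to reduce the eigenvalue problem near $(K,\lambda_*)$ to a finite-dimensional problem via Lyapunov--Schmidt and then exploit the rotational symmetry to pin down the leading-order form of the reduced matrix. Writing $\bp=K+\bkappa$ and $\lambda=\lambda_*+\mu$, I would look for an eigenfunction $\tilde u \in H^1_{\mathbf 0}$ of $\cL_0(\bp)$ in the form $\tilde u=\alpha_1\tilde w_1+\alpha_2\tilde w_2+\phi^\perp$, where $\tilde w_j(\bx)=e^{-\im K\cdot\bx}w_j(\bx)$ and $\phi^\perp$ lies in the $L^2(\cC_z)$-orthogonal complement of $\mathrm{span}\{\tilde w_1,\tilde w_2\}$. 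Since $\lambda_*$ is not an eigenvalue of $\cL_0$ on $H^1_{K,0}$ by Assumption~\ref{ass:Dirac}(3), the restriction of $\cL_0(K)-\lambda_*$ to the orthogonal complement is invertible, and one can solve for $\phi^\perp$ in terms of $(\alpha_1,\alpha_2,\bkappa,\mu)$ with $\phi^\perp=O(|\bkappa|+|\mu|)$. Substituting back yields a $2\times 2$ matrix equation $M(\bkappa,\mu)\vec\alpha = 0$ whose vanishing determinant characterizes the dispersion relation locally.

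Next I would expand $M(\bkappa,\mu)=-\mu I+M_1(\bkappa)+O(|\bkappa|^2+|\mu|\,|\bkappa|)$, where $M_1(\bkappa)_{ij}=\langle w_i,\,\partial_{\bp}\cL_0(K)\cdot\bkappa \, w_j\rangle_{\cC_z}$ and $\partial_\alpha\cL_0(\bp)=-\im(e_\alpha\cdot A_0\nabla+\nabla\cdot A_0 e_\alpha)+2(A_0\bp)_\alpha$. The constraint $\cR\cL_0(\bp)\cR^{-1}=\cL_0(R\bp)$ together with $RK\equiv K \pmod{\Lambda^*}$ and $\cR w_j=\tau^j w_j$ forces
\[
\langle w_i,\,\partial_\alpha\cL_0(K)w_j\rangle = \bar\tau^{\,i+j}R_{\beta\alpha}\langle w_i,\,\partial_\beta\cL_0(K)w_j\rangle.
\]
For $(i,j)=(1,1)$ or $(2,2)$ the phase factor forces the vector of matrix elements to be a non-existent eigenvector of $R^T$, so the diagonal of $M_1$ vanishes. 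For $(i,j)=(1,2)$ the relation forces the gradient vector to be a multiple of $(1,\im)^T$, so
\[
M_1(\bkappa)_{12} = \bar\lambda_*\bigl(\kappa_1+\im\kappa_2\bigr),\qquad M_1(\bkappa)_{21}=\overline{M_1(\bkappa)_{12}},
\]
where $\bar\lambda_*$ is the inner-product appearing in~\eqref{eq:slope}. Setting $\det M(\bkappa,\mu)=0$ produces $\mu^2=|M_1(\bkappa)_{12}|^2+O(|\bkappa|^3)=m_*^2|\bkappa|^2+O(|\bkappa|^3)$, which is exactly the stated dispersion. The claimed symmetries $\cR w_j=\tau^j w_j$ are built into the choice of $w_j\in H^1_{K,j}$; the relation $w_2=Fw_1$ is part of Assumption~\ref{ass:Dirac}(2).

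For the second Dirac point $(K',\lambda_*)$, I would invoke the time-reversal symmetry $\overline{A_0}=A_0$ of Assumption~\ref{ass:symA}(2): complex conjugation maps $H^1_{\bp}$ to $H^1_{-\bp}$ and commutes with $\cL_0$, so $\bar w_j$ is an eigenfunction on $H^1_{K',\cdot}$. A direct check of characters under $\cR$ shows $\bar w_2\in H^1_{K',1}$ and $\bar w_1\in H^1_{K',2}$, so setting $w_1':=\bar w_2$ and $w_2':=\bar w_1$ gives the stated symmetry-adapted basis, and the reflection identity $w_2'(\bx)=w_1'(\rflc\bx)$ follows from $w_2(\bx)=w_1(\rflc\bx)$ together with Assumption~\ref{ass:symA}(1). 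The dispersion relation near $K'$ then repeats the argument above verbatim, with the same $m_*$ because the relevant inner products at $K'$ are complex conjugates of those at $K$.

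The main obstacle is the careful bookkeeping in the symmetry argument for the $2\times 2$ matrix: one must verify $RK\equiv K$ modulo $\Lambda^*$, track the phase factors $\tau^j$ coming from both $\cR w_i$ and $\cR w_j$, and separate the linear-in-$\bkappa$ contribution from the term $2(A_0\bp)_\alpha$ whose value at $\bp=K$ is absorbed into the correction to $\lambda_*$. Matching the off-diagonal coefficient with the explicit expression in~\eqref{eq:slope} requires writing $\partial_\alpha \cL_0(K)$ as $-\im(A_0 \tfrac{1}{\im}\partial_\alpha+\tfrac{1}{\im}\partial_\alpha\cdot A_0)$ restricted to periodic functions and then contracting with $(1,\im)^T$; the nondegeneracy hypothesis $m_*\neq 0$ then guarantees a genuine conical crossing rather than a higher-order touching.
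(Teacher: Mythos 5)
Your overall strategy -- Lyapunov--Schmidt reduction onto $\mathrm{span}\{\tilde w_1,\tilde w_2\}$ using the Fredholm alternative, followed by a $\frac{2\pi}{3}$-rotation symmetry argument to pin down the reduced $2\times2$ matrix, and complex conjugation to transfer everything to $\Ktwo$ -- is precisely the "perturbation argument and Fredholm alternative" that the paper invokes for Theorem~\ref{lem:Dirac} (the paper does not prove it in-house but defers to the cited references; the same reduction reappears in its Appendix~B via the matrix \eqref{eq:Lyap}, and the symmetry computation is the one carried out in the proof of Proposition~\ref{lem:Tderiv}).

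However, the central symmetry step is mis-stated in a way that, taken literally, would fail. With $\cR w_i=\tau^i w_i$ and the inner product conjugate-linear in the first slot, the covariance relation obtained from $\cR\,\cL_0\,\cR^{-1}=\cL_0$ and $RK\equiv K\ (\mathrm{mod}\ \Lambda^*)$ carries the phase $\overline{\tau^{i}}\,\tau^{j}=\tau^{\,j-i}$, not $\bar\tau^{\,i+j}$. The correct consequences are: on the diagonal the phase is $1$, and since $1$ is not an eigenvalue of the nontrivial rotation $R^{T}$ the vector of diagonal matrix elements vanishes (equivalently, the paper's argument summing over $E=C_1\sqcup C_2\sqcup C_3$ and using $I+R^{-1}+R^{-2}=0$); off the diagonal the phase is $\tau^{\pm1}$, which forces the vector of matrix elements to be the eigenvector of $R^{T}$ with eigenvalue $\tau^{\mp1}$, i.e.\ proportional to $(1,\pm\im)^{T}$, yielding the $\kappa_1\pm\im\kappa_2$ structure and the slope $m_*$ of \eqref{eq:slope}. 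With your phase $\bar\tau^{\,i+j}$ the bookkeeping inverts: the $(1,2)$ entry would carry phase $\bar\tau^{3}=1$ and your "non-existent eigenvector" argument would force it to vanish, while the diagonal entries would carry $\tau^{\pm1}$ (which \emph{are} eigenvalues of $R^{T}$) and would not be forced to vanish -- the opposite of the conclusions you then state. The conclusions are right; the displayed relation is not, and the proof needs the corrected phases.

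Two smaller points. First, the zeroth-order piece of $\nabla_{\bp}\cL_0(\bp)|_{\bp=\Kone}$ is multiplication by an $\bx$-dependent matrix-valued function (it involves $\A_0(\bx)\Kone$), so it cannot be "absorbed into the correction to $\lambda_*$"; it must be kept inside the matrix elements, where the same rotational covariance applies to it, or eliminated by writing the matrix elements in terms of the quasi-periodic eigenfunctions $w_i$ rather than $\tilde w_i$, as in the paper's proof of Proposition~\ref{lem:Tderiv}. Second, invertibility of $\cL_0(\Kone)-\lambda_*$ on the orthogonal complement of $\mathrm{span}\{\tilde w_1,\tilde w_2\}$ uses all of Assumption~\ref{ass:Dirac}(1)--(3) (total multiplicity exactly two in $H^1_{\Kone}$), not item (3) alone.
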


\subsection{Lifting of Dirac degeneracy for the operator $\cL(\varepsilon, \delta)$ and Berry curvature}\label{sec:bandstructure}

\subsubsection{Bandgap opening near the Dirac points for $\cL(\varepsilon,\delta)$}\label{sec:gap}
Let $(\bp, \lambda_*)$ be the Dirac point of the operator $\cL_0$ with $\bp=\Kone$ or $\Ktwo$. Without loss of generality, we assume that the Dirac point is formed by $n_0$-th and $(n_0+1)$-th dispersion surfaces of the operator. For ease of notation, let us label the eigenpair $(\lambda_{n,\eps,\delta}(\bp),u_{n,\eps,\delta}(\bp))$ by $(\lambda_{\tilde n,\eps,\delta}(\bp), u_{\tilde n,\eps,\delta}(\bp))$ in the following manner: 
\begin{equation*}
\begin{aligned}
&\lambda_{\tilde 1,\eps,\delta}(\bp) = \lambda_{n_0,\eps,\delta}(\bp), \quad \lambda_{\tilde 2,\eps,\delta}(\bp)= \lambda_{n_0+1,\eps,\delta}(\bp), \\
&\lambda_{\tilde n,\eps,\delta}(\bp) = \lambda_{n+2,\eps,\delta}(\bp) \text{ for } n< n_0, \\ &\lambda_{\tilde n,\eps,\delta}(\bp)= \lambda_{n,\eps,\delta}(\bp)  \text{ for } n> n_0+1.
\end{aligned}
\end{equation*}
Here and henceforth, we will adopt the labeling $(\lambda_{\tilde n,\eps,\delta}(\bp), u_{\tilde n,\eps,\delta}(\bp))$ above for the spectral bands of $\cL(\eps,\delta)$. However, with an abuse of notations, we drop the tildes and still denote them as $(\lambda_{n,\eps,\delta}(\bp),u_{n,\eps,\delta}(\bp))$.

\begin{prop}\label{lem:Tderiv}
Define $\tilde\bw:= (e^{-\im\Kone\cdot\bx} w_1, e^{-\im\Kone\cdot\bx} w_2)$ and $\tilde\bw':= (e^{-\im\Ktwo\cdot\bx} w_1', e^{-\im\Ktwo\cdot\bx} w_2')$. 
At $\Kone$, there exist $t_1, t_2 \in \mathbb{R}$ and $\theta_* \in \mathbb{C}$ such that
 \begin{subequations} \label{eq:Tderiv1}
\begin{align}
\langle\tilde \bw,  \bbeta_1\cdot\nabla_{\bp}\cL(0,0,\bp)\tilde\bw\rangle_{\cC_z}|_{\bp=\Kone}&= 
\left(\begin{matrix}
0&\overline{\theta_*}\\
\theta_*&0
\end{matrix}\right),  
\label{eq:w_b1_dpL_w_K}
\\
\langle\tilde\bw,  \bbeta_2\cdot\nabla_{\bp}\cL(0,0,\bp)\tilde\bw\rangle_{\cC_z}|_{\bp=\Kone}&= 
\left(\begin{matrix}
0&\tau\overline{\theta_*}\\
\overline{\tau}\theta_*&0
\end{matrix}\right),
\label{eq:w_b2_dpL_w_K}
\\
\langle \tilde\bw, \partial_{\eps}\cL (\eps,0,\Kone)\tilde\bw\rangle_{\cC_z}|_{\eps=0}&= 
\left(\begin{matrix}
t_1&0\\
0&-t_1
\end{matrix}\right),
\label{eq:w_depsL_w_K}
\\
\langle \tilde\bw, \partial_{\delta}\cL (0,\delta,\Kone)\tilde\bw\rangle_{\cC_z}|_{\delta=0}&= 
\left(\begin{matrix}
t_2&0\\
0&-t_2
\end{matrix}\right).
\label{eq:w_ddeltaL_w_K}
\end{align}
\end{subequations}

At $\Ktwo$,
\begin{subequations}\label{eq:Tderiv2}
\begin{align}
\langle \tilde\bw',  \bbeta_1\cdot\nabla_{\bp}\cL(0,0,\bp)\tilde\bw'\rangle_{\cC_z}|_{\bp=\Ktwo}&= 
\left(\begin{matrix}
0&-\overline{\theta_*}\\
-\theta_*&0
\end{matrix}\right),
\label{eq:w_b1_dpL_w_K'}
\\
\langle\tilde\bw',  \bbeta_2\cdot\nabla_{\bp}\cL(0,0,\bp)\tilde\bw'\rangle_{\cC_z}|_{\bp=\Ktwo}&= 
\left(\begin{matrix}
0&-\tau\overline{\theta_*}\\
-\overline{\tau}\theta_*&0
\end{matrix}\right), \label{eq:w_b2_dpL_w_K'}
\\
\langle \tilde\bw', \partial_{\eps}\cL (\eps,0,\Ktwo)\tilde\bw'\rangle_{\cC_z}|_{\eps=0}&= 
\left(\begin{matrix}
-t_1&0\\
0&t_1
\end{matrix}\right), 
\label{eq:w_depsL_w_K'}
\\
\langle \tilde\bw', \partial_{\delta}\cL (0,\delta,\Ktwo)\tilde\bw'\rangle_{\cC_z}|_{\delta=0}&= 
\left(\begin{matrix}
t_2&0\\
0&-t_2
\end{matrix}\right).
\label{eq:w_ddeltaL_w_K'}
\end{align}
\end{subequations}

\end{prop}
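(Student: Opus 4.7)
The plan is to reduce each entry of the four matrices to an explicit integral over $\cC_z$ involving $w_1,w_2$ (or $w_1',w_2'$) and then apply the rotation, reflection, and time-reversal symmetries of $\A_0,\C,\B$, together with the transformation properties of the Dirac eigenfunctions, to pin down every entry. The first step is to use the identity $(\nabla+\im\Kone)(e^{-\im\Kone\cdot\bx}w_j)=e^{-\im\Kone\cdot\bx}\nabla w_j$ in \eqref{eq:twoquasi}--\eqref{eq:L_and_Lp} to rewrite the four sesquilinear forms in \eqref{eq:Tderiv1} as
\begin{equation*}
\langle \tilde w_i,\mathbf{v}\cdot\nabla_\bp\cL(0,0,\Kone)\tilde w_j\rangle_{\cC_z} = \langle w_i,M_\mathbf{v}w_j\rangle_{\cC_z}, \quad \langle \tilde w_i,\partial_\eps\cL(\eps,0,\Kone)\tilde w_j\rangle_{\cC_z}\big|_{\eps=0} = \int_{\cC_z}\overline{\nabla w_i}\cdot\C\nabla w_j\, d\bx,
\end{equation*}
where $M_\mathbf{v}:=\frac{1}{\im}(\mathbf{v}^T\A_0\nabla+\nabla\cdot\A_0\mathbf{v})$ is formally self-adjoint on $\Lambda$-periodic functions (since $\A_0$ is real symmetric and $\overline{w_i}w_j$ is $\Lambda$-periodic, integration by parts produces no boundary terms). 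Denote the last two integrals by $C_{ij}$ and $B_{ij}$, and the analogous quantities at $\Ktwo$ (with $w_i$ replaced by $w_i'$) by a prime.

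For the rotation argument, I would perform the substitution $\bx\mapsto R^{-1}\bx$ in each integral, inserting the identities $\A_0(R^{-1}\bx)=R^{-1}\A_0(\bx)R$ (also for $\C,\B$), $R w_i=\tau^{n_i}w_i$ with $(n_1,n_2)=(1,2)$, and $(\nabla f)(R^{-1}\bx)=R^{-1}\nabla(Rf)(\bx)$. This produces $C_{ij}=\tau^{n_j-n_i}C_{ij}$, $B_{ij}=\tau^{n_j-n_i}B_{ij}$, and $\langle w_i,M_\mathbf{v}w_j\rangle=\tau^{n_j-n_i}\langle w_i,M_{R\mathbf{v}}w_j\rangle$. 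For $i\neq j$ the factor is $\tau^{\pm 1}\neq 1$, forcing $C_{12}=C_{21}=B_{12}=B_{21}=0$. For $i=j$ the identity $\langle w_i,M_\mathbf{v}w_i\rangle=\langle w_i,M_{R\mathbf{v}}w_i\rangle$ averaged over $\mathbf{v},R\mathbf{v},R^2\mathbf{v}$ together with $I+R+R^2=0$ (the defining relation of the planar rotation by $2\pi/3$) yields $\langle w_i,M_\mathbf{v}w_i\rangle=0$, the diagonal vanishing in \eqref{eq:w_b1_dpL_w_K}--\eqref{eq:w_b2_dpL_w_K}. Computing $R\bbeta_1=-\bbeta_1-\bbeta_2$ and $R\bbeta_2=\bbeta_1$ in the identity $\langle w_1,M_{\bbeta_1}w_2\rangle=\tau\langle w_1,M_{R\bbeta_1}w_2\rangle$ then links the two off-diagonal entries: defining $\theta_*:=\overline{\langle w_1,M_{\bbeta_1}w_2\rangle}$, this gives $\langle w_1,M_{\bbeta_2}w_2\rangle=\tau\bar\theta_*$, and the $(2,1)$ entries come from the self-adjointness of $M_\mathbf{v}$.

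For the diagonal entries $C_{ii},B_{ii}$, I would use the reflection symmetry. Combining $w_2(\bx)=w_1(F\bx)$, which gives $(\nabla w_1)(F\bx)=F\nabla w_2(\bx)$, with the anti-invariance $\C(F\bx)=-F\C(\bx)F$ (and the same for $\B$), the substitution $\bx\mapsto F\bx$ in $C_{11}$ collapses — using $F^TF=F^2=I$ — to $C_{11}=-C_{22}$, and similarly $B_{11}=-B_{22}$. Setting $t_1:=C_{11}$ and $t_2:=B_{11}$, both real by the Hermiticity of $\C$ and $\B$, yields \eqref{eq:w_depsL_w_K}--\eqref{eq:w_ddeltaL_w_K}. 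The extension to $\Ktwo$ uses the identities $w_1'=\overline{w_2}$, $w_2'=\overline{w_1}$ from \eqref{eq:onetwo} together with the conjugation properties $\overline{\A_0}=\A_0$, $\overline{\C}=\C$, $\overline{\B}=-\B$, which induce $\overline{M_\mathbf{v}f}=-M_\mathbf{v}\bar f$, $\overline{\C f}=\C\bar f$, $\overline{\B f}=-\B\bar f$. Conjugating each entry at $\Ktwo$ then yields $\langle w_i',M_\mathbf{v}w_j'\rangle=-\overline{\langle w_{\tilde i},M_\mathbf{v}w_{\tilde j}\rangle}$, $C_{ij}'=\overline{C_{\tilde i\tilde j}}$, and $B_{ij}'=-\overline{B_{\tilde i\tilde j}}$ with $\tilde 1=2,\tilde 2=1$; substituting the values from the previous steps produces \eqref{eq:Tderiv2}. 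Note that the $\eps$-block flips sign at $\Ktwo$ while the $\delta$-block is unchanged, precisely because $\C$ and $\B$ have opposite behaviour under complex conjugation.

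I expect the main obstacle to be the reflection bookkeeping in the third step: the correct single overall minus sign in $C_{11}=-C_{22}$ arises only after carefully pairing $\C(F\bx)=-F\C(\bx)F$ with $(\nabla w_1)(F\bx)=F\nabla w_2(\bx)$ and collapsing the repeated $F$'s via $F^2=F^TF=I$. Everything else reduces to algebraic manipulation of the phases $\tau,\bar\tau$ from the rotation step and the conjugation signs from the time-reversal step.
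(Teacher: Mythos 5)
Your proposal is correct and follows essentially the same route as the paper's proof: the rotation symmetry kills the off-diagonal entries of the $\eps$- and $\delta$-blocks and the diagonal entries of the $\bp$-derivative blocks (your averaging over $\bv, R\bv, R^2\bv$ with $I+R+R^2=0$ is just a cosmetic variant of the paper's decomposition of the hexagon into three rotated pieces), the reflection anti-invariance paired with $w_2(\bx)=w_1(\rflc\bx)$ gives $C_{11}=-C_{22}$, $B_{11}=-B_{22}$, Hermiticity gives realness and the conjugate off-diagonal pair, and the $\Ktwo$ block follows from $w_1'=\bar w_2$, $w_2'=\bar w_1$ together with $\overline{\C}=\C$, $\overline{\B}=-\B$, $\overline{\A_0}=\A_0$, exactly as in the paper. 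The one detail you should make explicit is that the substitution $\bx\mapsto R^{-1}\bx$ is not legitimate over $\cC_z$ itself (the parallelogram cell is not $R$-invariant); since the integrands are $\Lambda$-periodic one must first replace $\cC_z$ by an $R$-invariant fundamental domain such as the hexagon $E$, which is precisely the step the paper spells out.
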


\begin{proof}
By the definition of $\cL(\eps,\delta,\bp)$ in \eqref{eq:twoquasi}, we have
\begin{equation}\label{eq:dL}
\begin{aligned}
\nabla_{\bp}\cL(0,0,\bp)|_{\bp=\Kone} &= - \im\A(\bx) (\nabla + \im \Kone) - \im \big((\nabla + \im \Kone)\cdot \A(\bx)\big)^t ,\\
\partial_{\eps}\cL (\eps,0,\Kone)|_{\eps=0} &= - (\nabla + \im \Kone) \cdot \C(\bx) (\nabla + \im\Kone),\\
\partial_{\delta}\cL (0,\delta,\Kone)|_{\delta=0} &= - (\nabla + \im \Kone) \cdot \B(\bx) (\nabla + \im \Kone).
\end{aligned}
\end{equation}
These relations hold when $\Kone$ is replaced by $\Ktwo$.

We first consider $\bp=\Kone$. Using \eqref{eq:dL}, we have
\begin{equation}
\langle \tilde w_i,  \partial_{\delta}\cL (0,\delta,\Kone) \tilde w_j\rangle_{\cC_z}|_{\delta=0} = \int_{\cC_z} (\overline{\nabla w_i(\bx)})^t \B(\bx) \nabla w_j(\bx) \, d\bx.
\end{equation}
It follows from the Hermiticity of Assumption~\ref{ass:symFull} that
\begin{equation}\label{eq:deltaherm}
\begin{aligned}
\overline{\langle \tilde w_i,  \partial_{\delta}\cL (0,\delta,\Kone) \tilde w_j\rangle_{\cC_z}|_{\delta=0}}  = \int_{\cC_z} (\overline{\nabla w_j(\bx)})^t \B(\bx) \nabla w_i(\bx) \, d\bx = \langle \tilde w_j,  \partial_{\delta}\cL (0,\delta,\Kone) \tilde w_i\rangle_{\cC_z}|_{\delta=0} .
\end{aligned}
\end{equation}
The diagonal entries of $\langle \tilde\bw, \partial_{\delta}\cL (0,\delta,\Kone)\tilde\bw\rangle_{\cC_z}|_{\delta=0}$ are opposite to each other because 
\begin{equation*}
\begin{aligned}
\langle \tilde w_2,  &\partial_{\delta}\cL (0,\delta,\Kone) \tilde w_2\rangle_{\cC_z}|_{\delta=0} 
= \int_{\cC_z} (\overline{\nabla w_2(\bx)})^t \B(\bx) \nabla w_2(\bx) \, d\bx 
= \int_{\cC_z} (\overline{\nabla \big(w_1(\rflc \bx)\big)})^t \B(\bx) \nabla \big(w_1(\rflc\bx)\big) \, d\bx\\
&= \int_{\cC_z} (\overline{\nabla w_1|_{\rflc \bx}})^t \rflc \B(\bx) \rflc  \nabla w_1|_{\rflc\bx} \, d\bx
= - \int_{\cC_z} (\overline{\nabla w_1|_{\rflc \bx}})^t \B(F\bx) \nabla w_1|_{\rflc\bx} \, d\bx \\
&= - \langle \tilde w_1,  \partial_{\delta}\cL (0,\delta,\Kone) \tilde w_1\rangle_{\cC_z}|_{\delta=0} .
\end{aligned}
\end{equation*}

To work with the off-diagonal entries, we notice that due to the $\Lambda$-periodicity of $H^1_{\bf 0}$ and $\cL(\eps,\delta,\bp)$, it holds 
\begin{equation}
\langle \tilde w_2,  \partial_{\delta}\cL (0,\delta,\Kone) \tilde w_1\rangle_{\cC_z}|_{\delta=0}
=\langle \tilde w_2, \partial_{\delta}\cL (0,\delta,\Kone) \tilde w_1\rangle_{E}|_{\delta=0}. 
\end{equation}
The off-diagonal entries are zero because
\begin{equation*} 
\begin{aligned}
\langle \tilde w_2, & \partial_{\delta}\cL (0,\delta,\Kone) \tilde w_1\rangle_{\cC_z}|_{\delta=0} 
= \int_{E} (\overline{\nabla w_2(\bx)})^t \B(\bx) \nabla w_1(\bx) \, d\bx = \int_{E} (\overline{\nabla w_2|_{R^{-1}\bx}})^t \B(R^{-1}\bx) \nabla w_1|_{R^{-1}\bx} \, d\bx\\
&= \int_{E} (\overline{R^{-1}\nabla \big(w_2(R^{-1}\bx)\big)})^t \B(R^{-1}\bx) R^{-1}\nabla \big(w_1(R^{-1}\bx)\big) \, d\bx \\
&= \tau^2\int_{E} (\overline{\nabla \big(w_2(\bx)\big)})^t R\B(R^{-1}\bx) R^{-1}\nabla \big(w_1(\bx)\big) \, d\bx
=\tau^2\int_{E} (\overline{\nabla \big(w_2(\bx)\big)})^t \B(\bx)\nabla \big(w_1(\bx)\big) \, d\bx \\
&=\tau^2 \langle \tilde w_2,  \partial_{\delta}\cL (0,\delta,\Kone) \tilde w_1\rangle_{\cC_z}|_{\delta=0}  .
\end{aligned}
\end{equation*}
Thus \eqref{eq:w_ddeltaL_w_K} holds.

The $\eps$ derivative takes the form
\begin{equation*}
\langle \tilde w_i,  \partial_{\eps}\cL (\eps,0,\Kone) \tilde w_j\rangle_{\cC_z}|_{\eps=0} = \int_{\cC_z} (\overline{\nabla w_i(\bx)})^t \C(\bx) \nabla w_j(\bx) \, d\bx.
\end{equation*}
The same calculations above leads to \eqref{eq:w_depsL_w_K}.

\begin{figure}[!htbp]
    \centering
\includegraphics[width=0.24\linewidth]{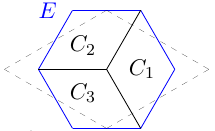}
    \caption{The hexagon $E$ represents another fundamental domain of the honeycomb lattice $\Lambda$ (cf. Figure \ref{fig:periodic_cell}). The vertices of the hexagon $E$ are given by $\frac{\sqrt{3}}{3}(\cos\theta_j, \sin\theta_j)$, wherein $\theta_j=\frac{j\pi}{3}$ for $j= 1, \cdots, 6$. $E$ can be decomposed as $E=C_1\sqcup C_2\sqcup C_3$ shown above, wherein $C_j=R^{-(j-1)}C_1$. It is clear that $E$ invariant under the rotation map $R$. }
    \label{fig:placeholder}
\end{figure}

For the $\bp$ derivatives,
\begin{equation*}
\begin{aligned}
\langle \tilde w_i,  \nabla_{\bp}\cL (0,0,\bp) \tilde w_j\rangle_{\cC_z}|_{\bp=\Kone} 
=\int_{E} -\im \overline{w_i(\bx)} \A(\bx) \nabla w_j(\bx) + \im  w_j(\bx)\A(\bx)^t  \overline{ \nabla w_i(\bx)}\, d\bx\\
= \int_{E} -\im \overline{w_i(\bx)} \A(\bx) \nabla w_j(\bx) + \im  w_j(\bx)\A(\bx)  \overline{ \nabla w_i(\bx)}\, d\bx.
\end{aligned}
\end{equation*}
We first show the diagonal entries are zero.  Let us decompose $E$ as $E = C_1\sqcup C_2\sqcup C_3$, where $\bx\in C_2$ if and only if $R\bx \in C_1$, and  $\bx\in C_3$ if and only if $R^2\bx \in C_1$. We have
\begin{equation*}
\begin{aligned}
\int_{C_2}& \overline{w_i(\bx)} \A(\bx) \nabla w_i(\bx) \, d\bx
=  \int_{C_1} \overline{w_i(R^{-1}\bx)} \A(R^{-1}\bx) \nabla w_i|_{R^{-1}\bx} \, d\bx\\
&= \int_{C_1} \overline{w_i(R^{-1}\bx)} \A(R^{-1}\bx) R^{-1} \nabla \big( w_i(R^{-1}\bx)\big) \, d\bx 
=  \int_{C_1} \overline{w_i(\bx)} R^{-1} \A(\bx) \nabla \big( w_i(\bx)\big) \,d\bx
\end{aligned}
\end{equation*}
Similarly, 
\begin{equation*}
\begin{aligned}
\int_{C_3} \overline{w_i(\bx)} \A(\bx) \nabla w_i(\bx) \, d\bx
=  \int_{C_1} \overline{w_i(\bx)} (R^{-1})^2 \A(\bx) \nabla \big( w_i(\bx)\big) \,d\bx
\end{aligned}
\end{equation*}
Thus
\begin{equation*}
\begin{aligned}
\int_E \overline{w_i(\bx)} \A(\bx) \nabla w_i(\bx) \, d\bx= (I + R^{-1} + (R^{-1})^2)\int_{C_1}& \overline{w_i(\bx)} \A(\bx) \nabla w_i(\bx) \, d\bx =0.
\end{aligned}
\end{equation*}
The off-diagonal entries are complex conjugate of each other because
\begin{equation*}
\overline{\int_{\cC_z} \overline{w_i(\bx)}  \A(\bx) \nabla w_j(\bx) \, d\bx } 
= \int_{\cC_z}  w_i(\bx) \overline{\A(\bx)} \overline{\nabla w_j(\bx)} \, d\bx 
= \int_{\cC_z} w_i(\bx) \A(\bx) \overline{\nabla w_j(\bx)} \,d\bx.
\end{equation*}
Next, using $\bbeta_2 = R^{-1}\bbeta_1$, we have the following relation between the $\bp$ derivatives in the directions of $\beta_1$ and $\beta_2$
\begin{equation*}
\begin{aligned}
& \langle \tilde w_i,  \bbeta_2\cdot\nabla_{\bp}\cL (0,0,\bp) \tilde w_j\rangle_{\cC_z}|_{\bp=\Kone} \\
= & \int_E -\im \overline{w_i(\bx)} (R^{-1}\bbeta_1)^t  \A(\bx) \nabla w_j(\bx) + \im  w_j(\bx) (R^{-1}\bbeta_1)^t  \A(\bx)  \overline{ \nabla w_i(\bx)}\, d\bx\\
= & \int_E -\im \overline{w_i(R^{-1}\bx)} (R^{-1}\bbeta_1)^t  \A(R^{-1}\bx) \nabla w_j|_{R^{-1}\bx} + \im  w_j(R^{-1}\bx) (R^{-1}\bbeta_1)^t  \A(R^{-1}\bx)  \overline{ \nabla w_i|_{R^{-1}\bx}}\, d\bx\\
= & \int_E -\im \overline{w_i(R^{-1}\bx)} (R^{-1}\bbeta_1)^t  \A(R^{-1}\bx) R^{-1}\nabla \big(w_j(R^{-1}\bx)\big) + \im  w_j(R^{-1}\bx) (R^{-1}\bbeta_1)^t  \A(R^{-1}\bx)  \overline{ R^{-1}\nabla \big(w_i(R^{-1}\bx)\big)}\, d\bx\\
= & \tau \int_E -\im \overline{w_i(\bx)} \bbeta_1^t  \A(\bx) \nabla \big(w_j(\bx)\big) + \im  w_j(\bx) \bbeta_1^t  \A(\bx)  \overline{\nabla \big(w_i(\bx)\big)}\, d\bx\\
= & \tau \langle \tilde w_1, \bbeta_1\cdot \nabla_{\bp}\cL (0,0,\bp) \tilde w_2\rangle_{\cC_z}|_{\bp=\Kone} 
\end{aligned}
\end{equation*}

When $\bp = \Ktwo$, using Assumption~\ref{ass:symBC} and following the same procedure above, it can be shown that the matrices in \eqref{eq:Tderiv2} share the same form as \eqref{eq:Tderiv1}. Furthermore, in light of \eqref{eq:onetwo}, 
we obtain 
\begin{align*}
\langle \tilde w_1',  \nabla_{\bp}\cL (0,0,\bp) \tilde w_2'\rangle_{\cC_z}|_{\bp=\Ktwo} 
& = \int_{\cC_z} -\im \overline{w_1'(\bx)} \A(\bx) \nabla w_2'(\bx) + \im  w_2'(\bx)\A(\bx)  \overline{ \nabla w_1'(\bx)}\, d\bx\\
& = \int_{\cC_z} -\im w_2(\bx) \A(\bx) \overline{\nabla w_1(\bx)} + \im  \overline{w_1(\bx)}\A(\bx) \nabla w_2(\bx)\, d\bx \\
& = - \langle \tilde w_1,  \nabla_{\bp}\cL (0,0,\bp) \tilde w_2\rangle_{\cC_z}|_{\bp=\Kone}. \\
\langle \tilde w_1',  \partial_{\delta}\cL (0,\delta,\Ktwo) \tilde w_1'\rangle_{\cC_z}|_{\delta=0} 
& = \int_{\cC_z} (\overline{\nabla w_1'(\bx)})^t \B(\bx) \nabla w_1'(\bx) \, d\bx 
= \int_{\cC_z} (\nabla w_2(\bx))^t \B(\bx) \overline{\nabla w_2(\bx)} \, d\bx \\
& = - \int_{\cC_z}  \overline{ ( \overline{\nabla w_2(\bx)})^t \B(\bx)\nabla w_2(\bx)} \, d\bx \,  = -  \overline{\langle \tilde w_2,  \partial_{\delta}\cL (0,\delta,\Kone) \tilde w_2\rangle_{\cC_z}|_{\delta=0}}. \\
\langle \tilde w_1',  \partial_{\eps}\cL (\eps,0,\Ktwo) \tilde w_1'\rangle_{\cC_z}|_{\eps=0} 
& = \overline{\langle \tilde w_2,  \partial_{\eps}\cL (\eps,0,\Kone) \tilde w_2\rangle_{\cC_z}|_{\eps=0}}.
\end{align*}

\end{proof}

\begin{assumption}
Here and henceforth, we assume $t_1\neq0$ and $t_2\neq0$, which ensures spectral band gap opening at the Dirac points when $\eps$ or $\delta$ is nonzero as discussed below.
\end{assumption}


We now investigate the spectral bands for the operator $\cL(\eps,\delta)$ when $\delta\neq0$ or $\eps\neq0$.
Parameterize the quasi-momentum near $\Kone$ and $\Ktwo$ by
\begin{equation*}
  \bp(\Kone; \ell,\mu):=\Kone + \ell \bbeta_1 + \mu\bbeta_2 \quad\mbox{and}\quad
  \bp(\Ktwo; \ell,\mu):=\Ktwo + \ell \bbeta_1 + \mu\bbeta_2.
\end{equation*}
Introduce two functions
\begin{equation}\label{eq:Lmu}
L_1(\eps,\ell,\mu):= \frac{\theta_*(\ell+\mu\overline{\tau})}{|\eps t_1|+ \sqrt{\eps^2 t_1^2 + |\theta_*|^2  | \ell+\mu\bar\tau |^2}}, 
\end{equation}
and
\begin{equation}\label{eq:L2mu}
L_2(\delta,\ell,\mu):= \frac{\theta_*(\ell+\mu\overline{\tau})}{|\delta t_2|+ \sqrt{\delta^2 t_2^2 + |\theta_*|^2  | \ell+\mu\bar\tau |^2}}. 
\end{equation}

The dispersion relations and the corresponding eigenfunctions close to $\Kone$ and $\Ktwo$ are described in the following propositions when perturbations are introduced. More specifically, Propositions~\ref{lem:Koneeps} and~\ref{lem:Ktwoeps} describe the perturbed spectral bands with respect to $\eps$ when $\delta=0$.  Propositions~\ref{lem:Konedelta} and~\ref{lem:Ktwodelta} describe the perturbation of the spectral bands with respect to $\delta$ when $\eps=0$. The proofs of Propositions~\ref{lem:Koneeps}-\ref{lem:Ktwodelta} are presented in Appendix~\ref{sec:pertDirac}.

Recall that that the two branches of dispersion relations perturbed from the Dirac points are denotd by $\lambda_{1,\eps,\delta}(\bp)$ and $ \lambda_{2,\eps,\delta}(\bp)$. The corresponding normalized eigenfunctions are denoted by $u_{n,\eps,\delta}(\bp)$ for $n=1,2$.
The big-$O$ notations in the propositions have the following meaning. The first $O(\eps,\ell,\mu)$ term in the eigenfunction expansion \eqref{eq:uepsp} and \eqref{eq:uepsm} (or the first $O(\delta,\ell,\mu)$ term in the eigenfunction expansion \eqref{eq:udeltap} and \eqref{eq:udeltam} respectively) attains an order of $\max(|\eps
|,|\ell|,|\mu|)$ (or $\max(|\delta
|,|\ell|,|\mu|)$ respectively) in $H^1(\cC_z)$ norms. The other $O(\eps,\ell,\mu)$ term (or the $O(\delta,\ell,\mu)$ term respectively) in the expansions are complex numbers with an order of $\max(|\eps
|,|\ell|,|\mu|)$ (or $\max(|\delta
|,|\ell|,|\mu|)$ respectively). 

\begin{prop}\label{lem:Koneeps} 
If $\delta=0$ and $|\eps|\ll 1$, the following holds for the spectral problem \eqref{eq:disp_zig} near $\Kone$:
\begin{itemize}
    \item [(i)] The dispersion relations attain the expansions
\begin{equation}\label{eq:lameps}
\begin{aligned}
\lambda_{1,\eps,0}(\bp(\Kone;\ell,\mu)) &= \lambda_*  - \sqrt{\eps^2 t_1^2 + |\theta_*|^2 | \ell+\mu\bar\tau |^2}(1+O(\eps,\ell,\mu)), \\
\lambda_{2,\eps,0}(\bp(\Kone;\ell,\mu)) &= \lambda_*  + \sqrt{\eps^2 t_1^2 + |\theta_*|^2 | \ell+\mu\bar\tau |^2}(1+O(\eps,\ell,\mu)). 
\end{aligned}
\end{equation}

\item[(ii)] 
When $t_1\eps>0$, the corresponding Bloch modes take the form
\begin{equation}\label{eq:uepsp}
\begin{aligned}
u_{1,\eps,0}(\bx;\bp(\Kone;\ell,\mu))&=\left( - \overline{L_1(\eps,\ell,\mu)}w_1 + w_2  +O(\eps,\ell,\mu) \right)\frac{1}{\sqrt{1+|L_1(\eps,\ell,\mu)|^2+O(\eps,\ell,\mu)}},\\
u_{2,\eps,0}(\bx;\bp(\Kone;\ell,\mu))&=\left(w_1 + L_1(\eps,\ell,\mu)w_2  +O(\eps,\ell,\mu) \right)\frac{1}{\sqrt{1+|L_1(\eps,\ell,\mu)|^2+O(\eps,\ell,\mu)}}.
\end{aligned}
\end{equation}
When $t_1\eps<0$, the corresponding Bloch modes take the form
\begin{equation}\label{eq:uepsm}
\begin{aligned}
u_{1,\eps,0}(\bx;\bp(\Kone;\ell,\mu))&=\left(w_1 - L_1(\eps,\ell,\mu)w_2  +O(\eps,\ell,\mu) \right)\frac{1}{\sqrt{1+|L_1(\eps,\ell,\mu)|^2+O(\eps,\ell,\mu)}},\\
u_{2,\eps,0}(\bx;\bp(\Kone;\ell,\mu))&=\left( \overline{L_1(\eps,\ell,\mu)}w_1 + w_2  +O(\eps,\ell,\mu)\right)\frac{1}{\sqrt{1+|L_1(\eps,\ell,\mu)|^2+O(\eps,\ell,\mu)}}.
\end{aligned}
\end{equation}

\end{itemize}
\end{prop}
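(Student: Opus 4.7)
The plan is to apply a Lyapunov--Schmidt reduction at the doubly-degenerate eigenvalue $\lambda_*$ and then diagonalize the resulting $2\times 2$ effective Hamiltonian whose matrix elements are already given by Proposition~\ref{lem:Tderiv}. First I would pass to the equivalent fixed-space problem \eqref{eq:disp_zig2}: by \eqref{eq:twoquasi}, the self-adjoint operator $\cL(\eps,0,\bp)$ depends real-analytically on $(\eps,\ell,\mu)$ with $\bp = K + \ell\bbeta_1 + \mu\bbeta_2$, and at $(\eps,\ell,\mu)=(0,0,0)$ Theorem~\ref{lem:Dirac} gives $\lambda_*$ as a doubly-degenerate eigenvalue with orthonormal eigenbasis $\tilde\bw = (\tilde w_1, \tilde w_2)$. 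Let $P_0$ be the $L^2(\cC_z)$-orthogonal projection onto $\mathrm{span}(\tilde w_1, \tilde w_2)$ and $Q_0 = I - P_0$. Decomposing $\tilde u = \alpha_1\tilde w_1 + \alpha_2\tilde w_2 + \tilde u^\perp$ with $\tilde u^\perp\in Q_0 H^1_{\mathbf 0}$, Assumption~\ref{ass:Dirac} ensures $Q_0(\cL(0,0,K)-\lambda_*)Q_0$ is boundedly invertible, and by a Neumann-series / implicit-function argument the same holds uniformly for small $(\eps,\ell,\mu,\lambda-\lambda_*)$. Eliminating $\tilde u^\perp$ in favor of $\alpha=(\alpha_1,\alpha_2)^T$ produces $\|\tilde u^\perp\|_{H^1(\cC_z)} = O(\max(|\eps|,|\ell|,|\mu|))|\alpha|$ and reduces \eqref{eq:disp_zig2} to a nonlinear $2\times 2$ eigenvalue problem $M(\eps,\ell,\mu;\lambda)\alpha = (\lambda-\lambda_*)\alpha$ with $M$ analytic in its arguments.

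Next I would expand $M(\eps,\ell,\mu;\lambda_*)$ to first order in the parameters. Combining \eqref{eq:w_b1_dpL_w_K}, \eqref{eq:w_b2_dpL_w_K} and \eqref{eq:w_depsL_w_K} from Proposition~\ref{lem:Tderiv}, the leading part is
\begin{equation*}
M_{\mathrm{lead}}(\eps,\ell,\mu) = \begin{pmatrix} t_1\eps & \overline{\theta_*}(\ell + \tau\mu) \\ \theta_*(\ell + \overline{\tau}\mu) & -t_1\eps \end{pmatrix},
\end{equation*}
with remainder of size $O((\max(|\eps|,|\ell|,|\mu|))^2)$. Since $\ell,\mu\in\mathbb{R}$ give $|\ell+\tau\mu|=|\ell+\overline{\tau}\mu|$, the eigenvalues of $M_{\mathrm{lead}}$ are $\pm\sqrt{t_1^2\eps^2 + |\theta_*|^2|\ell+\overline{\tau}\mu|^2}$. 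Feeding this back into the reduced scalar equation and solving for $\lambda$ by the implicit function theorem absorbs the quadratic corrections into a multiplicative factor $1 + O(\eps,\ell,\mu)$, yielding \eqref{eq:lameps}.

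For the eigenvectors, a direct diagonalization of $M_{\mathrm{lead}}$ shows that on the upper branch $\alpha_2/\alpha_1 = L_1(\eps,\ell,\mu)$ when $t_1\eps > 0$, whereas on the same branch $\alpha_1/\alpha_2 = \overline{L_1(\eps,\ell,\mu)}$ when $t_1\eps < 0$; the lower-branch vector is obtained by orthogonality. The split into two cases in (ii) corresponds exactly to the two coordinate charts on the eigenline that remain nondegenerate near the Dirac point: since $|L_1|\le 1$ by definition \eqref{eq:Lmu}, the vectors $(1,L_1)^T$ and $(-\overline{L_1},1)^T$ give a uniformly normalized basis when $t_1\eps>0$, while $(1,-L_1)^T$ and $(\overline{L_1},1)^T$ do so when $t_1\eps<0$. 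Reassembling $\tilde u = \alpha_1\tilde w_1 + \alpha_2\tilde w_2 + \tilde u^\perp$, undoing the gauge transform, and normalizing in $L^2(\cC_z)$ produces \eqref{eq:uepsp} and \eqref{eq:uepsm}: the vector-valued $O(\eps,\ell,\mu)$ remainder in $H^1(\cC_z)$ comes from $\tilde u^\perp$, while the scalar $O(\eps,\ell,\mu)$ inside the normalization factor $1/\sqrt{1+|L_1|^2+O(\eps,\ell,\mu)}$ arises from $\|\tilde u^\perp\|_{L^2}^2$ and the cross terms $\langle \tilde u^\perp,\tilde w_j\rangle$ that appear at second order.

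The main obstacle is keeping the error estimates uniform throughout the small-parameter regime where $t_1\eps$, $\ell$, and $\mu$ may be of comparable but arbitrary relative sizes, including regimes where the spectral gap $2\sqrt{t_1^2\eps^2+|\theta_*|^2|\ell+\overline{\tau}\mu|^2}$ of $M_{\mathrm{lead}}$ is very small. This is handled by observing that the denominator of $L_1$ in \eqref{eq:Lmu} is exactly half this gap, so that normalizing eigenvectors with the appropriate chart in each sign case keeps the perturbation expansion regular, uniformly in the direction of $(\eps,\ell,\mu)$ in parameter space.
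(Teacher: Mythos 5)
Your proposal is correct and follows essentially the same route as the paper: a Lyapunov--Schmidt reduction onto the two-dimensional eigenspace $\mathrm{span}(\tilde w_1,\tilde w_2)$, use of Proposition~\ref{lem:Tderiv} to identify the leading $2\times2$ matrix, explicit diagonalization giving the eigenvalues $\pm\sqrt{t_1^2\eps^2+|\theta_*|^2|\ell+\overline\tau\mu|^2}$, and the sign-dependent choice of eigenvector parametrization via $L_1$ (the paper carries out exactly this computation in Appendix~\ref{sec:pertDirac}, citing its earlier work for the reduction step you spell out with projections). Your additional remarks on uniformity of the remainders and on why the two charts $(1,L_1)^T$ versus $(\overline{L_1},1)^T$ are needed near the degeneracy are consistent with, and slightly more explicit than, the paper's argument.
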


\begin{prop}\label{lem:Ktwoeps}
If $\delta=0$ and $|\eps|\ll 1$, the following holds for the spectral problem \eqref{eq:disp_zig} near $\Ktwo$:

\begin{itemize}
    \item [(i)] The dispersion relations attain the  expansions
\begin{equation}\label{eq:lamepstwo}
\begin{aligned}
\lambda_{1,\eps,0}(\bp(\Ktwo;\ell,\mu)) &= \lambda_*  - \sqrt{\eps^2 t_1^2 + |\theta_*|^2 | \ell+\mu\bar\tau |^2}(1+O(\eps,\ell,\mu)), \\
\lambda_{2,\eps,0}(\bp(\Ktwo;\ell,\mu)) &= \lambda_*  +\sqrt{\eps^2 t_1^2 + |\theta_*|^2 | \ell+\mu\bar\tau |^2}(1+O(\eps,\ell,\mu)). 
\end{aligned}
\end{equation}

\item[(ii)] 
When $t_1\eps>0$, the corresponding Bloch modes take the form
\begin{equation}\label{eq:uepsptwo}
\begin{aligned}
u_{1,\eps,0}(\bx;\bp(\Ktwo;\ell,\mu))&=\left(w_1' + L_1(\eps,\ell,\mu)w_2'  +O(\eps,\ell,\mu) \right)\frac{1}{\sqrt{1+|L_1(\eps,\ell,\mu)|^2+O(\eps,\ell,\mu)}},\\
u_{2,\eps,0}(\bx;\bp(\Ktwo;\ell,\mu))&=\left( - \overline{L_1(\eps,\ell,\mu)}w_1' + w_2'  +O(\eps,\ell,\mu) \right)\frac{1}{\sqrt{1+|L_1(\eps,\ell,\mu)|^2+O(\eps,\ell,\mu)}}.
\end{aligned}
\end{equation}
When $t_1\eps<0$, the corresponding Bloch modes take the form
\begin{equation}\label{eq:uepsmtwo}
\begin{aligned}
u_{1,\eps,0}(\bx;\bp(\Ktwo;\ell,\mu))&=\left( \overline{L_1(\eps,\ell,\mu)}w_1' + w_2'  +O(\eps,\ell,\mu)\right)\frac{1}{\sqrt{1+|L_1(\eps,\ell,\mu)|^2+O(\eps,\ell,\mu)}},\\
u_{2,\eps,0}(\bx;\bp(\Ktwo;\ell,\mu))&=\left(w_1' - L_1(\eps,\ell,\mu)w_2'  +O(\eps,\ell,\mu) \right)\frac{1}{\sqrt{1+|L_1(\eps,\ell,\mu)|^2+O(\eps,\ell,\mu)}}.
\end{aligned}
\end{equation}

\end{itemize}
\end{prop}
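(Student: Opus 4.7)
The plan is to mirror the Lyapunov--Schmidt reduction used for Proposition~\ref{lem:Koneeps}, now carried out around the Dirac eigenvalue $\lambda_*$ at the second Dirac point $\bp=\Ktwo$. The essential inputs are already in place: the Dirac eigenspace is two-dimensional with basis $w_1', w_2'$ from Theorem~\ref{lem:Dirac}, and the leading-order matrices of $\partial_\eps\cL$ and $\nabla_\bp\cL$ restricted to this basis are given in \eqref{eq:Tderiv2}. Since $\A_0$ and $\C$ have real entries (Assumptions~\ref{ass:symA} and \ref{ass:symBC}), the operator $\cL(\eps,0)$ commutes with complex conjugation and $\Ktwo = -\Kone$, so an alternative shortcut would be to deduce Proposition~\ref{lem:Ktwoeps} from Proposition~\ref{lem:Koneeps} by conjugation, using the identifications $w_1'=\overline{w_2}$, $w_2'=\overline{w_1}$; I nonetheless outline the direct route below.

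First I would apply the substitution $\tilde u(\bx) = e^{-\im\bp\cdot\bx}u(\bx)$ to convert the quasi-periodic eigenvalue problem \eqref{eq:disp_zig} into the periodic problem \eqref{eq:disp_zig2} for the operator $\cL(\eps,0,\bp)$ acting on $H^1_{\mathbf 0}$. With $\bp=\Ktwo + \ell\bbeta_1 + \mu\bbeta_2$, a Taylor expansion gives
\[
\cL(\eps,0,\bp) = \cL(0,0,\Ktwo) + \eps\,\partial_\eps\cL(0,0,\Ktwo) + \ell\,\bbeta_1\cdot\nabla_\bp\cL(0,0,\Ktwo) + \mu\,\bbeta_2\cdot\nabla_\bp\cL(0,0,\Ktwo) + \cR,
\]
where the remainder $\cR$ is of order $(|\eps|+|\ell|+|\mu|)^2$ in the relevant operator norm.

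Next I would perform the Lyapunov--Schmidt decomposition $\tilde u = c_1\tilde w_1' + c_2\tilde w_2' + \tilde u^\perp$ with $\tilde w_j' := e^{-\im\Ktwo\cdot\bx}w_j'$ and $\tilde u^\perp$ in the $L^2(\cC_z)$-orthogonal complement of $\mathrm{span}\{\tilde w_1',\tilde w_2'\}$. Invertibility of $\cL(0,0,\Ktwo)-\lambda_*$ on this complement (from Assumption~\ref{ass:Dirac}) lets me solve for $\tilde u^\perp$ as an $O\big((|\eps|+|\ell|+|\mu|)(|c_1|+|c_2|)\big)$ correction, and projecting the eigenvalue equation onto $\mathrm{span}\{\tilde w_1',\tilde w_2'\}$ reduces the problem to a $2\times 2$ matrix eigenproblem $M(\eps,\ell,\mu)\mathbf c = (\lambda-\lambda_*)\mathbf c$. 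By \eqref{eq:Tderiv2} the leading-order matrix is
\[
M_0 = \begin{pmatrix} -t_1\eps & -\overline{\theta_*}(\ell+\mu\tau) \\ -\theta_*(\ell+\mu\bar\tau) & t_1\eps \end{pmatrix},
\]
whose eigenvalues are exactly $\pm\sqrt{t_1^2\eps^2 + |\theta_*|^2|\ell+\mu\bar\tau|^2}$, reproducing \eqref{eq:lamepstwo}. A direct diagonalization of $M_0$, branching on the sign of $t_1\eps$, produces coefficient pairs proportional to $(1,L_1)$ and $(-\overline{L_1},1)$, which upon $L^2(\cC_z)$-normalization yield the Bloch modes \eqref{eq:uepsptwo} and \eqref{eq:uepsmtwo} with the $\sqrt{1+|L_1|^2+O(\cdot)}$ denominators.

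The main obstacle is the uniform control of the $O(\eps,\ell,\mu)$ remainder terms, both in the reduced matrix $M$ and in the orthogonal correction $\tilde u^\perp$, as the three parameters go to zero simultaneously. This relies on a quantitative resolvent estimate for $\cL(0,0,\Ktwo)-\lambda_*$ on the orthogonal complement of the Dirac eigenspace, which is legitimate because $\lambda_*$ is an isolated double eigenvalue by Assumption~\ref{ass:Dirac}. The remaining work is careful bookkeeping of the scalar and $H^1(\cC_z)$-valued error terms through the reduction and the normalization, matching the form stated in Proposition~\ref{lem:Ktwoeps}.
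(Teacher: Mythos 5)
Your proposal follows essentially the same route as the paper's proof in Appendix B: a Lyapunov--Schmidt reduction at $\Ktwo$ to the $2\times2$ matrix built from the derivative matrices \eqref{eq:Tderiv2}, whose eigenvalues $\pm\sqrt{t_1^2\eps^2+|\theta_*|^2|\ell+\mu\bar\tau|^2}$ give \eqref{eq:lamepstwo} and whose eigenvectors, after branching on the sign of $t_1\eps$ and normalizing, give \eqref{eq:uepsptwo}--\eqref{eq:uepsmtwo}. The reduction and remainder control you flag are handled in the paper by citation to the analogous Proposition~4.3 of \cite{li2024}, and your side remark that the result could instead be transferred from $\Kone$ by complex conjugation (using $w_1'=\bar w_2$, $w_2'=\bar w_1$ and the sign flip $(\ell,\mu)\mapsto(-\ell,-\mu)$) is also consistent.
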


\begin{prop}\label{lem:Konedelta} 
If $\eps=0$ and $\delta\ll 1$, the following holds for the spectral problem \eqref{eq:disp_zig} near $\Kone$:

\begin{itemize}
    \item [(i)] The dispersion relations attain the  expansions
\begin{equation}\label{eq:lamdelta}
\begin{aligned}
\lambda_{1,0,\delta}(\bp(\Kone;\ell,\mu)) &= \lambda_*  - \sqrt{\delta^2 t_2^2 + |\theta_*|^2 | \ell+\mu\bar\tau |^2}(1+O(\delta,\ell,\mu)), \\
\lambda_{2,0,\delta}(\bp(\Kone;\ell,\mu)) &= \lambda_*  + \sqrt{\delta^2 t_2^2 + |\theta_*|^2 | \ell+\mu\bar\tau |^2}(1+O(\delta,\ell,\mu)). 
\end{aligned}
\end{equation}

\item[(ii)] 
When $t_2\delta>0$, the corresponding Bloch modes take the form
\begin{equation}\label{eq:udeltap}
\begin{aligned}
u_{1,0,\delta}(\bx;\bp(\Kone;\ell,\mu))&=\left( - \overline{L_2(\delta,\ell,\mu)}w_1 + w_2  +O(\delta,\ell,\mu) \right)\frac{1}{\sqrt{1+|L_2(\delta,\ell,\mu)|^2+O(\delta,\ell,\mu)}},\\
u_{2,0,\delta}(\bx;\bp(\Kone;\ell,\mu))&=\left(w_1 + L_2(\delta,\ell,\mu)w_2  +O(\delta,\ell,\mu) \right)\frac{1}{\sqrt{1+|L_2(\delta,\ell,\mu)|^2+O(\delta,\ell,\mu)}}.
\end{aligned}
\end{equation}
When $t_2\delta<0$, the corresponding Bloch modes take the form
\begin{equation}\label{eq:udeltam}
\begin{aligned}
u_{1,0,\delta}(\bx;\bp(\Kone;\ell,\mu))&=\left(w_1 - L_2(\delta,\ell,\mu)w_2  +O(\delta,\ell,\mu) \right)\frac{1}{\sqrt{1+|L_2(\delta,\ell,\mu)|^2+O(\delta,\ell,\mu)}},\\
u_{2,0,\delta}(\bx;\bp(\Kone;\ell,\mu))&=\left( \overline{L_2(\delta,\ell,\mu)}w_1 + w_2  +O(\delta,\ell,\mu)\right)\frac{1}{\sqrt{1+|L_2(\delta,\ell,\mu)|^2+O(\delta,\ell,\mu)}}.
\end{aligned}
\end{equation}

\end{itemize}
\end{prop}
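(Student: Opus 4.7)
The plan is to apply a Lyapunov--Schmidt reduction at the Dirac point $(\Kone,\lambda_*)$, paralleling the proof of Proposition~\ref{lem:Koneeps} in Appendix~\ref{sec:pertDirac}. The structural input is the observation that, by \eqref{eq:w_ddeltaL_w_K} of Proposition~\ref{lem:Tderiv}, the $\delta$-derivative of $\cL(0,\delta,\Kone)$ represents in the Dirac basis $\{\tilde w_1,\tilde w_2\}$ as $\mathrm{diag}(t_2,-t_2)$, which has exactly the same block form as the $\eps$-derivative matrix \eqref{eq:w_depsL_w_K}. Consequently the entire asymptotic expansion goes through with $t_1\eps$ replaced by $t_2\delta$.

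Setting $\bp = \Kone + \ell\bbeta_1 + \mu\bbeta_2$ and $\lambda = \lambda_* + \eta$, I decompose the unknown eigenfunction as $\tilde u = \alpha_1 \tilde w_1 + \alpha_2 \tilde w_2 + \tilde u^\perp$ with $\tilde u^\perp \perp \mathrm{span}\{\tilde w_1,\tilde w_2\}$ in $L^2(\cC_z)$. Projection onto the orthogonal complement yields an equation for $\tilde u^\perp$ in which the leading part $(\cL_0 - \lambda_*)$ restricted to $\mathrm{span}\{\tilde w_1,\tilde w_2\}^\perp$ is invertible by Assumption~\ref{ass:Dirac}; the implicit function theorem then expresses $\tilde u^\perp$ analytically in $(\alpha_1,\alpha_2,\delta,\ell,\mu,\eta)$ with norm $O(\max(|\delta|,|\ell|,|\mu|))\cdot|(\alpha_1,\alpha_2)|$. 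Substituting back into the projection onto $\mathrm{span}\{\tilde w_1,\tilde w_2\}$ produces an effective $2\times 2$ eigenvalue problem whose leading-order matrix, assembled from \eqref{eq:w_b1_dpL_w_K}, \eqref{eq:w_b2_dpL_w_K}, and \eqref{eq:w_ddeltaL_w_K}, is
\begin{equation*}
M(\delta,\ell,\mu) = \begin{pmatrix} t_2\delta & \overline{\theta_*}(\ell+\mu\tau) \\ \theta_*(\ell+\mu\overline{\tau}) & -t_2\delta \end{pmatrix} + O\bigl(\max(|\delta|,|\ell|,|\mu|)^2\bigr).
\end{equation*}

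Diagonalising $M$ gives the eigenvalues $\eta = \pm\sqrt{t_2^2\delta^2 + |\theta_*|^2|\ell+\mu\overline{\tau}|^2}(1+O(\delta,\ell,\mu))$, which is precisely \eqref{eq:lamdelta}, using that $(\ell+\mu\tau)(\ell+\mu\overline{\tau})=|\ell+\mu\overline{\tau}|^2$. Solving for the corresponding normalised eigenvectors in each of the cases $t_2\delta>0$ and $t_2\delta<0$, and invoking the definition of $L_2$ in \eqref{eq:L2mu}, I get the coefficient pairs $(\alpha_1,\alpha_2) = (1,L_2)/\sqrt{1+|L_2|^2}$ or $(-\overline{L_2},1)/\sqrt{1+|L_2|^2}$ up to the sign bookkeeping that selects which branch is which; reincorporating the correction $\tilde u^\perp$ and the factor $e^{\im\Kone\cdot\bx}$ produces \eqref{eq:udeltap} and \eqref{eq:udeltam}. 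Hermiticity of $M$, and hence reality of $\eta$, is preserved because the diagonal entries $\pm t_2\delta$ are real by the reflection argument used in the proof of Proposition~\ref{lem:Tderiv} (the anti-Hermiticity of $\B$ is compensated by the conjugation action of the reflection $F$ relating $w_1$ and $w_2$).

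The only non-routine point is verifying the remainder estimates uniformly in $(\delta,\ell,\mu)$ near $0$, particularly tracking the normalisation constants and the $O(\delta,\ell,\mu)$ terms inside the square roots. Because the structural form of $M$ is identical to the $\eps$-case, this follows from the same analytic perturbation argument as in Proposition~\ref{lem:Koneeps} with essentially no modification; I do not expect any new obstacle beyond the mechanical substitution $t_1\eps\mapsto t_2\delta$ and a careful check that the sign of $t_2\delta$ selects the correct eigenvector branch as in \eqref{eq:udeltap}--\eqref{eq:udeltam}.
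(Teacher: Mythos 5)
Your proposal is correct and follows essentially the same route as the paper: a Lyapunov--Schmidt reduction at $(\Kone,\lambda_*)$ producing the effective $2\times2$ matrix assembled from \eqref{eq:w_b1_dpL_w_K}, \eqref{eq:w_b2_dpL_w_K} and \eqref{eq:w_ddeltaL_w_K}, followed by diagonalization and the substitution $t_1\eps\mapsto t_2\delta$, exactly as in Appendix~\ref{sec:pertDirac} (which cites \cite{li2024} for the reduction details you spell out). The only cosmetic quibble is that the reality of the diagonal entries $\pm t_2\delta$ comes from the Hermiticity of $\B$ as in \eqref{eq:deltaherm} (time-reversal anti-invariance $\overline{\B}=-\B$ does not make $\B$ anti-Hermitian), but since $t_2\in\mathbb R$ is already part of Proposition~\ref{lem:Tderiv} this does not affect your argument.
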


\begin{prop}\label{lem:Ktwodelta} 
If $\eps=0$ and $\delta\ll 1$, the following holds for the spectral problem \eqref{eq:disp_zig} near $\Ktwo$:

\begin{itemize}
    \item [(i)] The dispersion relations for the spectral problem \eqref{eq:disp_zig} attain the following expansions:
\begin{equation}\label{eq:lamdeltatwo}
\begin{aligned}
\lambda_{1,0,\delta}(\bp(\Ktwo;\ell,\mu)) &= \lambda_*  - \sqrt{\delta^2 t_2^2 + |\theta_*|^2 | \ell+\mu\bar\tau |^2}(1+O(\delta,\ell,\mu)), \\
\lambda_{2,0,\delta}(\bp(\Ktwo;\ell,\mu)) &= \lambda_*  + \sqrt{\delta^2 t_2^2 + |\theta_*|^2 | \ell+\mu\bar\tau |^2}(1+O(\delta,\ell,\mu)). 
\end{aligned}
\end{equation}

\item[(ii)] 
When $t_2\delta>0$, the corresponding Bloch modes take the form
\begin{equation}\label{eq:udeltaptwo}
\begin{aligned}
u_{1,0,\delta}(\bx;\bp(\Ktwo;\ell,\mu))&=\left(  \overline{L_2(\delta,\ell,\mu)}w_1' + w_2'  +O(\delta,\ell,\mu) \right)\frac{1}{\sqrt{1+|L_2(\delta,\ell,\mu)|^2+O(\delta,\ell,\mu)}},\\
u_{2,0,\delta}(\bx;\bp(\Ktwo;\ell,\mu))&=\left(w_1' - L_2(\delta,\ell,\mu)w_2'  +O(\delta,\ell,\mu) \right)\frac{1}{\sqrt{1+|L_2(\delta,\ell,\mu)|^2+O(\delta,\ell,\mu)}}.
\end{aligned}
\end{equation}
When $t_2\delta<0$, the corresponding Bloch modes take the form
\begin{equation}\label{eq:udeltamtwo}
\begin{aligned}
u_{1,0,\delta}(\bx;\bp(\Ktwo;\ell,\mu))&=\left(w_1' + L_2(\delta,\ell,\mu)w_2'  +O(\delta,\ell,\mu) \right)\frac{1}{\sqrt{1+|L_2(\delta,\ell,\mu)|^2+O(\delta,\ell,\mu)}},\\
u_{1,0,\delta}(\bx;\bp(\Ktwo;\ell,\mu))&=\left( -\overline{L_2(\delta,\ell,\mu)}w_1' + w_2'  +O(\delta,\ell,\mu)\right)\frac{1}{\sqrt{1+|L_2(\delta,\ell,\mu)|^2+O(\delta,\ell,\mu)}}.
\end{aligned}
\end{equation}

\end{itemize}
\end{prop}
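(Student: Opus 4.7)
\textbf{Proof proposal for Proposition~\ref{lem:Ktwodelta}.}
The plan is to carry out a Lyapunov--Schmidt (Grushin) reduction on the two--dimensional degenerate eigenspace of $\cL_0$ at the Dirac point $(\Ktwo,\lambda_*)$, exactly in parallel with the argument for Propositions~\ref{lem:Koneeps}--\ref{lem:Konedelta}. Working with the rescaled operator $\cL(0,\delta,\bp)$ in $H^1_{\mathbf{0}}$ and writing $\bp=\bp(\Ktwo;\ell,\mu)=\Ktwo+\ell\bbeta_1+\mu\bbeta_2$, I would decompose $H^1_{\mathbf 0}=V_*\oplus V_*^{\perp}$, where $V_* = \mathrm{span}\{\tilde w_1',\tilde w_2'\}$ with $\tilde w_j':=e^{-\im\Ktwo\cdot\bx}w_j'$. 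Because $\lambda_*$ is isolated as an eigenvalue of $\cL_0$ restricted to $V_*^{\perp}$ at $\Ktwo$, the restriction of $(\cL_0-\lambda)$ to $V_*^{\perp}$ is boundedly invertible for $(\ell,\mu,\delta,\lambda-\lambda_*)$ in a small neighbourhood of the origin. Eliminating the $V_*^{\perp}$--component of the eigenfunction yields a $2\times 2$ matrix pencil $M(\ell,\mu,\delta,\lambda)$ on $V_*$ whose kernel detects the eigenvalues bifurcating from $\lambda_*$.

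The next step is to compute $M$ to leading order. The contributions from $(\ell\bbeta_1+\mu\bbeta_2)\cdot\nabla_{\bp}\cL(0,0,\bp)|_{\bp=\Ktwo}$ and from $\delta\,\partial_{\delta}\cL(0,\delta,\Ktwo)|_{\delta=0}$ are given directly by \eqref{eq:w_b1_dpL_w_K'}, \eqref{eq:w_b2_dpL_w_K'} and \eqref{eq:w_ddeltaL_w_K'} in Proposition~\ref{lem:Tderiv}. Using $1+\tau\overline{\tau}=1$ together with $\bar\tau\cdot\mu$--grouping, the leading--order matrix reads
\begin{equation*}
M_0(\ell,\mu,\delta,\lambda)=\begin{pmatrix}
t_2\delta-(\lambda-\lambda_*) & -\theta_*(\ell+\mu\bar\tau) \\
-\overline{\theta_*}(\ell+\mu\tau) & -t_2\delta-(\lambda-\lambda_*)
\end{pmatrix},
\end{equation*}
and the higher--order $(\ell,\mu,\delta)$ corrections from the Schur complement of $V_*^{\perp}$ are smooth and of relative size $O(\max(|\ell|,|\mu|,|\delta|))$, by the same Fredholm/implicit function argument used in the earlier propositions. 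Setting $\det M=0$ and applying the implicit function theorem to perturb the leading--order roots yields the dispersion expansion \eqref{eq:lamdeltatwo}.

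For part (ii), the null vector of $M_0$ at $\lambda=\lambda_*\pm\sqrt{(t_2\delta)^2+|\theta_*|^2|\ell+\mu\bar\tau|^2}$ is computed directly from the characteristic $2\times 2$ linear algebra. The quantity $L_2(\delta,\ell,\mu)$ defined in \eqref{eq:L2mu} arises naturally as the off-diagonal / diagonal ratio after rationalisation, exactly as in the $K$--case (Proposition~\ref{lem:Konedelta}). The only change from that case comes from the opposite sign of the off--diagonal block in \eqref{eq:w_b1_dpL_w_K'} relative to \eqref{eq:w_b1_dpL_w_K}; tracking this sign through the two sign-choices $t_2\delta>0$ and $t_2\delta<0$ produces the formulas \eqref{eq:udeltaptwo} and \eqref{eq:udeltamtwo}, where the coefficients of $w_1'$ and $w_2'$ differ from their $K$--analogues by the corresponding sign flip in $L_2$. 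Normalising in $L^2(\cC_z)$ by the factor $1/\sqrt{1+|L_2|^2+O(\delta,\ell,\mu)}$ completes the construction.

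The main technical point, as in the three preceding propositions, is to verify that the $O(\delta,\ell,\mu)$ remainders in the Schur complement are holomorphic/smooth in the parameters and preserve the two leading branches as analytic continuations of the $\lambda_*$--eigenvalue; this is already achieved in Appendix~\ref{sec:pertDirac} for the companion cases and transfers verbatim here since the eigenspace at $(\Ktwo,\lambda_*)$ is related to the one at $(\Kone,\lambda_*)$ by \eqref{eq:onetwo} and time--reversal, and the projector onto $V_*$ depends smoothly on $\bp$. The sign bookkeeping between the $K$ and $K'$ valleys is the only delicate item, and it is handled by reading off Proposition~\ref{lem:Tderiv}. No genuinely new estimate is required beyond those already developed for Propositions~\ref{lem:Koneeps}--\ref{lem:Konedelta}.
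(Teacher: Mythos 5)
Your strategy is the same as the paper's (Appendix~\ref{sec:pertDirac}): a Lyapunov--Schmidt reduction onto $\mathrm{span}\{\tilde w_1',\tilde w_2'\}$, assembly of the reduced $2\times2$ pencil from Proposition~\ref{lem:Tderiv}, explicit diagonalization, with the remainder control carried over from the $K$-point cases. Your closing observation is also the right mechanism: between $\Kone$ and $\Ktwo$ only the off-diagonal $\nabla_{\bp}$ block changes sign while the $\delta$-block $\mathrm{diag}(t_2,-t_2)$ is unchanged, so $L_2\mapsto -L_2$ relative to Proposition~\ref{lem:Konedelta}, which indeed reproduces \eqref{eq:udeltaptwo}--\eqref{eq:udeltamtwo}.

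There is, however, a concrete bookkeeping slip in the one formula you display. With the natural convention that $M_{ij}=\langle\tilde w_i',(\cdot)\,\tilde w_j'\rangle$ acts on the coefficient vector of $\tilde u=a\tilde w_1'+b\tilde w_2'$, equations \eqref{eq:w_b1_dpL_w_K'}--\eqref{eq:w_b2_dpL_w_K'} give the $(1,2)$ entry $-\overline{\theta_*}(\ell+\mu\tau)$ and the $(2,1)$ entry $-\theta_*(\ell+\mu\bar\tau)$; your $M_0$ is the transpose (equivalently, the complex conjugate) of this Hermitian matrix, so it is not what Proposition~\ref{lem:Tderiv} ``directly'' yields. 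The determinant is unchanged, hence part (i) is unaffected, but the right null vectors of your $M_0$ are the conjugates of the correct coefficient vectors, and carrying them through literally would interchange $L_2$ and $\overline{L_2}$ in \eqref{eq:udeltaptwo}--\eqref{eq:udeltamtwo} --- precisely the conjugation bookkeeping this proposition is about. (Also, ``$1+\tau\overline{\tau}=1$'' is false, since $\tau\overline{\tau}=1$; delete it.) With the off-diagonal entries corrected, the computation lands on the stated formulas: e.g.\ for $t_2\delta>0$ the branch $\lambda^{(1)}=+\sqrt{t_2^2\delta^2+|\theta_*|^2|\ell+\mu\bar\tau|^2}$ has kernel spanned by $\bigl(t_2\delta+\sqrt{t_2^2\delta^2+|\theta_*|^2|\ell+\mu\bar\tau|^2},\,-\theta_*(\ell+\mu\bar\tau)\bigr)$, i.e.\ $w_1'-L_2w_2'$ after normalization, matching the $u_{2,0,\delta}$ line of \eqref{eq:udeltaptwo}. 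So the slip is repairable, but as written your displayed reduction matrix and your asserted conclusions are mutually inconsistent, and the proof of part (ii) needs this fixed (or the transposed convention stated and left null vectors used).
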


\begin{remark}\label{lem:gap}
In view of Propositions~\ref{lem:Koneeps}-\ref{lem:Ktwodelta} and Assumption~\ref{lem:assNoFold}, we observe that for an arbitrary $\mathfrak d\in(0,1)$, along the $\bbeta_1$ direction in Brillouin zone with $\bp(\ell) = \Kone + \ell \bbeta_1$, a band gap of $(\lambda_* - \mathfrak d|t_1\eps|, \lambda_* + \mathfrak d|t_1\eps|)$
is opened for the spectrum of $\cL(\eps,\delta)$
when $\eps\neq0$ and $\delta=0$, and a band gap of $ (\lambda_* - \mathfrak d|t_2\delta|, \lambda_* + \mathfrak d|t_2\delta|) $
is opened for the spectrum of $\cL(\eps,\delta)$ when $\eps=0$ and $\delta\neq0$.
The same holds along $\bp(\ell) = \Ktwo + \ell \bbeta_1$.
\end{remark}

\begin{remark}\label{lem:periodic_Bloch}
We denote the periodic part the Bloch modes by $\tilde u_{n,\eps,\delta}:=e^{-\im\bp\cdot\bx}u_{n,\eps,\delta}\in H^1_{\mathbf 0}$. As shown in Appendix~\ref{sec:pertDirac}, their asymptotic expansions take the same from as the expansions of $u_{n,\eps,\delta}$ in the above propositions, with $w_1$ and $w_2$ replaced by their periodic parts $\tilde w_1$ and $\tilde w_2$, respectively.
For instance, when $t_1\eps>0$ and $\delta=0$,
\begin{equation}\label{eq:uepsptilde}
\tilde u_{1,\eps,0}(\bx;\bp(\Kone;\ell,\mu))=\left( - \overline{L_1(\eps,\ell,\mu)}\tilde w_1 + \tilde w_2  +O(\eps,\ell,\mu) \right)\frac{1}{\sqrt{1+|L_1(\eps,\ell,\mu)|^2+O(\eps,\ell,\mu)}}.
\end{equation}

\end{remark}

\subsubsection{Berry curvature at $\Kone$ and $\Ktwo$}\label{sec:Berry}
Using the asymptotic expansion of Bloch modes in Propositions ~\ref{lem:Koneeps} - \ref{lem:Ktwodelta}, we can compute
the Berry curvature at the high symmetry points $\Kone$ and $\Ktwo$. In particular, the sign of the Berry curvature can be determined from the perturbation of the medium as described in what follows.

Recall that the periodic part of the eigenfunctions $\tilde u(\cdot; \bp)\in H^1_{\mathbf 0}$ is parameterized by the quasi-momentum $\bp=(p_1, p_2)$. The Berry connection
is defined by $\bA(\bp) = (A_1(\bp), A_2(\bp))$, wherein $A_j(\bp) = \im \langle \tilde u(\bp), \partial_{p_j}\tilde u(\bp) \rangle_{\cC_z}$ for $j=1, 2$ \cite{Vanderbilt}. The Berry phase along a closed loop $\ell$ in the momentum space is  the line integral 
\begin{equation*}
    \phi :=  \oint_\ell \bA \cdot d\bp.
\end{equation*}
The Berry curvature, which is the Berry phase per unit area, is given by (cf. \cite{Vanderbilt})
\begin{equation}\label{eq:Berry_curvature}
    \Theta(\bp) = \partial_{p_1} A_2(\bp) - \partial_{p_2} A_1(\bp) = - 2 \; \text{Im}\langle \partial_{p_1} \tilde u(\bp), \partial_{p_2} \tilde u(\bp) \rangle_{\cC_z}.
\end{equation}


\begin{prop}\label{lem:Berry}
Consider the periodic operators $\cL(\eps,0)$ and $\cL(0,\delta)$, which attain spectral band gap at $\lambda_*$ when $\eps$ and $\delta$ is nonzero. The signs of the Berry curvatures $\Theta(\Kone)$ and $\Theta(\Ktwo)$ associated with the eigenfunctions for the spectral band immediately below $\lambda_*$ is summarized in the following table:
\begin{center}
    \begin{tabular}{c|c|c|c}
Operator & &$\Theta(\Kone)$ &$\Theta(\Ktwo)$ \\
\hline
$\cL(\eps,0)$&$t_1\eps>0$ & $+$ & $-$\\ 
\cline{2-4}
&$t_1\eps<0$ & $-$ & $+$\\ 
\hline
$\cL(0,\delta)$&$t_2\delta>0$ & $+$ & $+$\\ 
\cline{2-4}
&$t_2\delta<0$ & $-$ & $-$
\end{tabular}
\end{center}

\end{prop}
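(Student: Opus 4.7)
My plan is to compute the Berry curvature at $\Kone$ and $\Ktwo$ directly from the explicit asymptotic expansions of the periodic Bloch modes given in Propositions~\ref{lem:Koneeps}--\ref{lem:Ktwodelta} and Remark~\ref{lem:periodic_Bloch}. In each case, the periodic part $\tilde u_{1,\eps,\delta}(\bx;\bp)$ of the lower-band Bloch mode is presented as an explicit function of the shifted momentum $(\ell,\mu)$ via $\bp = \bp_* + \ell\bbeta_1 + \mu\bbeta_2$, with $\bp_*\in\{\Kone,\Ktwo\}$. I would first compute $\partial_\ell\tilde u_{1,\eps,\delta}$ and $\partial_\mu\tilde u_{1,\eps,\delta}$ at $(\ell,\mu)=(0,0)$, then evaluate $\mathrm{Im}\langle\partial_\ell\tilde u_{1,\eps,\delta},\partial_\mu\tilde u_{1,\eps,\delta}\rangle_{\cC_z}$, and finally convert to $(p_1,p_2)$-coordinates via the linear change of variables whose Jacobian $[\bbeta_1\,|\,\bbeta_2]$ has determinant $2/\sqrt{3}>0$. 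Because the Jacobian is sign-preserving and $\Theta=-2\,\mathrm{Im}\langle\partial_{p_1}\tilde u,\partial_{p_2}\tilde u\rangle_{\cC_z}$ by \eqref{eq:Berry_curvature}, the sign of $\Theta(\bp_*)$ is determined by that of $\mathrm{Im}\langle\partial_\ell\tilde u_{1,\eps,\delta},\partial_\mu\tilde u_{1,\eps,\delta}\rangle_{\cC_z}$ with opposite orientation.

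For the representative case $\cL(\eps,0)$ at $\Kone$ with $t_1\eps>0$, the expansion \eqref{eq:uepsptilde} gives $\tilde u_{1,\eps,0}|_{\ell=\mu=0}=\tilde w_2$. From \eqref{eq:Lmu}, a direct differentiation yields $\partial_\ell L_1|_0=\theta_*/(2|t_1\eps|)$ and $\partial_\mu L_1|_0=\theta_*\bar\tau/(2|t_1\eps|)$. Since $|L_1|^2$ vanishes to second order at the origin, the normalization factor contributes only at subleading order in $(\ell,\mu)$, hence
\[
\partial_\ell\tilde u_{1,\eps,0}\big|_{0}=-\frac{\overline{\theta_*}}{2|t_1\eps|}\,\tilde w_1+O(1),\qquad \partial_\mu\tilde u_{1,\eps,0}\big|_{0}=-\frac{\overline{\theta_*}\tau}{2|t_1\eps|}\,\tilde w_1+O(1),
\]
as $\eps\to 0$. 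Using $\|\tilde w_1\|_{L^2(\cC_z)}=1$, the leading contribution to the inner product equals $|\theta_*|^2\tau/(4(t_1\eps)^2)$, whose imaginary part carries a definite sign determined by $\mathrm{Im}(\tau)=\sqrt{3}/2$. This determines the sign of $\Theta(\Kone)$ in this case.

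The remaining entries of the table are handled by the same template, with two systematic adjustments. First, when $t_1\eps<0$ (or $t_2\delta<0$), the expansions in Propositions~\ref{lem:Koneeps} and \ref{lem:Konedelta} interchange the roles of $\tilde w_1$ and $\tilde w_2$ and insert a compensating sign, which flips the sign of the leading imaginary part. Second, at $\Ktwo$ one uses $(\tilde w_1',\tilde w_2')$ instead of $(\tilde w_1,\tilde w_2)$; more importantly, Proposition~\ref{lem:Tderiv} shows that the diagonal block of $\partial_\eps\cL$ at $\Ktwo$ in \eqref{eq:w_depsL_w_K'} is the negative of that at $\Kone$, whereas the diagonal block of $\partial_\delta\cL$ at $\Ktwo$ in \eqref{eq:w_ddeltaL_w_K'} coincides with that at $\Kone$. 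This algebraic difference is exactly what produces opposite signs of $\Theta(\Ktwo)$ between the two rows of the table, reflecting the physical distinction between the inversion-breaking $\eps$-perturbation (valley Hall) and the time-reversal-breaking $\delta$-perturbation (quantum Hall).

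The main obstacle is a careful sign audit across all eight table entries. The key checks are that the leading $O(1/|t_1\eps|)$ or $O(1/|t_2\delta|)$ term in each partial derivative strictly dominates the $O(1)$ subleading corrections absorbed in the $O(\eps,\ell,\mu)$ remainder in Propositions~\ref{lem:Koneeps}--\ref{lem:Ktwodelta}; that $\tau$ versus $\bar\tau$ is correctly tracked through $\partial_\mu L_j$ and its conjugate; and that the overall convention is consistent with \eqref{eq:Berry_curvature}. Given the explicit expansions already in hand, no further analytic input is required and the remaining work is purely algebraic case-by-case computation.
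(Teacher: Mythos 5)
Your strategy is the same as the paper's: use the expansions of the periodic Bloch modes from Propositions~\ref{lem:Koneeps}--\ref{lem:Ktwodelta} and Remark~\ref{lem:periodic_Bloch}, compute $\mathrm{Im}\langle\partial_\ell\tilde u,\partial_\mu\tilde u\rangle_{\cC_z}$, and convert to $\Theta$ via \eqref{eq:Berry_curvature} and the Jacobian $\det[\bbeta_1,\bbeta_2]=2/\sqrt3>0$. The paper organizes the eight entries slightly more economically (it first observes that switching the sign of $t_1\eps$, or passing from $\Kone$ to $\Ktwo$ in the $\eps$ case, replaces the decisive cross term by its complex conjugate, and then evaluates the sign explicitly only once), whereas you propose a direct case-by-case audit; both routes reproduce the relative sign pattern of the table (opposite at $\Kone$ and $\Ktwo$ for the $\eps$-perturbation, equal for the $\delta$-perturbation, and global flip when the perturbation sign flips), which is what the main theorem actually uses.

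Two points need attention, and the first is substantive because the proposition is literally a table of signs. You stop at ``this determines the sign of $\Theta(\Kone)$'' without stating it, but as you have set things up it comes out opposite to the table: your leading term is $\langle\partial_\ell\tilde u,\partial_\mu\tilde u\rangle\approx |\theta_*|^2\tau/(4t_1^2\eps^2)$ with $\mathrm{Im}\,\tau=\sqrt3/2>0$, and since $\Theta=-\tfrac{2}{\det[\bbeta_1,\bbeta_2]}\mathrm{Im}\langle\partial_\ell\tilde u,\partial_\mu\tilde u\rangle$ with positive determinant, this would give $\Theta(\Kone)<0$ for $t_1\eps>0$, contradicting the ``$+$'' entry. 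Your evaluation of $\partial_\ell L_1|_0$, $\partial_\mu\overline{L_1}|_0$ agrees with the paper's intermediate expression \eqref{eq:BcepsKonep} (the key term there is $\tfrac{1}{(1+|L_1|^2)^2}\partial_\mu\overline{L_1}\,\partial_\ell L_1$, which at $\ell=\mu=0$ indeed carries the factor $\tau$), whereas the paper's last displayed line writes the coefficient as $\overline\tau$ and concludes $\Theta(\Kone)>0$. So either a conjugation/orientation convention is being applied differently than you (and I) read it, or there is a slip in that final display of the paper; in either case you cannot leave the absolute sign implicit --- you must fix the inner-product and orientation conventions once, carry them through, and state the resulting signs, reconciling them with the table. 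Second, a smaller rigor point: Propositions~\ref{lem:Koneeps}--\ref{lem:Ktwodelta} control the \emph{size} of the $O(\eps,\ell,\mu)$ remainders, not their $(\ell,\mu)$-derivatives, so your claim that the $O(1/|t_1\eps|)$ terms dominate needs a derivative bound on the remainder (e.g.\ that its $\ell,\mu$-derivatives are $o(1/|\eps|)$); the paper is equally informal here, but if you intend a self-contained proof you should either prove such a bound from the Lyapunov--Schmidt construction in Appendix~\ref{sec:pertDirac} or compute at fixed small $(\ell,\mu)$ with $\eps$-uniform control as the paper implicitly does.
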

\begin{proof}
Let us denote the periodic part of the Bloch mode associated with the spectral band immediately below $(\Kone,\lambda_*)$ by
$\tilde u(\bx;\bp(\Kone, \ell,\mu))$, where $ \bp(\Kone; \ell,\mu):=\Kone + \ell \bbeta_1 + \mu\bbeta_2$.
From \eqref{eq:Berry_curvature} and a change of variable,
the Berry curvature reads
\begin{equation}
\Theta(\bp)=\frac{-2}{\det[\bbeta_1, \bbeta_2]} \; \text{Im}\left\langle \partial_{\ell} \tilde u(\cdot,\bp), \partial_{\mu} \tilde u(\cdot, \bp) \right\rangle_{\cC_z} 
\end{equation}
when $\ell, \mu \ll 1$.

We consider the operator $\cL(\eps,0)$ and the proof for the operator $\cL(0,\delta)$ is similar.
In view of Proposition~\ref{lem:Koneeps} and Remark~\ref{lem:periodic_Bloch}, when $t_1 \eps>0$, the Bloch modes attains the expansion
\begin{equation*}
\tilde u(\cdot, \bp(\Kone, \ell,\mu) ) =\left( - \overline{L_1(\eps,\ell,\mu)} \tilde w_1 + \tilde w_2  +O(\eps,\ell,\mu) \right)\frac{1}{\sqrt{1+|L_1(\eps,\ell,\mu)|^2 +O(\eps,\ell,\mu)}}. 
\end{equation*}
Elementary calculation yields 
\begin{equation}\label{eq:BcepsKonep}
\begin{aligned}
\langle \partial_{\ell} \tilde u,&\partial_{\mu}\tilde u\rangle_{\cC_z} = (\partial_{\mu}\frac{1}{\sqrt{1+|L_1|^2}})(\partial_{\ell}\frac{1}{\sqrt{1+|L_1|^2}})\\
&+\left(\overline{L_1}\partial_{\mu}\frac{1}{\sqrt{1+|L_1|^2}} + \frac{1}{\sqrt{1+|L_1|^2}}\partial_{\mu}\overline{L_1}\right)
\left(L_1\partial_{\ell}\frac{1}{\sqrt{1+|L_1|^2}} + \frac{1}{\sqrt{1+|L_1|^2}}\partial_{\ell}L_1\right)  +O(\eps,\ell,\mu). 
\end{aligned}
\end{equation}
In contrast, when $t_1\eps<0$, the Bloch modes attains the expansion
\begin{equation*}
\tilde u=\left(\tilde w_1 - L_1(\eps,\ell,\mu) \tilde w_2  +O(\eps,\ell,\mu) \right)\frac{1}{\sqrt{1+|L_1(\eps,\ell,\mu)|^2 +O(\eps,\ell,\mu)}}. 
\end{equation*}
This leads to 
\begin{equation}\label{eq:BcepsKonem}
\begin{aligned}
\langle\partial_{\ell} \tilde u,& \partial_{\mu}\tilde u\rangle_{\cC_z} = (\partial_{\mu}\frac{1}{\sqrt{1+|L_1|^2}})(\partial_{\ell}\frac{1}{\sqrt{1+|L_1|^2}})\\
&+\left(L_1\partial_{\mu}\frac{1}{\sqrt{1+|L_1|^2}} + \frac{1}{\sqrt{1+|L_1|^2}}\partial_{\mu}L_1\right)
\left(\overline{L_1}\partial_{\ell}\frac{1}{\sqrt{1+|L_1|^2}} + \frac{1}{\sqrt{1+|L_1|^2}}\partial_{\ell}\overline{L_1}\right) +O(\eps,\ell,\mu). 
\end{aligned}
\end{equation}
Since the first terms in \eqref{eq:BcepsKonep} and \eqref{eq:BcepsKonem} are real, and the second terms are complex conjugates, the Berry curvatures $\Theta(K)$ take opposite signs when $t_1\eps>0$ and $t_1\eps<0$.  A parallel calculation show that the signs of Berry curvatures $\Theta(K')$ attain the opposite sign when $t_1\eps>0$ and $t_1\eps<0$.

We now compute the sign of $\Theta(\Kone)$. In fact, a straightforward calculation for the second term in \eqref{eq:BcepsKonep} shows that
\begin{equation*}
\begin{aligned}
&|L_1|^2\partial_{\mu}\frac{1}{\sqrt{1+|L_1|^2}}\partial_{\ell}\frac{1}{\sqrt{1+|L_1|^2}}
+\frac{1}{1+|L_1|^2}\partial_{\mu}\overline{L_1}\partial_{\ell}L_1\\
&+\frac{\overline{L_1}}{\sqrt{1+|L_1|^2}}\partial_{\mu}\frac{1}{\sqrt{1+|L_1|^2}}\partial_{\ell}L_1
+ \frac{L_1}{\sqrt{1+|L_1|^2}}\partial_{\ell}\frac{1}{\sqrt{1+|L_1|^2}}\partial_{\mu}L_1\\
=&\frac{1}{(1+|L_1|^2)^2}\partial_{\mu}\overline{L_1}\partial_{\ell}L_1 + \text{real terms}\\
=&\frac{\overline{\tau}|\theta_*|^2|\eps t_1|}{\sqrt{\eps^2 t_1^2 + |\theta_*|^2  | \ell+\mu\bar\tau |^2}}\left(|\eps t_1|+\sqrt{\eps^2 t_1^2 + |\theta_*|^2  | \ell+\mu\bar\tau |^2}\right)^2+ \text{real terms}.
\end{aligned}
\end{equation*}
Hence $\Theta(\Kone)>0$ when $t_1\eps>0$ and $\delta=0$.
\end{proof}

\subsubsection{Smooth parametrization of the dispersion relations}\label{sec:smooth}
For the quasi-momentum $\kp^*=\Kone\cdot\be_2$ along the zigzag interface, we parameterize 
the set of Bloch wave vectors $\bp\in\mathbb R^2$ satisfying $\bp\cdot\be_2=\kp^*$ by 
$\bp(\ell):=\Kone + \ell \bbeta_1$. 
Recall that the spectral bands of the operator $\cL(\eps,\delta)$ along the direction $\bp(\ell)$  are denoted by
\begin{equation}\label{eq:pwanalyticpair}
\lambda_{n,\eps,\delta}(\bp(\ell)),u_{n,\eps,\delta}(\bx;\bp(\ell)),
\end{equation}
where $\lambda_{n,\eps,\delta}(\bp(\ell))$ are piecewise analytic for $\ell\in \mathbb R$  and $u_{n,\eps,\delta}(\bp(\ell))$ are piecewise analytic in $H^1(\cC_z)$. Next we introduce another parameterization of dispersion surfaces,
\begin{equation}\label{eq:analyticpair}
\mu_{n,\eps,\delta}(\bp(\ell)),v_{n,\eps,\delta}(\bx;\bp(\ell)),
\end{equation}
such that $\mu_{n,\eps,\delta}(\bp(\ell))$ are analytic and $v_{n,\eps,\delta}(\bp)$ are analytic in $H^1(\cC_z)$ for $\ell\in(-\pi,\pi)$.

When $\eps\neq0$ or $\delta\neq0$, since $\mu_{1,\eps,\delta}(\bp(\ell)) < \mu_{2,\eps,\delta}(\bp(\ell))$ for $\ell\in(-\pi,\pi)$,
the two labelings coincide: $\mu_{n,\eps,\delta}(\bp(\ell))=\lambda_{n,\eps,\delta}(\bp(\ell))$, $v_{n,\eps,\delta}(\bx;\bp(\ell))=u_{n,\eps,\delta}(\bx;\bp(\ell))$ for $n=1,2$. 

 When $\eps=\delta=0$, 
$\mu_{1,\eps,\delta}(\bp(\ell))$ and $\mu_{2,\eps,\delta}(\bp(\ell))$ intersect at the Dirac point $\bp(0)=\Kone$, thus $\lambda_{1,0,0}(\bp(\ell,0))$ and $\lambda_{2,0,0}(\bp(\ell,0))$ are not analytic. In the following lemma, we display the spectral bands $\mu_{1,0,0}(\bp(\ell,0))$ and $\mu_{2,0,0}(\bp(\ell,0))$ and the corresponding Bloch modes  $v_{n,\eps,\delta}(\bx;\bp(\ell))$ such that they are analytic with respect to $\ell$. For simplicity of notation, we drop the subscripts $\eps,\delta$ when they are both zero. 

\begin{lemma}\label{lem:0strip}
Define $\bp(\ell):=\Kone + \ell \bbeta_1 $. For sufficiently small $\ell\in\mathbb R$,
\begin{equation}\label{eq:mu0}
\begin{aligned}
\mu_1(\bp(\ell)) &= \lambda_*  + |\theta_*|\ell(1+O(\ell)) \quad\text{(increasing in $\ell$)}, \\
\mu_2(\bp(\ell)) &= \lambda_*  - | \theta_*|\ell(1+O(\ell)) \quad\text{(decreasing in $\ell$)}. 
\end{aligned}
\end{equation}
The corresponding Bloch modes are chosen as
\begin{equation}
\begin{aligned}
v_1(\bx;\bp(\ell))&=\left(  \frac{\overline{\theta_*}}{|\theta_*|}w_1 +  w_2  +O(\ell) \right)\frac{1}{\sqrt{2+O(\ell)}},\\
v_2(\bx;\bp(\ell))&=\left( \frac{\overline{\theta_*}}{|\theta_*|}w_1 - w_2  +O(\ell) \right)\frac{1}{\sqrt{2+O(\ell)}}.
\end{aligned}
\end{equation}
\end{lemma}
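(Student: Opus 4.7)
The plan is to apply analytic perturbation theory to the self-adjoint family $\cL(0,0,\bp(\ell))$ acting on $H^1_{\mathbf 0}$. Since $\bp(\ell) = \Kone + \ell\bbeta_1$ depends linearly on $\ell$ and the coefficients of $\cL(0,0,\bp)$ are polynomial in $\bp$, this is an analytic family of type (A) in the sense of Kato. At $\ell = 0$, Theorem~\ref{lem:Dirac} tells us that $\lambda_*$ is a doubly degenerate eigenvalue with two-dimensional eigenspace $V := \mathrm{span}\{\tilde w_1, \tilde w_2\}\subset H^1_{\mathbf 0}$, while by Assumption~\ref{lem:assNoFold} no other band of $\cL_0$ meets $\lambda_*$ along the slice $\bp(\ell)$ for $\ell$ near $0$.

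I would apply a Lyapunov--Schmidt reduction with respect to the orthogonal splitting $L^2(\cC_z) = V\oplus V^\perp$, reducing the spectral problem $\cL(0,0,\bp(\ell))\tilde u = \mu\tilde u$ to a $2\times 2$ Hermitian matrix eigenvalue problem $M(\ell)\bc = \mu\bc$ with $M(\ell)$ analytic in $\ell$ and $M(0) = \lambda_* I_2$. Using the formula \eqref{eq:w_b1_dpL_w_K} from Proposition~\ref{lem:Tderiv}, the derivative is
\begin{equation*}
M'(0) \;=\; \langle\tilde\bw,\bbeta_1\cdot\nabla_{\bp}\cL(0,0,\bp)\tilde\bw\rangle_{\cC_z}\big|_{\bp=\Kone} \;=\; \begin{pmatrix} 0 & \overline{\theta_*} \\ \theta_* & 0\end{pmatrix}.
\end{equation*}
Since $\theta_*\neq 0$ (by Proposition~\ref{lem:Tderiv} under the standing assumption), this matrix has the two \emph{simple} eigenvalues $\pm|\theta_*|$ with normalized eigenvectors $\frac{1}{\sqrt 2}\bigl(\overline{\theta_*}/|\theta_*|,\pm 1\bigr)^T$.

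Next, I would write $M(\ell) - \lambda_* I_2 = \ell N(\ell)$ where $N(\ell)$ is analytic with $N(0) = M'(0)$. Because $N(0)$ has two simple eigenvalues, elementary analytic perturbation theory for finite-dimensional Hermitian matrices furnishes two analytic eigenvalue branches $\nu_\pm(\ell) = \pm|\theta_*| + O(\ell)$ together with analytic eigenvectors $\bc_\pm(\ell) = \frac{1}{\sqrt 2}\bigl(\overline{\theta_*}/|\theta_*|,\pm 1\bigr)^T + O(\ell)$. Setting $\mu_1(\bp(\ell)) := \lambda_* + \ell\nu_+(\ell)$ and $\mu_2(\bp(\ell)) := \lambda_* + \ell\nu_-(\ell)$ delivers the expansion \eqref{eq:mu0}. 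Reconstructing the full eigenfunctions by undoing the Lyapunov--Schmidt reduction (the $V^\perp$-component is $O(\ell)$ in $H^1(\cC_z)$), and then multiplying by $e^{\im\bp(\ell)\cdot\bx}$ to pass from periodic to quasi-periodic representatives, gives the stated $v_1, v_2$ after $L^2(\cC_z)$-normalization.

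The only conceptual point---and the reason the lemma is worth recording separately---is that the two analytic branches cross \emph{transversally} at $\ell=0$ with slopes $\pm|\theta_*|$. Thus the conventional order-preserving labels $\lambda_1\le \lambda_2$ yield the non-analytic branches $\lambda_*\mp|\theta_*|\,|\ell|+O(\ell^2)$, whereas following each branch of fixed slope-sign through the crossing restores analyticity. No substantive technical obstacle is anticipated: every ingredient is furnished by Theorem~\ref{lem:Dirac} (for the degenerate eigenspace at $\bp=\Kone$) and Proposition~\ref{lem:Tderiv} (for the off-diagonal form of $M'(0)$), and the remaining step is the standard reduction to a $2\times 2$ analytic matrix problem with a gap.
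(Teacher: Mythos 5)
Your proposal is correct and follows essentially the same route as the paper: the paper obtains this lemma from the Lyapunov--Schmidt reduction of \eqref{eq:dispper} to the $2\times 2$ matrix \eqref{eq:Lyap} (specialized to $\eps=\delta=0$, $\mu=0$), whose leading part is exactly the off-diagonal matrix from \eqref{eq:w_b1_dpL_w_K} with eigenvalues $\pm|\theta_*|\ell$ and eigenvectors $\tfrac{1}{\sqrt2}(\overline{\theta_*}/|\theta_*|,\pm1)^T$, as you computed. Your remark that analyticity is restored by following the transversally crossing branches rather than the ordered labels is precisely the point the lemma records.
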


\begin{figure}
\begin{center}
\includegraphics[scale=0.8]{./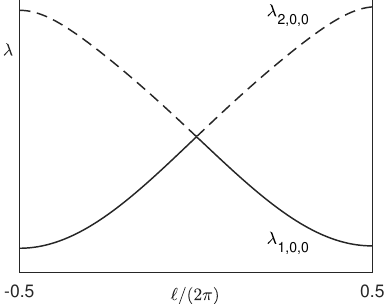}
\includegraphics[scale=0.8]{./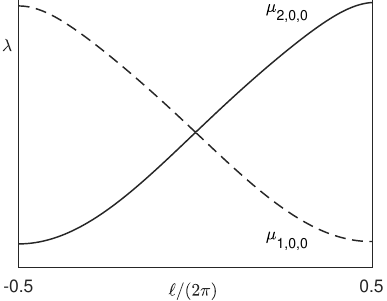}
\includegraphics[scale=0.8]{./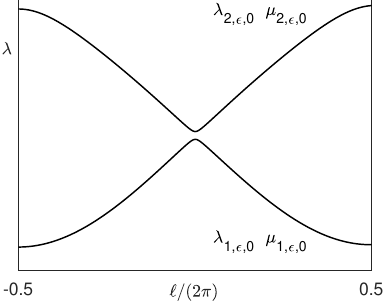}
\end{center}
\caption{The labelings of the band eigenvalues when $\delta=0$: $\mu_{n,\eps,0}$ are smooth branches, while $\lambda_{n,\eps,0}$ are piecewise smooth. When $\eps\neq0$, $\mu_{n,\eps,0}=\lambda_{n,\eps,0}$ for $n=1,2$.}
\label{fig:bandsplit}
\end{figure}

In the above lemma, the $O(\ell)$ terms following $w_2$ and $w_2'$ are functions whose $H^1(\cC_z)$ norms are of order $|\ell|$. The $O(\ell)$ other terms are complex numbers whose $\mathbb C$ norms are of order $|\ell|$. Let $v_i := v_i(\bx;\Kone)$. It follows that
\begin{equation}\label{eq:winv}
\begin{cases}
v_1=\frac{1}{\sqrt2}\left(  \frac{\overline{\theta_*}}{|\theta_*|}w_1 +  w_2 \right)\\
v_2=\frac{1}{\sqrt2}\left( \frac{\overline{\theta_*}}{|\theta_*|} w_1 - w_2\right)
\end{cases}
,\quad
\begin{cases}
w_1 = \frac{1}{\sqrt2}\frac{\theta_*}{|\theta_*|} (v_1  +  v_2)\\
w_2 = \frac{1}{\sqrt2} (v_1 - v_2)
\end{cases}.
\end{equation}

\begin{remark}\label{lem:phase}
When exactly one of $\eps$ and $\delta$ is zero, for $n=1,2$, there exist $\ell$-dependent phase factors $\alpha_n$ such that $\|u_n(\cdot,\bp(\ell,0))- \alpha_n u_{n,\eps,\delta}(\cdot,\bp(\ell,0))\|_{H^1(\chomo)} = O(\eps,\delta)$ uniformly for $\ell$ that is sufficiently small.
\end{remark}

Similar relations hold near the high-symmetry point $\Ktwo$, as described in the following lemma. 

\begin{lemma}\label{lem:0stripKtwo}
Define $\bp(\ell):=\Ktwo + \ell \bbeta_1 $. For sufficiently small $\ell\in\mathbb R$,
\begin{equation}\label{eq:mu0two}
\begin{aligned}
\mu_1(\bp(\ell)) &= \lambda_*  + |\theta_*|\ell(1+O(\ell)) \quad\text{(increasing in $\ell$)}, \\
\mu_2(\bp(\ell)) &= \lambda_*  - |\theta_*|\ell(1+O(\ell)) \quad\text{(decreasing in $\ell$)}. 
\end{aligned}
\end{equation}
The corresponding Bloch modes are chosen as
\begin{equation}
\begin{aligned}
v_1(\bx;\bp(\ell))&=\left(  \frac{\overline{\theta_*}}{|\theta_*|}w_1' - w_2'  +O(\ell) \right)\frac{1}{\sqrt{2+O(\ell)}},\\
v_2(\bx;\bp(\ell))&=\left( \frac{\overline{\theta_*}}{|\theta_*|} w_1' + w_2'  +O(\ell) \right)\frac{1}{\sqrt{2+O(\ell)}}.
\end{aligned}
\end{equation}
\end{lemma}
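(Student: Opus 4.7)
The plan is to carry out at $\Ktwo$ exactly the same Lyapunov–Schmidt reduction used for Lemma \ref{lem:0strip}, with the basis $(w_1',w_2')$ of the Dirac eigenspace in place of $(w_1,w_2)$, and with the matrix identity \eqref{eq:w_b1_dpL_w_K'} of Proposition \ref{lem:Tderiv} replacing \eqref{eq:w_b1_dpL_w_K}. Set $\bp(\ell):=\Ktwo+\ell\bbeta_1$ and work with $\cL(0,0,\bp(\ell))$ on $H^1_{\mathbf{0}}$. By Theorem \ref{lem:Dirac} (together with the relations \eqref{eq:onetwo}), the kernel of $\cL_0(\Ktwo)-\lambda_*$ on $H^1_{\mathbf 0}$ is two-dimensional, spanned by the periodic parts $\tilde w_1',\tilde w_2'$. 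Writing any candidate eigenfunction as $\tilde v=\alpha_1 \tilde w_1'+\alpha_2\tilde w_2'+\phi$ with $\phi$ in the orthogonal complement, the Fredholm alternative plus the analytic implicit function theorem (as in the proof of Theorem \ref{lem:Dirac}) reduces the spectral problem to an eigenvalue problem
$$M(\ell)\begin{pmatrix}\alpha_1\\ \alpha_2\end{pmatrix} = (\mu-\lambda_*)\begin{pmatrix}\alpha_1\\ \alpha_2\end{pmatrix},$$
where $M(\ell)$ is a $2\times 2$ Hermitian matrix depending analytically on $\ell$ in a neighborhood of $0$, with $M(0)=0$.

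Next I would compute the leading order of $M(\ell)$. Expanding $\cL_0(\bp(\ell))=\cL_0(\Ktwo)+\ell\,\bbeta_1\!\cdot\!\nabla_{\bp}\cL(0,0,\bp)|_{\Ktwo}+O(\ell^2)$ and plugging in \eqref{eq:w_b1_dpL_w_K'} gives
$$M(\ell)=\ell\begin{pmatrix}0 & -\overline{\theta_*}\\ -\theta_* & 0\end{pmatrix}+O(\ell^2).$$
The eigenvalues of the leading matrix are $\pm|\theta_*|$, with eigenvectors proportional to $(\overline{\theta_*}/|\theta_*|,\mp 1)^T$. Because $\theta_*\neq 0$ by the non-degeneracy condition \eqref{eq:slope}, the two eigenvalues of $M(\ell)$ are simple for $0<|\ell|\ll 1$, so by the Rellich–Kato theory the eigenvalues and eigenvectors of $M(\ell)$ can be organized into two analytic branches on a neighborhood of $\ell=0$. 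Labelling these by the sign of the leading slope and translating back through the Lyapunov–Schmidt decomposition yields
$$\mu_1(\bp(\ell))=\lambda_*+|\theta_*|\ell(1+O(\ell)),\qquad \mu_2(\bp(\ell))=\lambda_*-|\theta_*|\ell(1+O(\ell)),$$
together with Bloch modes whose leading terms are $\tfrac{1}{\sqrt 2}(\overline{\theta_*}/|\theta_*|\, w_1'-w_2')$ and $\tfrac{1}{\sqrt 2}(\overline{\theta_*}/|\theta_*|\, w_1'+w_2')$, as stated.

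The argument is structurally identical to that of Lemma \ref{lem:0strip}; the only substantive difference is the sign flip of the off-diagonal entries in the leading matrix, coming from \eqref{eq:w_b1_dpL_w_K'} versus \eqref{eq:w_b1_dpL_w_K}. This sign flip leaves the eigenvalues $\pm|\theta_*|\ell$ unchanged but reverses the relative sign of the $w_2'$ component in each eigenvector, which is precisely the difference between the expansions of $v_1,v_2$ here and in Lemma \ref{lem:0strip}. The main (and only mild) obstacle is justifying that the branches are genuinely analytic rather than merely piecewise analytic; this follows from Rellich–Kato once one observes that the two leading eigenvalues of $M(\ell)$ are simple for $\ell\ne 0$, so that the level-crossing at $\ell=0$ is the only potential source of non-analyticity, and the linear-in-$\ell$ expansions above show it is resolved by the chosen analytic labelling.
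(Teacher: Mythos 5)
Your argument is correct and is essentially the paper's own route: the statement follows from the Lyapunov--Schmidt reduction of Appendix~\ref{sec:pertDirac} specialized to $\eps=\delta=0$, $\mu=0$, with the leading $2\times2$ matrix supplied by \eqref{eq:w_b1_dpL_w_K'}, whose eigenvalues $\pm|\theta_*|\ell$ and eigenvectors $\bigl(\overline{\theta_*}/|\theta_*|,\mp1\bigr)^T$ give exactly \eqref{eq:mu0two} and the stated Bloch modes. Your Rellich--Kato remark correctly handles the analyticity of the branches through the crossing at $\ell=0$, and the sign flip relative to Lemma~\ref{lem:0strip} is accounted for precisely as in the paper.
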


Let $v_i' := v_i(\bx;\Ktwo)$. It follows that
\begin{equation}\label{eq:winvtwo}
\begin{cases}
v_1'=\frac{1}{\sqrt2}\left(  \frac{\overline{\theta_*}}{|\theta_*|} w_1' -  w_2' \right)\\
v_2'=\frac{1}{\sqrt2}\left( \frac{\overline{\theta_*}}{|\theta_*|}  w_1' + w_2'\right)
\end{cases}
,\quad
\begin{cases}
w_1' = \frac{1}{\sqrt2}\frac{\theta_*}{|\theta_*|} (v_1'  +  v_2')\\
w_2' = \frac{1}{\sqrt2} (-v_1' + v_2')
\end{cases}.
\end{equation}

\section{The Green functions in the infinite strip}\label{sec:Green}
\subsection{Representation of the Green's function}
Consider the infinite strip region $\Omega:= \cup_{m\in\mathbb Z}(\cC_z + m\be_1)$ along the $\be_1$ direction for the joint medium with a zigzag interface. 
To incorporate the quasi-periodic boundary condition along the interface direction $\be_2$ in \eqref{eq:edgedef}, we introduce the function space
\begin{equation}\label{eq:QPedge}
\cH^1_{\kp}(\Omega):=\{u|_{\Omega}, u\in H^1_{\text{loc}}(\mathbb R^2), u(\bx+\be_2)=e^{\im \kp}u(\bx)\}
\quad \mbox{for} \; \kp\in [0, 2\pi].
\end{equation}
Let $G^{\eps,\delta}(\bx,\by; \lambda)$ be
the Green function in $\Omega$ with the quasi-periodic conditions along the $\be_2$ direction. We represent the Green functions in terms of the Bloch modes shown in Sections~\ref{sec:gap} and~\ref{sec:smooth} using the limiting absorption principle, following the derivation in~\cite{Fliss-16}.

Consider the following problem in $\Omega$ with absorption:
\begin{equation}
\langle v, \cL(\eps,\delta)  u\rangle_{\Omega} - (\lambda +\im \sigma )\langle v,   u\rangle_{\Omega}= \langle f,   u\rangle_{\Omega} \quad\text{for all }v\in \cH^1_{\kp}(\Omega).
\end{equation}
Here $\langle \cdot, \cdot \rangle_{\Omega}$ represents the $H^{1}(\Omega)$-$H^{-1}(\Omega)$ pairing, 
$f\in L^2(\Omega)$, and $\sigma$ is a positive constant. As $\sigma \to 0^+$, the kernel of the resolvent  $(\cL(\eps,\delta)-(\lambda+\im\sigma))^{-1}$ defines  
the Green function $G^{\eps,\delta}(\bx,\by; \lambda)$, which satisfies
\begin{equation}\label{eq:phyGeps}
\begin{aligned}
\begin{cases}
(-\nabla\cdot \A(\bx)\nabla  - \lambda) G^{\eps,\delta}(\bx,\by; \lambda)  = \delta(\bx-\by) \quad &\bx\in \Omega,\\
G^{\eps,\delta}(\bx+ \be_2,\by; \lambda)= e^{\im \kp}G^{\eps,\delta}(\bx,\by; \lambda) \quad &\text{for }\bx\in\Gamma_-, \\
\partial_{\nuGb}  G^{\eps,\delta}(\bx+ \be_2,\by; \lambda) =e^{\im \kp}\partial_{\nuGb}  G^{\eps,\delta}(\bx,\by; \lambda)  \quad &\text{ for }\bx\in\Gamma_-.
\end{cases}
\end{aligned}
\end{equation}
Here $\Gamma_-:=\{-\frac{1}{2}\be_2+\ell\be_1, \ell\in\mathbb R\}$ and $\nu_2=(\frac{1}{2},\frac{2}{\sqrt3})$.

The Green functions that are $\kp=\kp^*:=\Kone\cdot\be_2$ quasiperiodic take the following forms.
Let $\bp(\ell):=\Kone+\ell\beta_1$, which  parametrizes all $\bp$ values such that $\bp\cdot\be_2=\kp^*$. 
The Green function when $\eps=\delta=0$ takes the form
\begin{equation}
\begin{aligned}\label{eq:G0}
G^{0,0}(\bx,\by; \lambda_*) = &
\sum_{n\geq3}\frac{1}{2\pi}\int_{[-\pi,\pi ]} 
\frac{\overline{v_{n}(\by;\bp(\ell))} v_{n}(\bx;\bp(\ell))}{ \mu_{n}(\bp(\ell)) - \lambda_* } \, \dpt +\sum_{n=1,2}\frac{1}{2\pi}\text{p.v.}\int_{[-\pi,\pi ]} 
\frac{\overline{v_{n}(\by;\bp(\ell))} v_{n}(\bx;\bp(\ell))}{ \mu_{n}(\bp(\ell)) - \lambda_* } \, \dpt \\
&+ \frac{\im}{2|\theta_*|}\overline{v_{1}(\by;\Kone)} v_{1}(\bx;\Kone) + \frac{\im}{2|\theta_*|}\overline{v_{2}(\by;\Kone)} v_{2}(\bx;\Kone),\quad \bx,\by\in\Omega.
\end{aligned}
\end{equation}
In the above, $\mu_n$ and $v_n$ are the eigenvalues and eigenfunctions that are analytic in $\ell$ as introduced in \eqref{eq:analyticpair} and Lemma \ref{lem:0strip}. We denote the integrals in this Green function by
\begin{equation}\label{eq:tildeG0}
\tilde{G}^{0,0}(\bx,\by; \lambda_*) := 
\sum_{n\geq3}\frac{1}{2\pi}\int_{[-\pi,\pi ]} 
\frac{\overline{v_{n}(\by;\bp(\ell))} v_{n}(\bx;\bp(\ell))}{ \mu_{n}(\bp(\ell)) - \lambda_* } \, \dpt  +\sum_{n=1,2}\frac{1}{2\pi}\text{p.v.}\int_{[-\pi,\pi ]} 
\frac{\overline{v_{n}(\by;\bp(\ell))} v_{n}(\bx;\bp(\ell))}{ \mu_{n}(\bp(\ell)) - \lambda_* } \, \dpt.
\end{equation}
When exactly one of $\eps$ and $\delta$ is $0$, and when $\lambda$ is in the bandgap  of $\cL(\eps,\delta)$, $(\lambda_* - |t_1\eps|, \lambda_* +  |t_1\eps|)$ or $(\lambda_* - |t_2\delta|, \lambda_* +  |t_2\delta|)$, 
the Green function takes the form
\begin{equation}\label{eq:Green}
G^{\eps,\delta}(\bx,\by; \lambda) = 
 \sum_{n\geq1}\frac{1}{2\pi}\int_{[-\pi,\pi ]}
\frac{\overline{v_{n,\eps,\delta}(\by;\bp(\ell))} v_{n,\eps,\delta}(\bx;\bp(\ell))}{ \mu_{n,\eps,\delta}(\bp(\ell)) - \lambda } \, \dpt ,\quad \bx,\by\in\Omega.
\end{equation}
Again, $\mu_n$ and $v_n$ are the eigenvalues and eigenfunctions that are analytic in $\ell$. 
It is known that $G^{\eps,\delta}(\bx,\by; \lambda)$ decays exponentially as $|\bx\cdot\be_1|\to\infty$~\cite{Fliss-16}.

\begin{remark}\label{lem:Greenp}
The Green functions that are $\kp^{*,\prime}=\Ktwo\cdot\be_2$ quasiperiodic along $\be_2$ are similarly defined by doing the two replacements: first, use $\bp(\ell):=\Ktwo+\ell\beta_1$, and second use $v_1'(\cdot,\Ktwo)$ and $v_2'(\cdot,\Ktwo)$ in place of $v_1(\cdot,\Kone)$ and $v_2(\cdot,\Kone)$. The resulting Green functions are denoted by $G^{0,0,\prime}(\bx,\by; \lambda_*)$, $\tilde G^{0,0,\prime}(\bx,\by; \lambda_*)$ and $G^{\eps,\delta,\prime}(\bx,\by; \lambda)$ respectively. 
\end{remark}

\subsection{Asymptotic behavior of  the layer potentials}\label{sec:asympGreen}
We derive the asymptotics of the single layer potential using the expansions of Green functions as shown in Section~\ref{sec:Green}. More precisely, let $\lambda = \lambda_* + |\eps|\, h$, with $|\eps|$ small and $h\in\mathbb{C}$ satisfying $|h|\ll \mathfrak d |t_1|$, where $\mathfrak d\in(0,1)$ is a fixed constant. In this regime, the singularity of the Green's function appears when 
\[
\mu_{n,\eps,\delta}(\bp(\ell)) \approx \lambda \approx \lambda_*, \,\,\, \mbox{for} \; n=1, 2
\]
and $\bp(\ell)$ is near $\Kone$ or $\Ktwo$. Around these singularities, the change of the sign in $t_1\eps$ or $t_2\delta$ leads to significant changes in the eigenmodes $v_{n,\eps,\delta}(\bp(\ell))$ as shown in Propositions~\ref{lem:Koneeps} through \ref{lem:Ktwodelta}. 

To investigate the behavior of the layer potentials when $\eps$ or $\delta$ is small, we study the boundary integral operators $\mathcal{S}^{\eps, 0}(\lambda)$ and $\mathcal{S}^{0,\delta}$ 
with the kernel $G^{\eps,0}(\mathbf{x},\mathbf{y}; \lambda)$ and $G^{0,\delta}(\mathbf{x},\mathbf{y};\lambda)$ respectively. By \eqref{eq:Green}, the single layer potential $\mathcal{S}^{\eps, 0}(\lambda)$ attains the following spectral representation:
\begin{equation}
\mathcal{S}^{\eps, 0}(\lambda)\phi:= \int_{\Gamma} G^{\eps,\delta}(\bx,\by; \lambda)\phi(\by)\,d s_{\by} = \frac{1}{2\pi}\sum_{n\geq1} \int_{-\pi}^{\pi}\!
\frac{\overline{\langle \phi,v_{n,\eps, 0}(\cdot;\mathbf{p}(\ell))\rangle}\; v_{n,\eps, 0}(\mathbf{x};\mathbf{p}(\ell))}
{\mu_{n,\eps, 0}(\mathbf{p}(\ell))-\lambda}\,d\ell.
\end{equation}
Here the index \(n=1,2\) corresponds to the two bands that meet at \(\lambda_*\) while \(n\ge3\) denote the other bands, as described in Section~\ref{sec:gap}. The pairing $\langle \phi,v_{n,\eps, 0}(\cdot;\mathbf{p}(\ell))\rangle$ represents the $\cH^{-1/2}(\Gamma)$-$\cH^{1/2}(\Gamma)$ pairing, where the spaces are defined in~\eqref{eq:HsG2}. We can prove that as $t_1\eps\to0$ with a fixed sign, the following operator limit holds:
\begin{equation}\label{eq:PTeps}
\begin{aligned}
\mathcal{S}^{\eps, 0}(\lambda_*+|\eps| h)\phi &= \tilde{\mathcal{S}}^{0,0}(\lambda_*)\phi \\
&\quad + \beta_1(h)\Bigl[\overline{\langle \phi,v_1\rangle}\, v_1 + \overline{\langle \phi,v_2\rangle}\, v_2 \Bigr] \\
&\quad +\sgn(t_1\eps) \xi_1(h) \Bigl[\overline{\langle \phi,v_1\rangle}\, v_2 + \overline{\langle \phi,v_2\rangle}\, v_1\Bigr] + o(1),
\end{aligned}
\end{equation}
where $\tilde{\mathcal{S}}^{0,0}$ is the boundary integral operator with kernel $\tilde{G}^{0,0}$ and 
\begin{equation}
\begin{aligned}
\beta_1(h) &=  \frac{1}{2|\theta_*|}\frac{h}{\sqrt{t_1^2-h^2}}, \qquad
\xi_1(h) = \left|\frac{t_1}{2\theta_*}\right|\frac{1}{\sqrt{t_1^2-h^2}}. 
\end{aligned}
\end{equation}

The derivation of \eqref{eq:PTeps} is based on the decomposition of the \(\ell\)-integral in \eqref{eq:Green} into three parts:

\begin{enumerate}
  \item \textbf{First two bands near the Dirac point \(\boldsymbol{(|\ell|\le|\eps|^{1/3})}\):} \\
    In the regime where
    \[
    |\ell|\le|\eps|^{1/3},
    \]
    one exploits the local expansion
    \[
    \mu_{n,\eps, 0}(\mathbf{p}(\ell)) = \lambda_* + (-1)^{n-1} \sqrt{t_1^2+|\theta_*|^2\ell^2}\,\Bigl(1+O(\eps,\ell)\Bigr).
    \]
    Therefore, after subtracting the spectral parameter, we have
    \[
    \mu_{n,\eps, 0}(\mathbf{p}(\ell)) - (\lambda_*+\eps h)= (-1)^{n-1}\sqrt{t_1^2+|\theta_*|^2\ell^2} - \eps h + O(\eps,\ell).
    \]
    By the change of variables and the symmetry arguments (in particular, noticing that the terms involving \(L_1(1,\ell)\) are odd in \(\ell\) and cancel accordingly), there holds
    \[
    \begin{aligned}
    \frac{1}{2\pi}\sum_{n=1}^2 &\int_{-|\eps|^{1/3}}^{|\eps|^{1/3}} 
    \frac{\overline{\langle \phi, v_{n,\eps, 0}(\cdot;\mathbf{p}(\ell))\rangle}\; 
    v_{n,\eps, 0}(\mathbf{x};\mathbf{p}(\ell))}
    {\mu_{n,\eps, 0}(\mathbf{p}(\ell))-\lambda_*-\eps h} \,d\ell\\[1mm]
    &\longrightarrow 
     \beta_1(h)\Bigl[\overline{\langle \phi,v_1\rangle}\, v_1 + \overline{\langle \phi,v_2\rangle}\, v_2 \Bigr]
    +\sgn(t_1\eps) \xi_1(h) \Bigl[\overline{\langle \phi,v_1\rangle}\,v_2 + \overline{\langle \phi,v_2\rangle}\,v_1\Bigr].
    \end{aligned}
    \]
    
  \item \textbf{First two bands away from the Dirac point \(\boldsymbol{(|\ell|>|\eps|^{1/3})}\):} \\
    For values of \(\ell\) away from the degeneracy the denominators remain uniformly bounded away from \(0\). A standard application of the dominated convergence theorem gives
    \[
    \frac{1}{2\pi}\sum_{n=1}^2\int_{|\ell|>|\eps|^{1/3}} 
    \frac{\overline{\langle \phi, v_{n,\eps, 0}(\cdot;\mathbf{p}(\ell))\rangle}\; 
    v_{n,\eps, 0}(\mathbf{x};\mathbf{p}(\ell))}
    {\mu_{n,\eps, 0}(\mathbf{p}(\ell))-\lambda_*-\eps h}\,d\ell
    \longrightarrow \sum_{n=1,2}\frac{1}{2\pi}\text{p.v.}\int_{[-\pi,\pi ]} 
\frac{\overline{\langle \phi,v_{n}(\by;\bp(\ell))\rangle} v_{n}(\bx;\bp(\ell))}{ \mu_{n}(\bp(\ell)) - \lambda_* } \, \dpt .
    \]
    
  \item \textbf{Higher bands (\(\boldsymbol{n\ge3}\)):} \\
    Since the spectral parameter \(\lambda_*+\eps h\) remains uniformly separated from these bands, a direct perturbative argument shows that their contributions converge to the corresponding terms of the unperturbed operator:
    \[
    \frac{1}{2\pi}\sum_{n\ge3}\int_{-\pi}^{\pi}
    \frac{\overline{\langle \phi, v_{n,\eps, 0}(\cdot;\mathbf{p}(\ell))\rangle}\; 
    v_{n,\eps, 0}(\mathbf{x};\mathbf{p}(\ell))}
    {\mu_{n,\eps, 0}(\mathbf{p}(\ell))-\lambda_*-\eps h}\,d\ell
    \longrightarrow \sum_{n\geq3}\frac{1}{2\pi}\int_{[-\pi,\pi ]} 
\frac{\overline{\langle \phi,v_{n}(\by;\bp(\ell))\rangle} v_{n}(\bx;\bp(\ell))}{ \mu_{n}(\bp(\ell)) - \lambda_* } \, \dpt
    \]
\end{enumerate}

Collecting the limit operator for each one above, we deduce the asymptotic expansion \eqref{eq:PTeps}. Note that the uniform convergence is achieved in the appropriate Sobolev spaces (from \(\widetilde{H}^{-1/2}(\Gamma)\) to \(H^{1/2}(\Gamma)\)) by standard analytic and perturbative techniques.

A similar analysis applies to the boundary integral operator $\mathcal{S}^{0,\delta}(\lambda)$. Its spectral representation is analogous to \eqref{eq:Green}, and one obtains an asymptotic expansion of the form:
\begin{equation}\label{eq:PTdelta}
\begin{aligned}
\mathcal{S}^{0,\delta}(\lambda_*+|\delta| h)\phi &= \tilde{\mathcal{S}}^{0,0}(\lambda_*)\phi \\
&\quad + \beta_2(h)\Bigl[\overline{\langle \phi,v_1\rangle}\, v_1 + \overline{\langle \phi,v_2\rangle}\, v_2 \Bigr] \\
&\quad +\sgn(t_2\delta) \xi_2(h) \Bigl[\overline{\langle \phi,v_1\rangle}\, v_2 + \overline{\langle \phi,v_2\rangle}\, v_1\Bigr] + o(1),
\end{aligned}
\end{equation}
where 
\begin{equation}
\begin{aligned}
\beta_2(h) &=  \frac{1}{2|\theta_*|}\frac{h}{\sqrt{t_2^2-h^2}}, \qquad
\xi_2(h) = \left|\frac{t_2}{2\theta_*}\right|\frac{1}{\sqrt{t_2^2-h^2}}. 
\end{aligned}
\end{equation}

\begin{remark}
The above analysis shows that the Green function $G^{\eps, 0}$ attains a phase transition around $\eps=0$, which is characterized by the coefficient $\xi_1(h)$ in~\eqref{eq:PTeps}; while the Green function $G^{0, \delta}$ attains a phase transition around $\delta=0$, characterized by the coefficient $\xi_2(h)$ in~\eqref{eq:PTdelta}.
\end{remark}

\section{The interface mode for the joint photonic crystal}\label{sec:interface}
In this section, we consider the spectral problem \eqref{eq:edgedef} and prove Theorem~\ref{thm:edge}.
The proof for each case in Theorem~\ref{thm:edge} is similar.
For clarity we only give the proof for Case 3 and discuss the proof of the other two cases briefly.

\subsection{The spectral problem}
\label{sec:zig}

As $\Lint(\eps_L,\delta_L,\eps_R,\delta_R)$ is periodic along the direction $\be_2$, the Floquet-Bloch theory implies that it suffices to consider the quasi-periodic problem along $\be_2$. Introduce the operator $\Lint_{\kp}(\eps_L,\delta_L,\eps_R,\delta_R)$, which is the restriction of $\Lint(\eps_L,\delta_L,\eps_R,\delta_R)$ on $\cH^1_{\kp}(\Omega)$.

We consider the interface modes bifurcating from the Dirac points, which will attain quasi-momenta near $\kp^*:= \Kone\cdot \be_2=\frac{4\pi}{3}$ or $\kp^{*,\prime} = \Ktwo\cdot \be_2=- \frac{4\pi}{3}$.
An interface mode bifurcating from $\Kone$ is an eigenfunction $u\in \cH^1_{\kp^*}(\Omega)$ such that for some $\lambda\in\mathbb R$,
\begin{equation}\label{eq:interfaceeig}
\langle v, \Lint(\eps_L,\delta_L,\eps_R,\delta_R)  u\rangle_{\Omega} = \lambda \langle v,  u\rangle_{\Omega}, \quad \text{for all }  v(\bx)\in \cH^1_{\kp^*}(\Omega).
\end{equation}
and an interface mode bifurcating from $\Ktwo$ is an eigenfunction $u\in \cH^1_{\kp^{*,\prime}}(\Omega)$ such that for some $\lambda\in\mathbb R$,
\begin{equation}\label{eq:interfaceeigtwo}
\langle v, \Lint(\eps_L,\delta_L,\eps_R,\delta_R)  u\rangle_{\Omega} = \lambda \langle v,  u\rangle_{\Omega}, \quad \text{for all }  v(\bx) \in \cH^1_{\kp^{*,\prime}}(\Omega).
\end{equation}
We prove case 3 of Theorem~\ref{thm:edge}, wherein $(\eps_L,\delta_L)=(0, \delta)$ and $(\eps_R,\delta_R)=(\eps, 0)$.

\subsection{Integral-equation formulation for the spectral problem}\label{sec:characterization}
Let $\Gamma:=\{t\be_2, -\frac{1}{2}\leq t < \frac{1}{2}\}$ be the interface of the two photonic crystals in the strip region $\Omega$. For $s\in\mathbb R$, define the quasi-periodic Sobolev space on $\Gamma$, $\cH^s(\Gamma)$,  by 
\begin{equation}\label{eq:HsG2}
\cH^s(\Gamma):= \left\{ u(\bx_0+ t\be_2) = \sum_{n\in\mathbb Z} a_n e^{\im \kp^* t} e^{\im 2\pi n t}: \|u\|_{\cH^s(\Gamma)}^2:=\sum_{n\in\mathbb Z} |a_n|^2 (1+|n|^2)^s\right\},
\end{equation}
Here $\bx_0 = -\frac{1}{2}\be_1 -\frac{1}{2}\be_2$.

Let $\lambda$ be located in the common gap of $\cL(\eps,0)$ and $\cL(0,\delta)$ along $\bp(\ell)=\Kone + \ell \bbeta_1 $ when $\eps$ or $\delta$ is nonzero (cf.  Remark~\ref{lem:gap}). That is, $\lambda \in (\lambda_* - \mathfrak d|t_1\eps|, \lambda_* + \mathfrak d|t_1\eps|)\cap (\lambda_* - \mathfrak d|t_2\delta|, \lambda_* + \mathfrak d|t_2\delta|) $, where $\mathfrak d\in(0,1)$. Let $\bn=(\frac{1}{2},-\frac{\sqrt3}{2})$ be the unit normal vector of $\Gamma$ pointing to the right. For $(\psi,\phi)\in \HhalfG\times\HmhalfG$, we define the single and double layer potentials: 
\begin{equation}\label{eq:lpotentialeps}
\begin{aligned}
\cS^{\eps,\delta}(\lambda)\phi(\bx)&:=\int_\Gamma  G^{\eps,\delta}(\bx,\by;\lambda) \phi(\by)\, ds_{\by}, \quad \bx\notin \Gamma, \\
\cD^{\eps,\delta}(\lambda)\psi(\bx)&:=\int_\Gamma  \partial_{n_\by} G^{\eps,\delta}(\bx,\by;\lambda) \psi(\by)\, ds_{\by} \quad \bx\notin \Gamma,  
\end{aligned}
\end{equation}
where $G^{\eps,\delta}(\bx,\by;\lambda)$ are the  Green functions that are $\kp^*$ quasiperiodic along $\be_2$ as defined in \eqref{eq:Green}. 
The single layer potential $\cS^{\eps,\delta}(\lambda)\phi(\bx)$ can be continuously extended to $\Gamma$ and it defines an bounded integer operator from $\HmhalfG$ to $\HhalfG$, which we still denote by $\cS^{\eps,\delta}$. Given $(\psi,\phi)\in \HhalfG\times\HmhalfG$, we also define the integral operators
\begin{equation}\label{eq:int_opt_K}
\begin{aligned}
\cK^{\eps,\delta}(\lambda)\psi(\bx)&:=\int_\Gamma  \partial_{n_\by} G^{\eps,\delta}(\bx,\by;\lambda) \psi(\by)\, ds_{\by} \quad \bx\in \Gamma,\\
\cK^{*,\eps,\delta}(\lambda)\phi(\bx)&:=\int_\Gamma  \partial_{n_\bx} G^{\eps,\delta}(\bx,\by;\lambda) \phi(\by)\, ds_{\by} \quad \bx\in \Gamma.\\
\end{aligned}
\end{equation}
It can be shown that $\cK^{\eps,\delta}:\HhalfG\to\HhalfG$ and $\cK^{*,\eps,\delta}:\HmhalfG\to\HmhalfG$ are bounded.

By taking the limit of the layer potentials as $\bx\to\Gamma$,
the following jump relationship holds~\cite{colton2013integral}: 
\begin{equation}\label{eq:jumpeps}
\begin{aligned}
[\cS^{\eps,\delta}\psi (\lambda) ]_{\pm} &= \cS^{\eps,\delta} (\lambda) \psi, \\
[\partial_n \cS^{\eps,\delta}(\lambda)  \psi ]_{\pm} &= \mp\frac{1}{2}\psi+\cK^{*,\eps,\delta}(\lambda) \psi, \\
[ \cD^{\eps,\delta}\phi (\lambda) ]_{\pm} &= \pm\frac{1}{2}\phi + \cK^{\eps,\delta} (\lambda) \phi, \\
[\partial_n \cD^{\eps,\delta} (\lambda) \phi ]_{\pm}& =: \cN^{\eps,\delta}\phi. 
\end{aligned}
\end{equation}
In the above, the subscripts $-$ and $+$ represent the limit of the layer potentials as $\bx\to\Gamma$ from the left and right side respectively, the symbol $\partial_n$ represents the normal derivative, and $\cN^{\eps,\delta}: H^{1/2}(\Gamma)\to H^{-1/2}(\Gamma)$ is a well-defined bounded operator.

Assume that $u(\bx)$ is an interface mode of $\Lint_{\kp^*}(0,\delta,\eps,0)$ with the eigenvalue $\lambda\in (\lambda_* - \mathfrak d|t_1\eps|, \lambda_* + \mathfrak d|t_1\eps|)\cap (\lambda_* - \mathfrak d|t_2\delta|, \lambda_* + \mathfrak d|t_2\delta|)$. 
Let  $u|_\Gamma \in \HhalfG$ and $\partial_n u|_\Gamma \in \HmhalfG$ be the traces of $u$ and the normal derivatives of $u$ on $\Gamma$. Then it follows from the Green's formula that $u$ attains the following representation in the infinite strip $\Omega$:
\begin{equation}\label{eq:tracetoedge}
u(\bx) = 
\begin{cases}
\big[\cD^{\eps,0} (\lambda)  u|_\Gamma \big](\bx) - \big[\cS^{\eps,0}(\lambda) \partial_n u|_\Gamma  \big] (\bx)  \quad \text{for}\; \bx \; \text{on the right of }\Gamma,\\
-\big[ \cD^{0,\delta} (\lambda)  u|_\Gamma \big] (\bx) +  \big[\cS^{0,\delta} \partial_n u|_\Gamma (\lambda)  \big](\bx)  \quad \text{for}\; \bx \; \text{on the left of }\Gamma.
\end{cases}
\end{equation}
Here we use the fact that $u\in\cH^1_{\kp^*}(\Omega)$, especially the decay of $u$ when $|\bx\cdot\be_1| \to\infty$ when applying the Green's formula.
 Taking the limit from either side of $\Gamma$, we obtain the following two systems of integral equations:

 \begin{equation}\label{eq:densuright}
 \left(\begin{matrix}
 u|_\Gamma\\
 \partial_n u|_\Gamma
 \end{matrix}\right)
  =
 \left(\begin{matrix}
 \cK^{\eps,0}(\lambda)  + \frac{1}{2}\cI  & -\cS^{\eps,0}(\lambda)   \\
 \cN^{\eps,0}(\lambda)  & -\cK^{*,\eps,0}(\lambda)  + \frac{1}{2}\cI
 \end{matrix}\right) 
 \left(\begin{matrix}
 u|_\Gamma\\
 \partial_n u|_\Gamma
 \end{matrix}\right),
 \end{equation}
 and
 \begin{equation}\label{eq:densuleft}
 \left(\begin{matrix}
 u|_\Gamma\\
 \partial_n u|_\Gamma
 \end{matrix}\right)
 =
 \left(\begin{matrix}
 - \cK^{0,\delta}(\lambda)  + \frac{1}{2}\cI & \cS^{0,\delta}(\lambda)    \\
  - \cN^{0,\delta}(\lambda)  & \cK^{*,0,\delta}(\lambda)  + \frac{1}{2}\cI
 \end{matrix}\right) 
 \left(\begin{matrix}
 u|_\Gamma\\
 \partial_n u|_\Gamma
 \end{matrix}\right).
 \end{equation}


Conversely, assume $\lambda \in (\lambda_* - \mathfrak d|t_1\eps|, \lambda_* + \mathfrak d|t_1\eps|)\cap (\lambda_* - \mathfrak d|t_2\delta|, \lambda_* + \mathfrak d|t_2\delta|) $. Let $(\psi,\phi)\in \HhalfG\times\HmhalfG$, which is not necessarily the Cauchy data of an interface mode on the interface $\Gamma$. We define $u(\bx)$ in the infinite strip $\Omega$ as a combination of single and double layer potentials:
\begin{equation}\label{eq:denstoedge}
u(\bx)= 
\begin{cases}
  [\cD^{\eps,0}(\lambda)  \psi](\bx)  - [\cS^{\eps,0}(\lambda)  \phi](\bx)  \quad \text{on the right of }\Gamma, \\
-[\cD^{0,\delta}(\lambda)  \psi](\bx)  + [\cS^{0,\delta}(\lambda)  \phi](\bx)  \quad \text{on the left of }\Gamma.
\end{cases}
\end{equation}
For $u$ defined above to be an interface mode, we only need the continuity of the value and the normal derivative across $\Gamma$, expressed as:
 \begin{equation}\label{eq:dataright}
 \left(\begin{matrix}
  \cK^{\,0}(\lambda) + \frac{1}{2}\cI & -\cS^{\eps,0}(\lambda) \\
  \cN^{\eps,0}(\lambda) & -\cK^{*,\eps,0}(\lambda) + \frac{1}{2}\cI 
 \end{matrix}\right)  
 \left(\begin{matrix}
 \psi\\
 \phi
 \end{matrix}\right)
 =
 \left(\begin{matrix}
 - \cK^{0,\delta}(\lambda) + \frac{1}{2}\cI & \cS^{0,\delta}(\lambda)   \\
 - \cN^{0,\delta}(\lambda) & \cK^{*,0,\delta}(\lambda) + \frac{1}{2}\cI
 \end{matrix}\right) 
 \left(\begin{matrix}
 \psi\\
 \phi
 \end{matrix}\right)\neq0.
 \end{equation}

%

Define the integral operators on $\HhalfG\times\HmhalfG$ 
\begin{equation}\label{eq:bTeps}
\mathbb T^{\eps,\delta}
(\lambda):=
\left(\begin{matrix}
-\cK^{\eps,\delta}(\lambda) & \cS^{\eps,\delta}(\lambda) \\
 -\cN^{\eps,\delta}(\lambda) & \cK^{*,\eps,\delta}(\lambda)
\end{matrix}\right),
\end{equation}
and
\begin{equation}\label{eq:bTs}
\mathbb T_s^{\eps,\delta}(\lambda):=\mathbb T^{\eps,0}(\lambda) + \mathbb T^{0,\delta}(\lambda), \quad
\mathbb T_t^{\eps,\delta}(\lambda):=- \mathbb T^{\eps,0}(\lambda) + \mathbb T^{0,\delta}(\lambda) + \mathbb I, \quad
\mathbb T_n^{\eps,\delta}(\lambda):=\mathbb T^{\eps,0}(\lambda) - \mathbb T^{0,\delta}(\lambda) + \mathbb I,
\end{equation}
where $\mathbb I$ is the identity operator.
By virtue of \eqref{eq:densuright}, \eqref{eq:densuleft} and \eqref{eq:dataright}, we obtain the following lemma for the interface modes.

\begin{lemma}\label{lem:edgestate} Assume that $(\eps_L,\delta_L)=(0, \delta)$ and $(\eps_R,\delta_R)=(\eps, 0)$.
Let $\mathfrak d\in(0,1)$ and $\lambda \in (\lambda_* - \mathfrak d|t_1\eps|, \lambda_* + \mathfrak d|t_1\eps|)\cap (\lambda_* - \mathfrak d|t_2\delta|, \lambda_* + \mathfrak d|t_2\delta|) $.

\begin{itemize}
    \item [(i)] There exists an interface mode $u$ satisfying \eqref{eq:interfaceeig} if and only if there exists $(\psi,\phi)\in \HhalfG\times\HmhalfG$ such that
\begin{equation}\label{eq:sufficient}
\mathbb T_s^{\eps,\delta}
(\lambda)
\left(\begin{matrix}
\psi\\
\phi
\end{matrix}\right)
=0,\quad  \mathbb T_t^{\eps,\delta}
(\lambda)
\left(\begin{matrix}
\psi\\
\phi
\end{matrix}\right)
\neq0.
\end{equation}
Furthermore, each solution to \eqref{eq:sufficient} yields an interface mode expressed by \eqref{eq:denstoedge}.
\item [(ii)] If $u$ is an interface mode satisfying \eqref{eq:interfaceeig}, then $0\neq( u|_\Gamma, \partial_n u|_\Gamma) \in \HhalfG\times \HmhalfG$ satisfies
\begin{equation}\label{eq:necessary}
\mathbb T_s^{\eps,\delta}
(\lambda)
\left(\begin{matrix}
u|_\Gamma\\
\partial_n u|_\Gamma
\end{matrix}\right)
=0,\quad
\mathbb T_n^{\eps,\delta}
(\lambda)
\left(\begin{matrix}
u|_\Gamma\\
\partial_n u|_\Gamma
\end{matrix}\right)
=0.
\end{equation}
\end{itemize}
\end{lemma}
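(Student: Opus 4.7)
The proof is essentially a bookkeeping exercise based on the Green's representation \eqref{eq:tracetoedge}, the potential ansatz \eqref{eq:denstoedge}, and the jump relations \eqref{eq:jumpeps}; the key observation is that each row of $\mathbb T_s^{\eps,\delta}$ encodes a continuity condition across $\Gamma$, each row of $\mathbb T_n^{\eps,\delta}$ encodes one of the two self-consistency identities coming from the Green's representation, and each row of $\mathbb T_t^{\eps,\delta}$ recovers twice the Cauchy data of the reconstructed field on $\Gamma$.

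\textbf{Part (ii).} Starting from an interface mode $u$, I would first apply \eqref{eq:tracetoedge} on each side of $\Gamma$, take the two-sided trace and conormal trace via \eqref{eq:jumpeps}, and arrive at the self-consistency systems \eqref{eq:densuright}--\eqref{eq:densuleft} applied to $(u|_\Gamma,\partial_n u|_\Gamma)$. Moving the identity to the left in each system gives
\[
\Bigl(\mathbb T^{\eps,0}(\lambda)+\tfrac{1}{2}\mathbb I\Bigr)\begin{pmatrix}u|_\Gamma\\ \partial_n u|_\Gamma\end{pmatrix}=0,
\qquad
\Bigl(-\mathbb T^{0,\delta}(\lambda)+\tfrac{1}{2}\mathbb I\Bigr)\begin{pmatrix}u|_\Gamma\\ \partial_n u|_\Gamma\end{pmatrix}=0,
\]
which, after summing and subtracting, become exactly $\mathbb T_s^{\eps,\delta}(\lambda)(u|_\Gamma,\partial_n u|_\Gamma)^T=0$ and $\mathbb T_n^{\eps,\delta}(\lambda)(u|_\Gamma,\partial_n u|_\Gamma)^T=0$, using the definitions in \eqref{eq:bTs}. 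Nontriviality of $(u|_\Gamma,\partial_n u|_\Gamma)$ follows because if both traces vanish, the zero extension of $u$ from the right half-strip would be an $H^1$ eigenfunction of $\cL(\eps,0)$ at $\lambda$, contradicting that $\lambda$ lies in the spectral gap of $\cL(\eps,0)$ along $\bp(\ell)=\Kone+\ell\bbeta_1$ guaranteed by Remark~\ref{lem:gap}.

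\textbf{Part (i), sufficiency.} Given $(\psi,\phi)\in\HhalfG\times\HmhalfG$ satisfying \eqref{eq:sufficient}, define $u$ by \eqref{eq:denstoedge}. By standard layer potential theory and the exponential decay of $G^{\eps,\delta}$ away from $\Gamma$, the function $u$ solves $(\cL(\eps_R,\delta_R)-\lambda)u=0$ on the right half-strip and $(\cL(\eps_L,\delta_L)-\lambda)u=0$ on the left, is $\kp^*$-quasi-periodic along $\be_2$, and decays as $|\bx\cdot\be_1|\to\infty$. Applying the jump relations \eqref{eq:jumpeps} to compute the two-sided Dirichlet and Neumann traces of $u$ on $\Gamma$, the continuity condition $u|_{\Gamma^+}=u|_{\Gamma^-}$ becomes the first component of $\mathbb T_s^{\eps,\delta}(\psi,\phi)^T=0$, while $\partial_n u|_{\Gamma^+}=\partial_n u|_{\Gamma^-}$ becomes the second component; hence $u\in\cH^1_{\kp^*}(\Omega)$ and satisfies \eqref{eq:interfaceeig}. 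A direct calculation with the same jump relations shows
\[
\mathbb T_t^{\eps,\delta}(\lambda)\begin{pmatrix}\psi\\ \phi\end{pmatrix}
=\begin{pmatrix}u|_{\Gamma^+}+u|_{\Gamma^-}\\ \partial_n u|_{\Gamma^+}+\partial_n u|_{\Gamma^-}\end{pmatrix}
=2\begin{pmatrix}u|_\Gamma\\ \partial_n u|_\Gamma\end{pmatrix},
\]
so the second part of \eqref{eq:sufficient} is equivalent to the Cauchy data of $u$ on $\Gamma$ being nonzero, which by the unique continuation argument used in Part (ii) is equivalent to $u\not\equiv0$. Necessity in (i) then follows by taking $(\psi,\phi)=(u|_\Gamma,\partial_n u|_\Gamma)$.

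\textbf{Expected obstacle.} The only delicate point is the equivalence between the nontriviality condition $\mathbb T_t^{\eps,\delta}(\psi,\phi)^T\neq 0$ and $u\not\equiv 0$; this is where the spectral-gap hypothesis on $\lambda$ enters, ruling out nonzero $H^1$-solutions in a half-strip with vanishing Cauchy data via a zero-extension plus absence of $\lambda$ in the spectrum of $\cL(\eps_R,\delta_R)$ (and $\cL(\eps_L,\delta_L)$) on the $\kp^*$-quasi-periodic strip. Everything else is algebraic manipulation of the block matrices in \eqref{eq:bTeps}--\eqref{eq:bTs} combined with the jump formulas.
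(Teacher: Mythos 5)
Your proposal is correct and follows essentially the same route as the paper, which obtains the lemma directly from the self-consistency systems \eqref{eq:densuright}--\eqref{eq:densuleft} (Green's representation plus jump relations) and the transmission conditions \eqref{eq:dataright} for the ansatz \eqref{eq:denstoedge}, with your identity $\mathbb T_t^{\eps,\delta}(\psi,\phi)^T=2(u|_\Gamma,\partial_n u|_\Gamma)^T$ and the spectral-gap argument making the nontriviality step explicit. The only cosmetic point is that in the nontriviality argument the zero-extension reasoning should be applied on both half-strips (with $\cL(\eps,0)$ on the right and $\cL(0,\delta)$ on the left), not just the right one, to conclude $u\equiv0$.
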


\begin{remark}\label{lem:T}
At $\Ktwo$, we define $\mathbb T^{\eps,\delta,\prime}
(\lambda)$,  $\mathbb T^{\eps,\delta,\prime}_s$, $\mathbb T^{\eps,\delta,\prime}_t$ and $\mathbb T^{\eps,\delta,\prime}_n$ in parallel to those defined in \eqref{eq:bTeps} and \eqref{eq:bTs}, where the Green functions  $G^{\eps,\delta}$ in the layer potentials are replaced by $G^{\eps,\delta,\prime}$.
We also define $\tilde{\mathbb T}^{0,0,\prime} (\lambda_*)$ and $\tilde{\mathbb T}^{0,0,\prime} (\lambda_*)$ parallel to \eqref{eq:bTeps} when the Green functions in the layer potentials are replaced by 
$\tilde{G}^{0,0,\prime}(\bx,\by; \lambda_*)$ and $ \tilde{G}^{0,0,\prime}(\bx,\by; \lambda_*)$, respectively.
\end{remark}

\begin{remark}
The integral-equation formulation can also be set up when $(\eps_L,\delta_L)=(0, \delta)$ and $(\eps_R,\delta_R)=(0, -\delta)$, or when
$(\eps_L,\delta_L)=(\eps,0)$ and $(\eps_R,\delta_R)=(-\eps, 0)$. For the former, Lemma~\ref{lem:edgestate} holds with the integral operators defined by
\begin{equation*}
\mathbb T_s^{\eps,\delta}(\lambda):=\mathbb T^{-\eps,0}(\lambda) + \mathbb T^{\eps,0}(\lambda), \quad
\mathbb T_t^{\eps,\delta}(\lambda):=- \mathbb T^{-\eps,0}(\lambda) + \mathbb T^{\eps,0}(\lambda) + \mathbb I, \quad
\mathbb T_n^{\eps,\delta}(\lambda):=\mathbb T^{-\eps,0}(\lambda) - \mathbb T^{\eps,0}(\lambda) + \mathbb I.
\end{equation*}
For the latter, Lemma~\ref{lem:edgestate} holds with the integral operators defined by
\begin{equation*}
\mathbb T_s^{\eps,\delta}(\lambda):=\mathbb T^{0,-\delta}(\lambda) + \mathbb T^{0,\delta}(\lambda), \quad
\mathbb T_t^{\eps,\delta}(\lambda):=- \mathbb T^{0,-\delta}(\lambda) + \mathbb T^{0,\delta}(\lambda) + \mathbb I, \quad
\mathbb T_n^{\eps,\delta}(\lambda):=\mathbb T^{0,-\delta}(\lambda) - \mathbb T^{0,\delta}(\lambda) + \mathbb I.
\end{equation*}
\end{remark}

\subsection{The limiting operators}
We derive asymptotic expansions for the integral operators $\mathbb T^{\eps,\delta}$, $\mathbb T_s^{\eps,\delta}$, $\mathbb T_t^{\eps,\delta}$,  $\mathbb T_n^{\eps,\delta}$, $\mathbb T^{\eps,\delta,\prime}$, $\mathbb T_s^{\eps,\delta,\prime}$, $\mathbb T_t^{\eps,\delta,\prime}$, and $\mathbb T_n^{\eps,\delta,\prime}$ in this subsection. To this end, we first introduce several notations. For $\vec\phi = (\psi,\phi)\in\HhalfG\times\HmhalfG$, at $\Kone$, let
\begin{equation}\label{eq:ci}
c_i(\vec\phi):= \overline{\langle \phi,v_i\rangle}_\Gamma - \langle\partial_n v_i, \psi\rangle_\Gamma,
\end{equation}
where $v_i$ are defined in Remark~\ref{lem:phase}.
We also denote
\begin{equation}\label{eq:vecv}
\vec v_i:=
\left(\begin{matrix}v_i|_\Gamma \\\partial_n v_i|_\Gamma\end{matrix}\right),\quad
i=1,2,
\end{equation}
and define the operators
\begin{equation}\label{eq:P}
\mathbb P
\vec\phi : =
c_1(\vec\phi)\vec v_1
+c_2(\vec\phi)\vec v_2,
\end{equation}
\begin{equation}\label{eq:Q}
\mathbb Q
\vec\phi 
:=
c_2(\vec\phi)\vec v_1
+c_1(\vec\phi)\vec v_2.
\end{equation}
%
Similarly at $\Ktwo$, define
\begin{equation}\label{eq:citwo}
c_i'(\vec\phi):= \overline{\langle \phi,v_i'\rangle}_\Gamma - \langle\partial_n v_i', \psi\rangle_\Gamma,
\end{equation}
where $v_i'$ are defined in Remark~\ref{lem:phase}.
We also denote
\begin{equation}\label{eq:vecvtwo}
\vec v_i':=
\left(\begin{matrix}v_i'|_\Gamma \\\partial_n v_i'|_\Gamma\end{matrix}\right),\quad
i=1,2,
\end{equation}
and define the operators
\begin{equation}\label{eq:Ptwo}
\mathbb P'
\vec\phi : =
c_1'(\vec\phi)\vec v_1'
+c_2'(\vec\phi)\vec v_2',
\end{equation}
\begin{equation}\label{eq:Qtwo}
\mathbb Q'
\vec\phi 
:=
c_2'(\vec\phi)\vec v_1'
+c_1'(\vec\phi)\vec v_2'.
\end{equation}

Define the functions
\begin{equation}\label{eq:betaxi}
\begin{aligned}
    & \beta_1(h):=   \frac{1}{2|\theta_*|}\frac{h}{\sqrt{t_1^2-h^2}}, \quad  \beta_2(h):=   \frac{1}{2|\theta_*|}\frac{h}{\sqrt{t_2^2-h^2}}, \\
    & \xi_1(h):=  \left|\frac{t_1}{2\theta_*}\right|\frac{1}{\sqrt{t_1^2-h^2}},\quad \xi_2(h):=  \left|\frac{t_2}{2\theta_*}\right|\frac{1}{\sqrt{t_2^2-h^2}}. 
\end{aligned}
\end{equation}
%

Following the procedure in Section~\ref{sec:asympGreen}, one can obtain the limiting operators of $\cS^{\eps,\delta}$,  $\cK^{\eps,\delta}$, $\cK^{*,\eps,\delta}$, and $\cN^{\eps,\delta}$ when $\eps\to0$ or $\delta\to0$. In particular, the limit of the integral operator $\mathbb T^{\eps,\delta}$ is summarized in the following propositions. The readers are referred to \cite{li2024} for detailed calculations when $\delta=0$.
\begin{prop} \label{lem:oplim}
Let Assumption~\ref{lem:assNoFold} holds along $\bbeta_1$ and $t_1\neq0$. Let $\mathfrak d\in(0,1)$ be a constant. 
Then the following limit holds uniformly for $h\in\mathbb C$ that satisfy $|h|< \mathfrak d|t_1|$ as $t_1\eps\to 0^\pm$: 
\begin{equation}\label{eq:oplim}
\mathbb T^{\eps,0}(\lambda_*+ |\eps| h) 
\to \tilde{\mathbb T}^{0,0} (\lambda_*) 
+ \beta_1(h)\mathbb P \pm \xi_1(h)\mathbb Q ,
\end{equation}
where
the convergence is understood with the operator norm from $\HhalfG\times\HmhalfG$ to $\HhalfG\times\HmhalfG$. 
\end{prop}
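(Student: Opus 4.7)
The plan is to reduce the matrix operator $\mathbb T^{\eps,0}$ to a single spectral representation and then apply the three-region decomposition of Section~\ref{sec:asympGreen} already used to establish the single-layer asymptotic \eqref{eq:PTeps}. Each of the four scalar operators $\cS^{\eps,0}$, $\cK^{\eps,0}$, $\cK^{*,\eps,0}$, $\cN^{\eps,0}$ inherits a representation of the form \eqref{eq:Green}, with $v_{n,\eps,0}(\bx;\bp(\ell))$ replaced by its trace or by its normal derivative on $\Gamma$ in the appropriate variable. Stacking the four entries into one column, the coefficient of the vector mode $\vec v_{n,\eps,0}:=(v_{n,\eps,0}|_\Gamma,\partial_n v_{n,\eps,0}|_\Gamma)^T$ in $\mathbb T^{\eps,0}(\lambda)\vec\phi$ becomes exactly the functional $\overline{\langle\phi,v_{n,\eps,0}\rangle}-\langle\partial_n v_{n,\eps,0},\psi\rangle$, which at $\eps=0$ and $\bp(\ell)=\Kone$ reduces to $c_n(\vec\phi)$ from \eqref{eq:ci} for $n=1,2$.

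Next, I split the $\ell$-integral into the three pieces used in Section~\ref{sec:asympGreen}: (a) $|\ell|\le|\eps|^{1/3}$ with $n\in\{1,2\}$, (b) $|\ell|>|\eps|^{1/3}$ with $n\in\{1,2\}$, and (c) $n\ge 3$ with $\ell\in[-\pi,\pi]$. In regions (b) and (c), Assumption~\ref{lem:assNoFold} keeps the denominator $\mu_{n,\eps,0}(\bp(\ell))-\lambda_*-|\eps|h$ uniformly away from zero when $|h|<\mathfrak d|t_1|$, so the continuity of the eigenpairs in $\eps$ combined with dominated convergence yields norm convergence of these parts to the corresponding entries of $\tilde{\mathbb T}^{0,0}(\lambda_*)$, the matrix built from the Green function $\tilde G^{0,0}$ of \eqref{eq:tildeG0}.

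The substantive contribution comes from region (a), which carries the Dirac degeneracy. Rescaling $\ell=|\eps|s$, inserting the expansions of Proposition~\ref{lem:Koneeps} for $\mu_{n,\eps,0}$ and $v_{n,\eps,0}$ (up to the harmless phase factors of Remark~\ref{lem:phase}), and using the change of basis \eqref{eq:winv} between $\{w_1,w_2\}$ and $\{v_1,v_2\}$, the integrand becomes an explicit rational function of $L_1(\eps,\ell,0)$, $\sqrt{t_1^2+|\theta_*|^2\ell^2}$ and $h$. Cross terms that are linear in $L_1$ are odd in $s$ and integrate to zero. What remains yields two contributions to $\mathbb T^{\eps,0}\vec\phi$: a diagonal one of the form $\beta_1(h)\bigl[c_1(\vec\phi)\vec v_1+c_2(\vec\phi)\vec v_2\bigr]=\beta_1(h)\,\mathbb P\vec\phi$, and an off-diagonal one of the form $\pm\xi_1(h)\bigl[c_2(\vec\phi)\vec v_1+c_1(\vec\phi)\vec v_2\bigr]=\pm\xi_1(h)\,\mathbb Q\vec\phi$. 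The sign $\pm=\sgn(t_1\eps)$ emerges because Proposition~\ref{lem:Koneeps} swaps the two Bloch modes $u_{1,\eps,0}$ and $u_{2,\eps,0}$ between \eqref{eq:uepsp} and \eqref{eq:uepsm} when $t_1\eps$ changes sign, so the off-diagonal coupling between $\vec v_1$ and $\vec v_2$ reverses while the diagonal coupling does not.

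The main obstacle is securing uniformity in $h$ on the disk $|h|<\mathfrak d|t_1|$. This is controlled by the elementary bound $\sqrt{t_1^2-h^2}\ge|t_1|\sqrt{1-\mathfrak d^2}>0$ on that disk, which makes both $\beta_1(h)$ and $\xi_1(h)$ bounded and holomorphic and which ensures that the remainders $O(\eps,\ell)$ produced by the Bloch-mode expansions in Proposition~\ref{lem:Koneeps} generate $o(1)$ error terms in operator norm with constants independent of $h$. The singular scalar integral in region (a), carried out after rescaling along the lines of the single-layer computation in \cite{li2024}, then assembles into \eqref{eq:oplim} in the operator norm on $\HhalfG\times\HmhalfG$.
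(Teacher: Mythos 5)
Your proposal is correct and follows essentially the same route as the paper, which proves Propositions~\ref{lem:oplim}--\ref{lem:oplimdtwo} by extending the three-region decomposition of Section~\ref{sec:asympGreen} (Dirac region $|\ell|\le|\eps|^{1/3}$, the remaining part of the first two bands, and the bands $n\ge3$) from the single-layer operator to all four boundary integral operators, with detailed calculations deferred to \cite{li2024}. Your stacking of the four entries so that the coefficient of $\vec v_{n,\eps,0}$ is the functional $\overline{\langle\phi,v_{n,\eps,0}\rangle}-\langle\partial_n v_{n,\eps,0},\psi\rangle$ (reducing to $c_n(\vec\phi)$), the odd-in-$\ell$ cancellation of the $L_1$-linear cross terms, the sign mechanism via the mode swap in Proposition~\ref{lem:Koneeps}, and the uniformity in $h$ from $\sqrt{t_1^2-h^2}\ge|t_1|\sqrt{1-\mathfrak d^2}$ all match the paper's argument.
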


\begin{prop} \label{lem:oplimtwo}
Let Assumption~\ref{lem:assNoFold} holds along $\bbeta_1$ and $t_1\neq0$. Let $\mathfrak d\in(0,1)$ be a constant. 
Then the following limit holds uniformly for $h\in\mathbb C$ that satisfy $|h|< \mathfrak d|t_1|$ as $t_1\eps\to 0^\pm$: 
\begin{equation}\label{eq:oplimtwo}
\mathbb T^{\eps,0,\prime}(\lambda_*+ |\eps| h) 
\to \tilde{\mathbb T}^{0,0,\prime} (\lambda_*) 
+ \beta_1(h)\mathbb P' \mp \xi_1(h)\mathbb Q',
\end{equation}
where
the convergence is understood with the operator norm from $\HhalfG\times\HmhalfG$ to $\HhalfG\times\HmhalfG$. 
\end{prop}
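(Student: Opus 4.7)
The plan is to mirror the proof of Proposition \ref{lem:oplim} step by step, replacing every quantity attached to the Dirac point $\Kone$ with its counterpart at $\Ktwo$. I would begin from the spectral representation of the Green function $G^{\eps,0,\prime}(\bx,\by;\lambda)$ in Remark \ref{lem:Greenp}, obtained from \eqref{eq:Green} by using $\bp(\ell):=\Ktwo+\ell\bbeta_1$ and the Bloch modes $v_1',v_2'$ from Lemma \ref{lem:0stripKtwo}. Substituting $\lambda=\lambda_*+|\eps|h$ with $|h|<\mathfrak d|t_1|$, the single layer potential $\cS^{\eps,0,\prime}(\lambda)$ admits a representation of the same form as the one displayed for $\cS^{\eps,0}$ in Section \ref{sec:asympGreen}, with $v_{n,\eps,0}(\cdot;\bp(\ell))$ replaced by the $\Ktwo$-centered Bloch modes described in Proposition \ref{lem:Ktwoeps}.

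I would then split the $\ell$-integral for the first two bands into the inner region $|\ell|\le|\eps|^{1/3}$ and the outer region $|\ell|>|\eps|^{1/3}$, and treat the higher bands $n\ge 3$ separately, exactly as in Section \ref{sec:asympGreen}. On the outer region the dispersion relations \eqref{eq:lamepstwo} remain uniformly bounded away from $\lambda_*+|\eps|h$, so dominated convergence yields the p.v.\ part of $\tilde{\cS}^{0,0,\prime}(\lambda_*)$; the higher bands converge by a straightforward analytic perturbation argument. On the inner region I would substitute the expansions \eqref{eq:uepsptwo}-\eqref{eq:uepsmtwo}, rescale by $\ell=(|\eps t_1|/|\theta_*|)s$, and compute the leading residual integrals explicitly. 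Symmetrizing in $s$ kills the linear-in-$L_1$ cross terms, and the surviving scalar coefficients are exactly $\beta_1(h)$ (from the diagonal contributions $\overline{\langle\phi,v_i'\rangle}v_i'$) and $\xi_1(h)$ (from the off-diagonal contributions $\overline{\langle\phi,v_1'\rangle}v_2'+\overline{\langle\phi,v_2'\rangle}v_1'$).

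The key conceptual point — and the source of the sign reversal $\pm\mapsto\mp$ in \eqref{eq:oplimtwo} relative to \eqref{eq:oplim} — lies in comparing Propositions \ref{lem:Koneeps} and \ref{lem:Ktwoeps}. At $\Kone$ with $t_1\eps>0$ the lower band has leading part $-\overline{L_1}w_1+w_2$, whereas at $\Ktwo$ under the same sign condition the lower band is instead $w_1'+L_1 w_2'$. In other words, the roles of the upper and lower band labelings are swapped between $\Kone$ and $\Ktwo$. This swap propagates through the inner-region computation, flipping the sign of the off-diagonal coefficient attached to $\mathbb Q'$ while leaving the diagonal coefficient in front of $\mathbb P'$ invariant; combined with the replacement $v_i\rightsquigarrow v_i'$ in definitions \eqref{eq:Ptwo}-\eqref{eq:Qtwo}, this yields the stated asymptotic for $\cS^{\eps,0,\prime}$.

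Finally, I would lift the expansion of $\cS^{\eps,0,\prime}$ to the full matrix $\mathbb T^{\eps,0,\prime}$ defined through Remark \ref{lem:T} and \eqref{eq:bTeps}. Each of $\cK^{\eps,0,\prime}$, $\cK^{*,\eps,0,\prime}$, $\cN^{\eps,0,\prime}$ admits an analogous spectral representation differing only by normal-derivative traces applied to the Bloch modes; repeating the three-region decomposition yields parallel expansions that assemble into $\mathbb P'$ and $\mathbb Q'$. The main technical obstacle, already present in the $\Kone$ case, is to ensure the convergence is uniform in $h$ on the closed disk $|h|\le\mathfrak d|t_1|$ and in the operator norm from $\HhalfG\times\HmhalfG$ to itself. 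This requires uniform control of the trace pairings $\langle\phi,v_{n,\eps,0}'\rangle_\Gamma$ and $\langle\partial_n v_{n,\eps,0}',\psi\rangle_\Gamma$ across the band index and across $\ell$, together with Assumption \ref{lem:assNoFold} applied along $\bbeta_1$ to exclude any additional band crossings of $\lambda_*$ that could produce extra singular contributions spoiling dominated convergence on the outer region.
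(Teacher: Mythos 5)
Your proposal follows essentially the same route as the paper: the paper proves Proposition~\ref{lem:oplimtwo} by repeating the three-region decomposition of Section~\ref{sec:asympGreen} (inner region $|\ell|\le|\eps|^{1/3}$, outer region, higher bands) with the $\Ktwo$-centered Green function and the Bloch-mode expansions of Proposition~\ref{lem:Ktwoeps}, deferring the detailed trace estimates to \cite{li2024}. Your identification of the sign flip $\pm\mapsto\mp$ with the swap of the upper/lower band eigenfunction forms between $\Kone$ and $\Ktwo$ under a fixed sign of $t_1\eps$ is exactly the mechanism the paper records in Remark~\ref{lem:formcorres} via the $(1,1)$ entry $-t_1$ in \eqref{eq:w_depsL_w_K'}, so the argument is correct and matches the intended proof.
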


\begin{prop} \label{lem:oplimdone}
Let Assumption~\ref{lem:assNoFold} holds along $\bbeta_1$ and $t_2\neq0$. Let $\mathfrak d\in(0,1)$ be a constant. 
Then the following limit holds uniformly for $h\in\mathbb C$ that satisfy $|h|< \mathfrak d|t_2|$ as $t_2\delta\to 0^\pm$: 
\begin{equation}\label{eq:oplimdone}
\mathbb T^{0,\delta}(\lambda_*+ |\delta| h) 
\to \tilde{\mathbb T}^{0,0} (\lambda_*) 
+ \beta_2(h)\mathbb P \pm \xi_2(h)\mathbb Q,
\end{equation}
where
the convergence is understood with the operator norm from $\HhalfG\times\HmhalfG$ to $\HhalfG\times\HmhalfG$. 
\end{prop}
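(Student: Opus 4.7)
The plan is to mirror the proof of Proposition~\ref{lem:oplim}, exploiting the fact that the local spectral structure near the Dirac point $(\Kone,\lambda_*)$ for the $\delta$-perturbed operator $\cL(0,\delta)$ given in Proposition~\ref{lem:Konedelta} is algebraically identical to that of the $\eps$-perturbed operator in Proposition~\ref{lem:Koneeps} under the substitutions $\eps\leftrightarrow\delta$, $t_1\leftrightarrow t_2$, $L_1\leftrightarrow L_2$. Since~\eqref{eq:PTdelta} has already established the asymptotic expansion of the single-layer operator $\cS^{0,\delta}$ in exactly the form predicted by this correspondence, it remains to upgrade that scalar analysis into an identity for the full $2\times 2$ block operator $\mathbb T^{0,\delta}$.

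Concretely, I would first write each of the four components $\cS^{0,\delta}(\lambda_*+|\delta|h)$, $\cK^{0,\delta}(\lambda_*+|\delta|h)$, $\cK^{*,0,\delta}(\lambda_*+|\delta|h)$, $\cN^{0,\delta}(\lambda_*+|\delta|h)$ via the spectral representation~\eqref{eq:Green} of $G^{0,\delta}$; for $\cK$, $\cK^*$, $\cN$ the differentiation in $\by$ or $\bx$ is pulled onto the Bloch modes, replacing $v_{n,0,\delta}$ or $\overline{v_{n,0,\delta}}$ by the corresponding normal trace $\partial_n v_{n,0,\delta}|_\Gamma$. I would then split each $\ell$-integral over $[-\pi,\pi]$ into the three regimes used in Section~\ref{sec:asympGreen}: the local window $|\ell|\le|\delta|^{1/3}$ around $\Kone$ for the two bifurcating bands $n=1,2$; the complement $|\ell|>|\delta|^{1/3}$ for the same two bands; and the remaining higher bands $n\ge 3$.

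For the local window, I substitute the expansions of $\mu_{n,0,\delta}(\bp(\ell))$ and $v_{n,0,\delta}(\bx;\bp(\ell))$ from Proposition~\ref{lem:Konedelta} (together with Remark~\ref{lem:periodic_Bloch} to recover the normal traces), rescale $\ell = |\delta|\tilde\ell$, and use the odd-in-$\ell$ structure of $L_2(\delta,\ell,0)$ to kill cross terms; the remaining integrals are the elementary ones already evaluated in the derivation of~\eqref{eq:PTdelta}, producing the diagonal contribution $\beta_2(h)\mathbb P$ and the off-diagonal contribution $\pm\xi_2(h)\mathbb Q$, with the sign $\pm=\sgn(t_2\delta)$ arising from the exchange of coefficients $(-\overline{L_2},1)$ versus $(1,-L_2)$ between~\eqref{eq:udeltap} and~\eqref{eq:udeltam}. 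For the complement and for higher bands, the denominators $\mu_{n,0,\delta}(\bp(\ell))-\lambda_*-|\delta|h$ remain uniformly bounded away from zero as $\delta\to 0$---using Assumption~\ref{lem:assNoFold} along $\bbeta_1$ to exclude any other band from attaining $\lambda_*$---so dominated convergence delivers the two principal-value pieces and the regular piece that together comprise $\tilde{\mathbb T}^{0,0}(\lambda_*)$.

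The main obstacle will be promoting these pointwise limits to convergence in the operator norm from $\HhalfG\times\HmhalfG$ to $\HhalfG\times\HmhalfG$, uniformly for $h$ in the compact disk $|h|\le\mathfrak d|t_2|$. The most delicate component is the hypersingular operator $\cN^{0,\delta}$, whose spectral kernel contains outer products $\overline{\partial_n v_{n,0,\delta}|_\Gamma}\otimes \partial_n v_{n,0,\delta}|_\Gamma$; one must verify uniform-in-$\delta$ bounds on $\partial_n v_{n,0,\delta}|_\Gamma$ in $\HmhalfG$ so that the duality pairings $\langle \partial_n v_i,\psi\rangle_\Gamma$ entering the coefficients $c_i(\vec\phi)$ in~\eqref{eq:ci} emerge cleanly in the limit. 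These uniform bounds follow the same scheme as the $\eps$-case developed in~\cite{li2024}, the principal ingredient being elliptic regularity for the Bloch eigenvalue problem \eqref{eq:disp_zig2} combined with the analytic labeling~\eqref{eq:analyticpair}. Once the uniform estimate is secured, assembling the four block entries reconstitutes the limiting operator $\tilde{\mathbb T}^{0,0}(\lambda_*)+\beta_2(h)\mathbb P\pm\xi_2(h)\mathbb Q$ asserted in~\eqref{eq:oplimdone}.
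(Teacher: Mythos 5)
Your proposal is correct and follows essentially the same route as the paper: the paper proves this proposition by invoking the three-regime decomposition of the spectral representation from Section~\ref{sec:asympGreen} (local window at the Dirac point, the rest of the two bifurcating bands, and the higher bands), with the expansions of Proposition~\ref{lem:Konedelta} replacing those of Proposition~\ref{lem:Koneeps}, and defers the uniform operator-norm estimates for all four block entries (including the hypersingular one) to the detailed calculations in \cite{li2024}. Your treatment of the sign $\sgn(t_2\delta)$ and of the passage from the single-layer expansion \eqref{eq:PTdelta} to the full block operator matches the paper's intended argument.
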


\begin{prop} \label{lem:oplimdtwo}
Let Assumption~\ref{lem:assNoFold} holds along $\bbeta_1$ and $t_2\neq0$. Let $\mathfrak d\in(0,1)$ be a constant. 
Then the following limit holds uniformly for $h\in\mathbb C$ that satisfy $|h|< \mathfrak d|t_2|$ as $t_2\delta\to 0^\pm$: 
\begin{equation}\label{eq:oplimdtwo}
\mathbb T^{0,\delta,\prime}(\lambda_*+ |\delta| h) 
\to \tilde{\mathbb T}^{0,0,\prime} (\lambda_*) 
+ \beta_2(h)\mathbb P' \pm \xi_2(h)\mathbb Q', 
\end{equation}
where
the convergence is understood with the operator norm from $\HhalfG\times\HmhalfG$ to $\HhalfG\times\HmhalfG$. 
\end{prop}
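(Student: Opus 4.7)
The plan is to mirror the proof of Proposition~\ref{lem:oplimdone} verbatim, replacing the Dirac point $\Kone$ by $\Ktwo$, the Bloch modes $(w_1,w_2)$ and $(v_1,v_2)$ by $(w_1',w_2')$ and $(v_1',v_2')$, and using the expansions from Proposition~\ref{lem:Ktwodelta} and the basis change \eqref{eq:winvtwo} in place of their $\Kone$-counterparts. The starting point is the spectral representation of the primed Green function from Remark~\ref{lem:Greenp}, which gives for $\vec\phi=(\psi,\phi)\in\HhalfG\times\HmhalfG$
\begin{equation*}
\mathbb T^{0,\delta,\prime}(\lambda)\vec\phi
= \frac{1}{2\pi}\sum_{n\geq1}\int_{-\pi}^{\pi}
\frac{\bigl(\overline{\langle\phi,v'_{n,0,\delta}(\cdot;\bp(\ell))\rangle} - \langle\partial_n v'_{n,0,\delta}(\cdot;\bp(\ell)),\psi\rangle\bigr)\,\vec v'_{n,0,\delta}(\ell)}{\mu_{n,0,\delta}(\bp(\ell)) - \lambda}\,d\ell,
\end{equation*}
where $\bp(\ell)=\Ktwo+\ell\bbeta_1$ and $\vec v'_{n,0,\delta}$ denotes the Cauchy data on $\Gamma$. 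Assumption~\ref{lem:assNoFold} along $\bbeta_1$ guarantees that, uniformly for $|h|<\mathfrak d|t_2|$, the only singular contributions as $\delta\to0$ and $\lambda=\lambda_*+|\delta|h$ come from the two bands bifurcating from the valley at $\Ktwo$.

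Following Section~\ref{sec:asympGreen}, I would split the $\ell$-integral into the three regimes: (a) $n=1,2$ with $|\ell|\le|\delta|^{1/3}$, (b) $n=1,2$ with $|\ell|>|\delta|^{1/3}$, and (c) $n\ge3$. In regime (a), substitute the two-band dispersion and eigenfunction expansions from Proposition~\ref{lem:Ktwodelta} and rescale $\ell=|\delta|s$. After rewriting the leading modes $w_1',w_2'$ in the $v_1',v_2'$ basis via \eqref{eq:winvtwo}, the diagonal part of the integrand produces pairings of $\overline{\langle\phi,v_i'\rangle}-\langle\partial_n v_i',\psi\rangle=c_i'(\vec\phi)$ with $\vec v_i'$, while the off-diagonal cross-term pairs $c_1'(\vec\phi)$ with $\vec v_2'$ and $c_2'(\vec\phi)$ with $\vec v_1'$. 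The one-dimensional integrals
\begin{equation*}
\int_{\mathbb R}\frac{h\,ds}{(t_2^2+|\theta_*|^2 s^2)-h^2},
\qquad
\int_{\mathbb R}\frac{|t_2|\,ds}{(t_2^2+|\theta_*|^2 s^2)-h^2}
\end{equation*}
reproduce exactly the prefactors $\beta_2(h)$ and $\xi_2(h)$ of \eqref{eq:betaxi}, yielding the contribution $\beta_2(h)\mathbb P'\pm\xi_2(h)\mathbb Q'$. In regimes (b) and (c) the denominators $\mu_{n,0,\delta}(\bp(\ell))-\lambda$ stay uniformly bounded away from zero as $\delta\to0$, so dominated convergence, combined with the principal-value structure introduced in \eqref{eq:tildeG0} for the first two bands, lets me pass to the limit and assemble precisely $\tilde{\mathbb T}^{0,0,\prime}(\lambda_*)$.

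The delicate point will be to verify that the sign in front of $\xi_2(h)\mathbb Q'$ is the same as at $\Kone$, i.e.\ $+\xi_2(h)\mathbb Q'$ when $t_2\delta>0$ and $-\xi_2(h)\mathbb Q'$ when $t_2\delta<0$, rather than flipping as happens for the $\eps$-perturbation between Propositions~\ref{lem:oplim} and \ref{lem:oplimtwo}. This is consistent with the physical picture that the $\delta$-perturbation breaks time-reversal symmetry and therefore acts with the same sign at both valleys, whereas the $\eps$-perturbation preserves time-reversal and acts with opposite signs. Operationally, the sign must be tracked through three sources that partially cancel: the explicit sign differences between \eqref{eq:udeltap}--\eqref{eq:udeltam} at $\Kone$ and \eqref{eq:udeltaptwo}--\eqref{eq:udeltamtwo} at $\Ktwo$, the swapped $\pm w_2'$ position in the basis change \eqref{eq:winvtwo} versus \eqref{eq:winv}, and the orientation of the induced cross-couplings. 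A careful bookkeeping of these three flips, identical in spirit to the one in Proposition~\ref{lem:oplimtwo}, will produce the claimed $\pm\xi_2(h)\mathbb Q'$. Uniformity in $h$ over $\{|h|<\mathfrak d|t_2|\}$ follows because the limiting operator is analytic in $h$ on this set, bounded away from the branch points $h=\pm|t_2|$ by the choice $\mathfrak d<1$.
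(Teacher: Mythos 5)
Your proposal is correct and follows essentially the same route the paper takes for this proposition, namely the three-regime decomposition of Section~\ref{sec:asympGreen} applied to the primed Green function of Remark~\ref{lem:Greenp}, using the expansions of Proposition~\ref{lem:Ktwodelta} and the basis change \eqref{eq:winvtwo}; your two model integrals do reproduce $\beta_2(h)$ and $\xi_2(h)$ once the $\frac{1}{2\pi}$ prefactor of the spectral representation is kept. The sign you flag as delicate indeed comes out as stated: since the $(1,1)$ entry of the $\delta$-derivative matrix is $+t_2$ at both $K$ and $K'$ (see \eqref{eq:w_ddeltaL_w_K'} and Remark~\ref{lem:formcorres}), for $t_2\delta>0$ the two-band contribution carries coefficient $\beta_2+\xi_2$ on $\overline{w_1'(\by)}w_1'(\bx)$ and $\beta_2-\xi_2$ on $\overline{w_2'(\by)}w_2'(\bx)$, and the conversion through \eqref{eq:winvtwo} then yields $+\xi_2(h)\mathbb Q'$, in agreement with \eqref{eq:oplimdtwo}.
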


\begin{remark} \label{lem:formcorres}
It is interesting to note that
the form of the limiting operators in Propositions~\ref{lem:oplim} through \ref{lem:oplimdtwo} are closely related to the $\eps$ and $\delta$ derivatives of the elliptic operator given in \eqref{eq:Tderiv1} and \eqref{eq:Tderiv2}, or the relative signs of the Berry curvatures at $\Kone$ and $\Ktwo$ by Proposition \ref{lem:Berry}.
More precisely, the sign for the multiplication operators $\xi_1(h)$ and $\xi_2(h)$, which sit in front of the operators $\mathbb Q$ and $\mathbb Q'$ in \eqref{eq:oplimtwo} - \eqref{eq:oplimdtwo}, depends on the sign of the $(1,1)$ entry of the matrices in \eqref{eq:Tderiv1} and \eqref{eq:Tderiv2}. 
For example, at $\Kone$ and $\Ktwo$, the $(1,1)$ entry of the $\eps$ derivatives are $t_1$ and $-t_1$ respectively, and that for the $\delta$ derivatives are $t_2$ and $t_2$ respectively. 
\end{remark}

\subsection{Properties of the limiting operators}
To facilitate the proof of Theorem~\ref{thm:edge}, we calculate the limits of the operators defined in \eqref{eq:bTs}.
For clarity, we give the details when $\rho = \frac{t_1\eps}{t_2\delta}>0$, $t_1\eps>0$ and $t_2\delta>0$. The derivations of the limiting operators for the case $\rho>0$, $t_1\eps<0$ and $t_2\delta<0$; and the case that $\rho<0$ are similarly.
\begin{lemma}
Let $(\eps_L,\delta_L)=(0, \delta)$ and $(\eps_R,\delta_R)=(\eps, 0)$, and $\mathfrak d\in(0,1)$ be a constant.
Assume that $\rho=\frac{t_1\eps}{t_2\delta}>0$, $t_1\eps>0$ and $t_2\delta>0$.
If Assumption~\ref{lem:assNoFold} holds along $\bbeta_1$, then at $\kp=\kp^*$, when $\eps$ and $\delta$ approach zero while $\rho$ is fixed, the following convergence holds uniformly for $h\in\mathbb C$ satisfying $|h|<\mathfrak d|t_1|$:
\begin{equation}\label{eq:neceKone}
T_n^{\eps,\delta}
( \lambda_*+ |\eps| h)\to  \left(\beta_1(h)-\beta_2(\left|\frac{\eps}{\delta}\right|h)\right)\mathbb P+\left(\xi_1(h)-\xi_2(\left|\frac{\eps}{\delta}\right|h)\right)\mathbb Q +\mathbb I =:\mathbb U_n(h).
\end{equation}
At $\kp=-\kp^*$, when where $\eps$ and $\delta$ approach zero while $\rho$ is fixed, the following convergence holds uniformly for $h\in\mathbb C$ satisfying $|h|<\mathfrak d|t_1|$:
\begin{equation}\label{eq:stnKtwo}
\begin{aligned}
T_s^{\eps,\delta,\prime}
( \lambda_*+ |\eps| h)&\to 2\tilde{\mathbb T}^{0,0,\prime} (\lambda_*)+ \left(\beta_1(h)+\beta_2(\left|\frac{\eps}{\delta}\right|h)\right)\mathbb P-\left(\xi_1(h)-\xi_2(\left|\frac{\eps}{\delta}\right|h)\right)\mathbb Q +\mathbb I =:\mathbb U_s'(h),\\
T_t^{\eps,\delta,\prime}
( \lambda_*+ |\eps| h)&\to  \left(\beta_1(h)-\beta_2(\left|\frac{\eps}{\delta}\right|h)\right)\mathbb P+\left(\xi_1(h)+\xi_2(\left|\frac{\eps}{\delta}\right|h)\right)\mathbb Q +\mathbb I =:\mathbb U_t'(h),\\
T_n^{\eps,\delta,\prime}
( \lambda_*+ |\eps| h)&\to  -\left(\beta_1(h)-\beta_2(\left|\frac{\eps}{\delta}\right|h)\right)\mathbb P-\left(\xi_1(h)+\xi_2(\left|\frac{\eps}{\delta}\right|h)\right)\mathbb Q +\mathbb I =:\mathbb U_n'(h).
\end{aligned}
\end{equation}
\end{lemma}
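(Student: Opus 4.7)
The plan is to obtain both displayed limits by direct substitution of Propositions~\ref{lem:oplim}, \ref{lem:oplimtwo}, \ref{lem:oplimdone} and \ref{lem:oplimdtwo} into the definitions \eqref{eq:bTs} of $\mathbb T_s^{\eps,\delta}$, $\mathbb T_t^{\eps,\delta}$, $\mathbb T_n^{\eps,\delta}$ and their primed counterparts from Remark~\ref{lem:T}, followed by careful bookkeeping of signs and of the rescaled spectral parameter.

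First I would isolate the spectral argument. Writing the energy as $\lambda=\lambda_*+|\eps|h=\lambda_*+|\delta|\cdot|\eps/\delta|\,h$, the asymptotic behaviour of $\mathbb T^{\eps,0}(\lambda)$ is supplied by Proposition~\ref{lem:oplim} applied at $h$, while that of $\mathbb T^{0,\delta}(\lambda)$ is supplied by Proposition~\ref{lem:oplimdone} applied at $h':=|\eps/\delta|\,h$. Since $\rho=t_1\eps/(t_2\delta)$ is held fixed, the factor $|\eps/\delta|=|t_2\rho/t_1|$ is a fixed positive constant independent of $\eps,\delta$. Thus, after shrinking $\mathfrak d\in(0,1)$ once and for all so that $|\eps/\delta|\cdot\mathfrak d|t_1|<\mathfrak d'|t_2|$ for some $\mathfrak d'\in(0,1)$, the hypothesis $|h'|<\mathfrak d'|t_2|$ needed to invoke Proposition~\ref{lem:oplimdone} is verified uniformly for all $h$ in the prescribed disk. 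This preserves the uniform operator-norm convergence from $\HhalfG\times\HmhalfG$ to itself.

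Second I would track the signs. Under the standing sign assumptions $t_1\eps>0$ and $t_2\delta>0$, Proposition~\ref{lem:oplim} produces the $+\xi_1(h)\mathbb Q$ branch and Proposition~\ref{lem:oplimdone} produces the $+\xi_2(|\eps/\delta|h)\mathbb Q$ branch; each expansion carries the same leading term $\tilde{\mathbb T}^{0,0}(\lambda_*)$. Substituting into the definition $\mathbb T_n^{\eps,\delta}=\mathbb T^{\eps,0}-\mathbb T^{0,\delta}+\mathbb I$, the $\tilde{\mathbb T}^{0,0}(\lambda_*)$ contributions cancel while the coefficients of $\mathbb P$ and $\mathbb Q$ subtract, producing \eqref{eq:neceKone}. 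For $\kp=-\kp^*$ I would repeat the computation with the primed objects: Proposition~\ref{lem:oplimtwo} gives the $-\xi_1(h)\mathbb Q'$ branch when $t_1\eps>0$ (note the sign flip relative to Proposition~\ref{lem:oplim}, which reflects the opposite Berry curvatures at $\Kone$ and $\Ktwo$ noted in Remark~\ref{lem:formcorres}), while Proposition~\ref{lem:oplimdtwo} gives the $+\xi_2(|\eps/\delta|h)\mathbb Q'$ branch when $t_2\delta>0$. Adding or subtracting and adding $\mathbb I$ as prescribed by the three formulas in \eqref{eq:bTs} (with primed operators) then yields the three limits in \eqref{eq:stnKtwo}.

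The only genuinely delicate point is the uniformity of the limit for $\mathbb T^{0,\delta}$ at the rescaled argument $|\eps/\delta|h$, which is the place where the coupling between the two perturbation parameters enters. Once the parameter disk has been arranged so that $|\eps/\delta|\,h$ lies strictly inside the regime of validity of Propositions~\ref{lem:oplimdone} and \ref{lem:oplimdtwo}, the remainder of the argument is a line-by-line combination of the asymptotics already established, so no new analytic input is needed; the derivations in the remaining sign combinations $\{t_1\eps<0,\,t_2\delta<0\}$ and $\rho<0$ follow by identical bookkeeping with the appropriate branch of $\pm\xi_i$ selected from each proposition.
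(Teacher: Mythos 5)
Your proposal is correct and follows essentially the same route as the paper: the paper likewise evaluates $\mathbb T^{\eps,0}$ at $h_1=h$ and $\mathbb T^{0,\delta}$ (and the primed operators) at the rescaled argument $h_2=\left|\frac{\eps}{\delta}\right|h$ so that the energies coincide, notes that $\frac{\eps}{\delta}$ is fixed because $\rho$ is, selects the sign branches of Propositions~\ref{lem:oplim}--\ref{lem:oplimdtwo} according to $t_1\eps>0$, $t_2\delta>0$, and combines via \eqref{eq:bTs} and Remark~\ref{lem:T}. Your extra remark on arranging the disk so that $\left|\frac{\eps}{\delta}\right|h$ stays within the range of validity of Propositions~\ref{lem:oplimdone} and \ref{lem:oplimdtwo} is a minor refinement the paper leaves implicit.
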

\begin{proof}
First, consider $\kp=\kp^*$.
Since $t_1\eps>0$ and $t_2\delta>0$, by Propositions~\ref{lem:oplim}-\ref{lem:oplimdone}, we obtain
\begin{equation}
\mathbb T^{\eps,0}(\lambda_*+ |\eps| h_1) 
\to \tilde{\mathbb T}^{0,0} (\lambda_*) 
+ \beta_1(h_1)\mathbb P + \xi_1(h_1)\mathbb Q ,
\end{equation}
\begin{equation}
\mathbb T^{0,\delta}(\lambda_*+ |\delta| h_2) 
\to \tilde{\mathbb T}^{0,0} (\lambda_*) 
+ \beta_2(h_2)\mathbb P + \xi_2(h_2)\mathbb Q .
\end{equation}
Choose 
\begin{equation}\label{eq:h1h2}
h_1=h,\quad
h_2=\left|\frac{\eps}{\delta}\right|h,
\end{equation}
so that the two energies coincide $\lambda_*+ |\eps| h_1=\lambda_*+ |\delta| h_2 = \lambda_*+ |\eps| h$. Note that since $t_1$ and $t_2$ are constants, $\rho$ being fixed implies that $\frac{\eps}{\delta}$ is also fixed. The relation~\eqref{eq:neceKone} follows from the definition of $\mathbb T_n^{\eps,\delta}(\lambda)$ in \eqref{eq:bTs}.

Next, consider $\kp=-\kp^*$. Since $t_1\eps>0$ and $t_2\delta>0$, by Propositions~\ref{lem:oplimtwo}-\ref{lem:oplimdtwo}, we obtain
\begin{equation}
\mathbb T^{\eps,0,\prime}(\lambda_*+ |\eps| h_1) 
\to \tilde{\mathbb T}^{0,0,\prime} (\lambda_*) 
+ \beta_1(h_1)\mathbb P' - \xi_1(h_1)\mathbb Q' ,
\end{equation}
\begin{equation}
\mathbb T^{0,\delta,\prime}(\lambda_*+ |\delta| h_2) 
\to \tilde{\mathbb T}^{0,0} (\lambda_*) 
+ \beta_2(h_2)\mathbb P' + \xi_2(h_2)\mathbb Q' .
\end{equation}
With the same choice as in~\eqref{eq:h1h2}, the relations in~\eqref{eq:stnKtwo} follow from the definitions of $T_s^{\eps,\delta,\prime}$, $T_t^{\eps,\delta,\prime}$ and $T_n^{\eps,\delta,\prime}$  in Remark~\ref{lem:T}.
\end{proof}

The properties of the limiting operators are stated in the following two lemmas.
\begin{lemma}\label{lem:U}
The operator $\mathbb U_n(h)$ defined in \eqref{eq:neceKone} is  a family of Fredholm operators with index zero that is analytic in $h$, and it attains no characteristic values.
\end{lemma}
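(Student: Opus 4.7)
The plan is to tackle the three assertions in order of difficulty, ending with the absence of characteristic values, which is the substantive content.

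The Fredholm and analyticity claims follow by inspection of the algebraic structure. Because the ranges of $\mathbb P$ and $\mathbb Q$ lie in $\mathrm{span}\{\vec v_1, \vec v_2\}$, the operator $\mathbb U_n(h) - \mathbb I$ has rank at most two, is therefore compact, and $\mathbb U_n(h)$ is Fredholm of index zero on $\HhalfG \times \HmhalfG$. Set $\alpha(h) := \beta_1(h) - \beta_2(|\eps/\delta|h)$ and $\gamma(h) := \xi_1(h) - \xi_2(|\eps/\delta|h)$. From the explicit formulas in \eqref{eq:betaxi}, both $\beta_i$ and $\xi_i$ are holomorphic on $\{|z| < |t_i|\}$; with $\rho$ fixed, the restriction $|h| < \mathfrak d|t_1|$ keeps $|\eps/\delta|h$ strictly inside $\{|z| < |t_2|\}$, so $\alpha$ and $\gamma$ are holomorphic on the disk, and $\mathbb U_n(h) = \mathbb I + \alpha(h)\mathbb P + \gamma(h)\mathbb Q$ depends analytically on $h$.

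For the nonexistence of characteristic values, I would first reduce to a $2 \times 2$ problem. Any $\vec\phi \in \ker \mathbb U_n(h)$ satisfies $\vec\phi = -\alpha(h)\mathbb P\vec\phi - \gamma(h)\mathbb Q\vec\phi$ and so lies in $\mathrm{span}\{\vec v_1, \vec v_2\}$. Writing $\vec\phi = a\vec v_1 + b\vec v_2$ and using $c_i(\vec\phi) = a M_{i1} + b M_{i2}$ with $M_{ij} := c_i(\vec v_j)$, matching coefficients in the basis $\{\vec v_1, \vec v_2\}$ collapses the kernel condition to
\begin{equation*}
\left(\mathbb I_2 + \begin{pmatrix}\alpha(h) & \gamma(h) \\ \gamma(h) & \alpha(h)\end{pmatrix} M\right) \begin{pmatrix} a \\ b \end{pmatrix} = \begin{pmatrix} 0 \\ 0 \end{pmatrix}.
\end{equation*}
Characteristic values of $\mathbb U_n$ are therefore precisely the zeros in the disk of the scalar $\Delta(h) := 1 + \mathrm{tr}(N(h)) + \det(N(h))$, where $N(h) := \begin{pmatrix}\alpha(h) & \gamma(h) \\ \gamma(h) & \alpha(h)\end{pmatrix} M$.

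At $h = 0$, the formulas give $\beta_i(0) = 0$ and $\xi_1(0) = \xi_2(0) = 1/(2|\theta_*|)$, so $\alpha(0) = \gamma(0) = 0$, $N(0) = 0$, and $\Delta(0) = 1$. The analytic Fredholm theorem then forces the set of characteristic values to be at most discrete in the disk. To rule them out entirely, I would exploit two algebraic features: first, the matrix $M$ is anti-Hermitian, since the defining pairing of $c_i$ yields $M_{ij} + \overline{M_{ji}} = 0$, and viewed as a Wronskian-type boundary flux of the Bloch modes $v_1, v_2$ at the Dirac point through $\Gamma$, $M$ is conserved under translations of $\Gamma$ by $\be_1$; second, the coefficient matrix $\begin{pmatrix}\alpha & \gamma \\ \gamma & \alpha\end{pmatrix}$ is diagonalized by $(1, \pm 1)^T$ with eigenvalues $\alpha \pm \gamma$. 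After a change of basis, $\Delta(h)$ factors as $(1 + (\alpha + \gamma)\mu_+)(1 + (\alpha - \gamma)\mu_-)$, where $\mu_\pm$ are purely imaginary linear combinations of entries of the anti-Hermitian $M$.

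The hardest step will be verifying that neither scalar factor vanishes on the admissible disk. I expect to compute $M$ explicitly in terms of $\theta_*$ by applying Green's identity to $v_1, v_2$ on a truncated half-strip, exploiting their quasi-periodicity and the symmetry identities of Proposition~\ref{lem:Tderiv}, and then to verify case by case (generic $|\rho| \neq 1$ versus the degenerate case $|\rho| = 1$, where $\gamma \equiv 0$) that the rational expressions $\alpha(h) \pm \gamma(h)$ cannot cancel the corresponding imaginary values $-1/\mu_\pm$ for any $h$ in $\{|h| < \mathfrak d|t_1|\}$.
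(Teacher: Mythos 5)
Your reduction is sound as far as it goes, and it mirrors the paper's strategy: the Fredholm-of-index-zero and analyticity claims are exactly as you say, and the kernel of $\mathbb U_n(h)$ indeed lies in $\mathrm{span}\{\vec v_1,\vec v_2\}$, so characteristic values are zeros of a $2\times2$ determinant built from $\alpha(h):=\beta_1(h)-\beta_2(|\eps/\delta|h)$, $\gamma(h):=\xi_1(h)-\xi_2(|\eps/\delta|h)$ and the flux matrix $M_{ij}=c_i(\vec v_j)$. But the substantive content of Lemma~\ref{lem:U} — that this determinant has \emph{no} zeros in the whole complex disk $\{|h|<\mathfrak d|t_1|\}$ — is precisely the step you defer ("I expect to compute $M$ \dots and then to verify case by case"), so the proposal does not yet prove the lemma. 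Moreover, the structure you posit for that step is off: the paper does not need to compute $M$ by a Green's identity, because the flux identity \eqref{eq:civj} already gives $c_i(\vec v_j)$ as a \emph{diagonal} matrix with opposite entries proportional to $\im|\theta_*|$ (not merely anti-Hermitian). With this $M$, the vectors $(1,\pm1)^T$ diagonalize your symmetric coefficient matrix but are swapped by $M$, so the reduced matrix is off-diagonal in that basis and the determinant is $1+|\theta_*|^2\bigl(\alpha(h)^2-\gamma(h)^2\bigr)$; it does \emph{not} split as $(1+(\alpha+\gamma)\mu_+)(1+(\alpha-\gamma)\mu_-)$ with constant purely imaginary $\mu_\pm$. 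Also note that $\Delta(0)=1$ plus analytic Fredholm theory only yields discreteness of characteristic values, which is far weaker than the lemma.

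What the paper actually does at this point is a short explicit computation: writing $a=|\rho|$, $b=h/|t_1|$ and $x_-,y_-$ as in \eqref{eq:xy}, the reduced matrix is $M_n(b)$ of \eqref{eq:Mn}, and $\det M_n(b)=0$ is equivalent to $y_-^2=x_-^2+4$; rationalizing this equation shows it forces $b^2(1-a)^2=0$, i.e.\ $b=0$ or $a=1$, and in either of those cases $x_-=y_-=0$, so the determinant equals $1\neq0$ — a contradiction valid for all complex $b$ in the disk, with no case analysis on $|\rho|$ needed. To complete your argument you would need to (i) replace the guessed factorization by the correct expression $1+|\theta_*|^2(\alpha^2-\gamma^2)$ (equivalently the paper's $\tfrac14(4+x_-^2-y_-^2)$), and (ii) actually carry out the nonvanishing argument on the complex disk, e.g.\ by the algebraic reduction above; positivity or sign arguments valid only for real $h$ would not suffice, since the generalized Rouch\'e argument in Section~\ref{sec:proofmain} needs the statement on a complex neighborhood.
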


\begin{proof}
Setting
\begin{equation}\label{eq:ab}
a=\frac{|t_1\eps|}{|t_2\delta|}=|\rho|,\quad\text{and} \quad b=\frac{h}{|t_1|},
\end{equation}
with the choice in \eqref{eq:h1h2}, we obtain
\begin{equation*}
\begin{aligned}
\beta_1(h_1) &= \frac{1}{2|\theta_*|}\frac{h}{\sqrt{t_1^2-h^2}} = \frac{1}{2|\theta_*|}\frac{b}{\sqrt{1-b^2}},\\
\beta_2(h_2) &= \frac{1}{2|\theta_*|}\frac{ah}{\sqrt{t_1^2-(ah)^2}} = \frac{1}{2|\theta_*|}\frac{ab}{\sqrt{1-(ab)^2}}, \\
\xi_1(h_1) &= \frac{|t_1|}{2|\theta_*|}\frac{1}{\sqrt{t_1^2-h^2}}= \frac{1}{2|\theta_*|}\frac{1}{\sqrt{1-b^2}} ,\\
\xi_2(h_2) &= \frac{|t_1|}{2|\theta_*|}\frac{1}{\sqrt{t_1^2-(ah)^2}} = \frac{1}{2|\theta_*|}\frac{1}{\sqrt{1-(ab)^2}}.
\end{aligned}
\end{equation*}
It is known that $c_i(\vec v_j)$ are related to the energy flux of $\vec v_j$ through $\Gamma$~\cite{Fliss-16}
\begin{equation}\label{eq:civj}
c_i(\vec v_j) = (-1)^{n-1}|\theta_*|\delta_{i,j}.
\end{equation}
On the basis of $\vec v_1, \vec v_2$, 
\begin{equation*}
\mathbb P (\vec v_1,\vec v_2) =  \im |\theta_*| (\vec v_1,\vec v_2) \left(\begin{matrix}1 & 0 \\ 0 & -1 \\ \end{matrix}\right), \quad \mathbb Q (\vec v_1,\vec v_2) = \im |\theta_*| (\vec v_1,\vec v_2) \left(\begin{matrix}0 & -1 \\ 1 & 0 \\ \end{matrix}\right).
\end{equation*}
As the ranges of $\mathbb P$ and $\mathbb Q$ are both $\text{span}\{\vec v_1,\vec v_2\}$, a necessary condition for $h$ to be a characteristic value of $\mathbb U_n(h)$ is $b=h/|t_1|$ is a characteristic value of $M_n$ defined in the following:
\begin{equation}\label{eq:Mn}
M_n(b)=\im|\theta_*|\left(\begin{matrix}\beta_1(h_1)-\beta_2(h_2) & -(\xi_1(h_1)-\xi_2(h_2)) \\ \xi_1(h_1)-\xi_2(h_2) & -(\beta_1(h_1)-\beta_2(h_2)) \\ \end{matrix}\right)
+ \left(\begin{matrix}1 & 0 \\ 0 & 1 \\ \end{matrix}\right)=\quad \frac{\im }{2}\left(\begin{matrix}x_- - 2\im & -y_- \\ y_- & -x_- -2\im \\ \end{matrix}\right), 
\end{equation}
where 
\begin{equation}\label{eq:xy}
x_{\pm}:=\frac{b}{\sqrt{1-b^2}} \pm \frac{ab}{\sqrt{1-(ab)^2}}, \quad
y_{\pm}:= \frac{1}{\sqrt{1-b^2}} \pm \frac{1}{\sqrt{1-(ab)^2}}.
\end{equation}
To find characteristic values of $M_n(h)$, we compute
\begin{equation}
\det(M_n(b))=0 \iff y_-^2=x_-^2+4.
\end{equation}
A necessary condition for the equation on the right is 
\begin{equation}
b^2(1-a)^2=0,
\end{equation}
which implies $b=0$ or $a=1$. However, in both cases, $x_-=y_-=0$ and $\det(M_n(b))\neq0$. Thus $M_n(b)$ has no characteristic values.
\end{proof}

\begin{lemma}\label{lem:Up}
The operators $\mathbb U_s'(h)$, $\mathbb U_n'(h)$ and $\mathbb U_t'(h)$ defined in \eqref{eq:stnKtwo} are families of Fredholm operators with index zero that are analytic in $h$. For each of these three operators, the only characteristic value is $h=0$, and its multiplicity is $2$.
\end{lemma}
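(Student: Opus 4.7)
The plan is to follow the Gohberg--Sigal style reduction used in the proof of Lemma~\ref{lem:U}, adapting the key computation to the $\Ktwo$ basis and treating separately the novel difficulty caused by the presence of $2\tilde{\mathbb T}^{0,0,\prime}(\lambda_*)$ in $\mathbb U_s'(h)$. The Fredholm-index-zero and analyticity claims for all three operators are immediate: $\mathbb P'$ and $\mathbb Q'$ are rank-two with range in $\mathrm{span}\{\vec v_1',\vec v_2'\}$ and depend analytically on $h$ through the scalar factors $\beta_j,\xi_j$ of \eqref{eq:betaxi}, while $\tilde{\mathbb T}^{0,0,\prime}(\lambda_*)$ is compact on $\HhalfG\times\HmhalfG$ (standard property of layer potentials associated with the regularized Green function \eqref{eq:tildeG0}). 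Hence each operator is the identity plus a compact analytic family in $h$.

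For $\mathbb U_t'(h)$ and $\mathbb U_n'(h)$ I would mimic the proof of Lemma~\ref{lem:U} directly. Since $(\mathbb U_t' - \mathbb I)$ and $(\mathbb U_n' - \mathbb I)$ have range in $\mathrm{span}\{\vec v_1',\vec v_2'\}$, any kernel element must already lie in that subspace; using the $\Ktwo$-analogue of the pairing identity \eqref{eq:civj}, namely $c_i'(\vec v_j')=\pm\im|\theta_*|\delta_{i,j}$, the characteristic-value problem collapses to the vanishing of a $2\times 2$ determinant. Setting $a=|\rho|$ and $b=h/|t_1|$ as in \eqref{eq:ab}, a direct calculation (tracking the change $\xi_1-\xi_2\mapsto\xi_1+\xi_2$ in the off-diagonal) gives
\[
\det M_t'(b)=\det M_n'(b)=\tfrac{1}{4}\bigl(x_-^{\,2}+4-y_+^{\,2}\bigr),
\]
where $x_-,y_+$ are the expressions in \eqref{eq:xy}. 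Simplifying $y_+^{\,2}-x_-^{\,2}=4$ yields $b^2(1+a)^2=0$, and since $a>0$ the only solution is $b=0$, i.e.\ $h=0$. Taylor expansion $x_-=(1-a)b+O(b^3)$, $y_+=2+\tfrac{1+a^2}{2}b^2+O(b^4)$ shows $\det=-\tfrac{(1+a)^2}{4}b^2+O(b^4)$, so the characteristic value $h=0$ has exact multiplicity $2$ in the Gohberg--Sigal sense.

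The main obstacle is the analysis of $\mathbb U_s'(h)$, for which the extra summand $2\tilde{\mathbb T}^{0,0,\prime}(\lambda_*)$ does not preserve $\mathrm{span}\{\vec v_1',\vec v_2'\}$. My plan is first to prove the identity $(\mathbb I+2\tilde{\mathbb T}^{0,0,\prime}(\lambda_*))\vec v_i'=0$ for $i=1,2$; this should follow by applying Green's formula to the Bloch mode $v_i'$ at the Dirac energy together with the jump relations \eqref{eq:jumpeps}, since by \eqref{eq:tildeG0} the operator $\tilde{\mathbb T}^{0,0,\prime}$ is exactly the part of $\mathbb T^{0,0,\prime}$ in which the residue contributions from the bands $n=1,2$ at $\lambda_*$ have been subtracted. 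Granted this, $\mathbb U_s'(0)=\mathbb I+2\tilde{\mathbb T}^{0,0,\prime}(\lambda_*)$ has kernel precisely $\mathrm{span}\{\vec v_1',\vec v_2'\}$ and its restriction to a closed complement is invertible (this is the Fredholm-alternative statement for the unperturbed problem away from the Dirac bands). A Schur-complement reduction along the splitting $\HhalfG\times\HmhalfG=\mathrm{span}\{\vec v_1',\vec v_2'\}\oplus W$ then expresses $\mathbb U_s'(h)$ as a $2\times 2$ analytic matrix pencil on the two-dimensional block, whose leading-order form coincides with that of $\mathbb U_t'(h)$ up to an analytic invertible prefactor. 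The same determinant computation and the same Taylor expansion as above yield $h=0$ as the unique characteristic value with multiplicity $2$. Verifying the pivot identity $(\mathbb I+2\tilde{\mathbb T}^{0,0,\prime}(\lambda_*))\vec v_i'=0$ and isolating the finite-dimensional block from the regular part is the only genuinely new ingredient relative to Lemma~\ref{lem:U}.
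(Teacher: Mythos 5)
Your handling of $\mathbb U_t'(h)$ and $\mathbb U_n'(h)$ is essentially the paper's argument: since these operators are $\mathbb I$ plus operators with range in $\mathrm{span}\{\vec v_1',\vec v_2'\}$, kernel elements are forced into that span, the pairing $c_i'(\vec v_j')=\pm\im|\theta_*|\delta_{ij}$ reduces the problem to the $2\times2$ matrices $M_t'(b)$, $M_n'(b)$ of \eqref{eq:Mbp}, and your determinant identity $\det M_t'=\det M_n'=\tfrac14(x_-^2+4-y_+^2)$, the exact simplification to $b^2(1+a)^2=0$, and the expansion $\det=-\tfrac{(1+a)^2}{4}b^2+O(b^4)$ all check out. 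Reading off the multiplicity as the vanishing order of the determinant is a legitimate (and slightly cleaner) alternative to the paper's explicit root-function/rank computation, since for analytic matrix families the null multiplicity at a characteristic value equals the order of the zero of the determinant.

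The gap is in your treatment of $\mathbb U_s'(h)$ and of the Fredholm property. The pivot identity you propose, $(\mathbb I+2\tilde{\mathbb T}^{0,0,\prime}(\lambda_*))\vec v_i'=0$, is false: the fact proved in \cite{li2024} and used by the paper is $\ker\tilde{\mathbb T}^{0,0,\prime}(\lambda_*)=X'=\mathrm{span}\{\vec v_1',\vec v_2'\}$ and $\mathrm{Ran}\,\tilde{\mathbb T}^{0,0,\prime}(\lambda_*)=Y'$ as in \eqref{eq:XY}; together with your identity this would force $\vec v_i'=-2\tilde{\mathbb T}^{0,0,\prime}\vec v_i'=0$. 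The ``$+\mathbb I$'' in the first line of \eqref{eq:stnKtwo} is a typo (the limit of $\mathbb T^{\eps,0,\prime}+\mathbb T^{0,\delta,\prime}$ carries no identity, and $M_s'$ in \eqref{eq:Mbp} has none), so $\mathbb U_s'(0)=2\tilde{\mathbb T}^{0,0,\prime}(\lambda_*)$ and the correct reduction is: if $\mathbb U_s'(h)\vec\phi=0$ then $\tilde{\mathbb T}^{0,0,\prime}\vec\phi\in X'\cap Y'=\{0\}$ (because $c_i'(\vec v_j')\neq0$), hence $\vec\phi\in X'$ and one lands on the pencil $M_s'(b)$ --- which is \emph{not} the same as $M_t'(b)$ (it involves $x_+$ and $y_-$ and no identity), so ``the same determinant computation'' must be redone; it gives $\det M_s'(b)=\tfrac14(x_+^2-y_-^2)=\tfrac{(1+a)^2}{4}b^2+O(b^4)$, again yielding $h=0$ with multiplicity $2$. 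Separately, your compactness claim for $\tilde{\mathbb T}^{0,0,\prime}(\lambda_*)$ is wrong: the block $\cN$ is hypersingular, and an operator whose range is the closed codimension-two subspace $Y'$ cannot be compact. Fredholmness of index zero for $\mathbb U_s'$ should instead be drawn from the kernel/range characterization in \cite{li2024} together with the finite-rank perturbations $\mathbb P',\mathbb Q'$, while for $\mathbb U_t',\mathbb U_n'$ it is immediate as identity plus finite rank.
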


\begin{proof}
Similar to \eqref{eq:civj}, we have the relations
\begin{equation}
c_i'(\vec v_j') = (-1)^{n-1}|\theta_*|\delta_{i,j}.
\end{equation}
On the basis of $\vec v_1', \vec v_2'$, 
\begin{equation}
\mathbb P' (\vec v_1',\vec v_2') =  \im |\theta_*| (\vec v_1',\vec v_2') \left(\begin{matrix}1 & 0 \\ 0 & -1 \\ \end{matrix}\right), \quad \mathbb Q' (\vec v_1',\vec v_2') = \im |\theta_*| (\vec v_1',\vec v_2') \left(\begin{matrix}0 & -1 \\ 1 & 0 \\ \end{matrix}\right).
\end{equation}
Furthermore, it was shown in~\cite{li2024} that
\begin{equation*}
\ker \; \tilde{\mathbb T}^{0,0,\prime} (\lambda_*) =X',\quad \text{Ran}  \; \tilde{\mathbb T}^{0,0,\prime} (\lambda_*) =Y', 
\end{equation*}
where
\begin{equation}\label{eq:XY}
 X':=\text{span}\{\vec v_1', \vec v_2'\},\quad Y':=\{\vec\phi \in \HhalfG\times\HmhalfG\}, c_i'(\vec\phi')=0, i=1,2\}.
\end{equation}
Thus the kernels of $\mathbb U_s'(h)$, $\mathbb U_n'(h)$ and $\mathbb U_t'(h)$ are subspaces of $X'$, and their characteristic problems are reduced to characteristic value problems of the 3-by-3 matrices:
\begin{equation}\label{eq:Mbp}
\begin{aligned}
M_s'(b)=& \im |\theta_*|\left(\begin{matrix}\beta_1(h_1)+\beta_2(h_2) & \xi_1(h_1)-\xi_2(h_2) \\ -(\xi_1(h_1)-\xi_2(h_2)) & -(\beta_1(h_1)+\beta_2(h_2)) \\ \end{matrix}\right)
, \\
M_t'(b) =& -\im |\theta_*|\left(\begin{matrix}\beta_1(h_1)-\beta_2(h_2) & \xi_1(h_1)+\xi_2(h_2) \\ -(\xi_1(h_1)+\xi_2(h_2)) & -(\beta_1(h_1)-\beta_2(h_2)) \\ \end{matrix}\right)
+\left(\begin{matrix}1 & 0 \\ 0 & 1 \\ \end{matrix}\right), \\
M_n'(b)=& \im|\theta_*|\left(\begin{matrix}\beta_1(h_1)-\beta_2(h_2) & \xi_1(h_1)+\xi_2(h_2) \\ -(\xi_1(h_1)+\xi_2(h_2)) & -(\beta_1(h_1)-\beta_2(h_2)) \\ \end{matrix}\right) + \left(\begin{matrix}1 & 0 \\ 0 & 1 \\ \end{matrix}\right)
\end{aligned}
\end{equation}
These matrices take the following forms in terms of the quantities defined in~\eqref{eq:xy}:
\begin{equation}
\begin{aligned}
M_s'(b)=&\quad  \frac{\im }{2}\left(\begin{matrix}x_+ & y_- \\ -y_- & -x_+ \\ \end{matrix}\right),\\
M_t'(b)=& -\frac{\im }{2}\left(\begin{matrix}x_- +2\im & y_+ \\ -y_+ & -x_- +2\im \\ \end{matrix}\right), \\
M_n'(b)=&\quad \frac{\im }{2}\left(\begin{matrix}x_- - 2\im & y_+ \\ -y_+ & -x_- -2\im \\ \end{matrix}\right) =\overline{M_t'(b)}.
\end{aligned}
\end{equation}
To find the characteristic values of $M_s'$, $M_t'$ and $M_n'$, we compute 
\begin{equation*}
\det(M_s'(b))=0 \iff (x_+)^2=(y_-)^2,
\end{equation*}
\begin{equation*}
\det(M_t'(b))=0 \iff \det(M_n)=0 \iff (x_-)^2-(y_+)^2 + 4=0.
\end{equation*}
Both equalities simplify to
\begin{equation}\label{eq:condmismatch}
b^2(1+a)^2=0,
\end{equation}
which holds and only holds when $b=0$. That is, $h=0$ is the only characteristic value to each of $M_s'(b)$, $M_t'(b)$ and $M_n'(b)$.

Next we find the kernels and multiplicities of the matrices at $h=0$. Compute
\begin{equation}\label{eq:xyderivatives}
\begin{aligned}
&\frac{d}{db}x_{\pm}=\frac{1}{\sqrt{1-b^2}^3} \pm \frac{a}{\sqrt{1-a^2b^2}^3}, \quad  
&&\frac{d^2}{db^2}x_{\pm}=\frac{3b}{\sqrt{1-b^2}^5} \pm \frac{3a^3b}{\sqrt{1-a^2b^2}^5}, \\
&\frac{d}{db}y_{\pm}=\frac{b}{\sqrt{1-b^2}^3} \pm \frac{a^2b}{\sqrt{1-a^2b^2}^3}, \quad  
&&\frac{d^2}{db^2}y_{\pm}=\frac{1}{\sqrt{1-b^2}^5}(1+2b^2) \pm \frac{a^2}{\sqrt{1-a^2b^2}^5}(1+2a^2b^2).
\end{aligned}
\end{equation}
Thus at $h=0$,
\begin{equation}
M_s'(0) =  \frac{\im }{2}\left(\begin{matrix}0 & 0 \\ 0 & 0 \\ \end{matrix}\right), \quad
\frac{d}{db}M_s'(0) =  \frac{\im(1+a) }{2}\left(\begin{matrix}1 & 0 \\ 0 & -1 \\ \end{matrix}\right), 
\end{equation}
\begin{equation*}
M_n'(0)= -\frac{\im }{2}\left(\begin{matrix}2\im & 2 \\ -2 & 2\im \\ \end{matrix}\right),\quad
\frac{d}{db}M_n'(0)= -\frac{\im(1-a) }{2}\left(\begin{matrix}1 & 0 \\ 0 & -1\\ \end{matrix}\right),\quad
\frac{d^2}{db^2}M_n'(0)= -\frac{\im(1+a^2) }{2}\left(\begin{matrix} 0& 1 \\ -1& 0 \\ \end{matrix}\right).
\end{equation*}

For $M_s(b)$, we observe $\ker(M_s(0)=\mathbb R^2$. For each nontrivial vector $(A,B)^t\in\mathbb R^2$, and an arbitrary vector function $(A(b),B(b))^t$ that is analytic in $b$ and satisfy $(A(0),B(0))^t=(A,B)^t$, we have
\begin{equation*}
\frac{d}{db} \left(M_s(h)
\left(\begin{matrix}A(b) \\ B(b) \end{matrix}\right)
\right)|_{b=0}=\frac{\im(1+a) }{2}\left(\begin{matrix} 1& 0 \\ 0 & -1\\ \end{matrix}\right)\left(\begin{matrix}A \\ B \end{matrix}\right) \neq0.
\end{equation*}
Thus each vector in $\ker(M_s'(0))$ is of rank 1, and the multiplicity of $M_s(b)$ is equal to 2.

For $M_n(b)$, we observe $\ker(M_n(0)=\text{span}\{(\im,1)^t\}$. For an arbitrary vector function $(A(b),B(b))^t$ that is analytic in $b$ and satisfy $(A(0),B(0))^t=(\im,1)^t$, we have
\begin{equation*}
\frac{d}{db} \left(M_n(h)
\left(\begin{matrix}A(b) \\ B(b) \end{matrix}\right)
\right)|_{b=0}= 
-\frac{\im }{2}\left(\begin{matrix}2\im & 2 \\ -2 & 2\im \\ \end{matrix}\right)
\left(\begin{matrix}\frac{d}{db}A(0) \\ \frac{d}{db}B(0) \end{matrix}\right) 
-\frac{\im (1-a)}{2}\left(\begin{matrix}1 & 0 \\ 0 & -1 \\ \end{matrix}\right)\left(\begin{matrix}\im \\ 1 \end{matrix}\right).
\end{equation*}
This expression is equal to zero if and only if
\begin{equation*}
\frac{d}{db}B(0)= -\im \frac{d}{db}A(0)-\im\frac{1-a}{2}.
\end{equation*}
Finally, observe that
\begin{equation*}
\begin{aligned}
\frac{d^2}{db^2}& \left(M_n(h)
\left(\begin{matrix}A(b) \\ B(b) \end{matrix}\right)
\right)|_{b=0}= 
-\frac{\im }{2}\left(\begin{matrix}2\im & 2 \\ -2 & 2\im \\ \end{matrix}\right)\left(\begin{matrix}\frac{d^2}{db^2}A(0) \\ \frac{d^2}{db^2}B(0) \end{matrix}\right)\\
&-\im(1-a)\left(\begin{matrix}1 & 0 \\ 0 & -1 \\ \end{matrix}\right)\left(\begin{matrix}\frac{d}{db}A(0) \\ -\im \frac{d}{db}A(0)-\im\frac{1-a}{2} \end{matrix}\right) 
-\frac{\im (1+a^2)}{2}
\left(\begin{matrix} 0& 1 \\ -1 & 0 \\ \end{matrix}\right)\left(\begin{matrix}\im \\ 1 \end{matrix}\right).
\end{aligned}
\end{equation*}
The above quantity is zero only when
\begin{equation*}
\left(\begin{matrix} 1+a^2 \\ \im((1-a)^2-(1+a^2)) \end{matrix}\right)\in\text{span}\{
\left(\begin{matrix} 1 \\ \im\end{matrix}\right)
\},
\end{equation*}
which is never satisfied by $a>0$. Thus the vector $(\im,1)^t$ has rank 2, and the multiplicity of $M_n'(b)$ is equal to 2.
Thus we conclude that the only vector $(\im,1)^t$ in $\ker(M_s'(0))$ is of rank 2, and the multiplicity of $M_n(b)$ is equal to two.

\end{proof}
\begin{remark}
When $a=1$, the matrices in \eqref{eq:Mbp} degenerate to 
\begin{equation*}
M_s'(b)= \im |\theta_*|\beta_1(h_1)\left(\begin{matrix}1& 0 \\ -0 & -1 \\ \end{matrix}\right)
, \quad
M_t'(b) = -\im |\theta_*|2\xi_1(h_1)\left(\begin{matrix}0&0 \\ -0 & 0\\ \end{matrix}\right)
+\left(\begin{matrix}1 & 0 \\ 0 & 1 \\ \end{matrix}\right).
\end{equation*}
This is the case considered in our previous work~\cite{li2024}.
\end{remark}


\subsection{Proof of the Main Result  Theorem \ref{thm:edge}}
\label{sec:proofmain}
For clarity, we first present the proof for Case 3. 
We also further focus on $\rho = \frac{t_1\eps}{t_2\delta}>0$, $t_1\eps>0$ and $t_2\delta>0$. 

First, consider $\kp=\kp^*$ and define $V:= \{h\in\mathbb C, |h|<\mathfrak d |t_1|\}$. 
By Lemma~\ref{lem:U}, $\mathbb U_n(h)$ attains no characteristic value in $V$. Since $T_s^{\eps,\delta} (\lambda_*+\eps h)$ is analytic in $\overline{V}$, and converges uniformly to $\mathbb U_n$ in a neighborhood of $V$,  the generalized Rouche theorem~\cite{Ammari-book} implies that $T_s^{\eps,\delta} (\lambda_*+\eps h)$ has no characteristic values in $V$. Thus there is no edge state at $\kp=\kp^*$.

Next, at $\kp=-\kp^*$, we have the following lemmas for the operators $T_s^{\eps,\delta,\prime} (\lambda_*+\eps h)$, $T_t^{\eps,\delta,\prime} (\lambda_*+\eps h)$, and $T_n^{\eps,\delta,\prime} (\lambda_*+\eps h)$ as defined in Remark~\ref{lem:T}. 
The proof bridges from the limiting operators in Lemma~\ref{lem:Up} following parallel lines as Lemma~7.8, Propositions~7.5, 7.6 and 7.11 in \cite{li2024}. We omit the proofs here for conciseness.

\begin{lemma}\label{lem:rank1}
Consider Case 3 of Theorem~\ref{thm:edge}. Let Assumption~\ref{lem:assNoFold} hold along $\bbeta_1$.
Suppose $\rho=\frac{t_1\eps}{t_2\delta}>0$, $t_1\eps>0$ and $t_2\delta>0$.
Let $\mathfrak d\in(0,1)$ be a constant.
When $\eps$ and $\delta$ are sufficiently small while $\rho$ is fixed, and for $|h_0|<\mathfrak d |t_1|$, every nontrivial $\vec\phi\in \ker \;(\mathbb T_s^{\eps,\delta,\prime} (\lambda_*+\eps h_0) )$ is of rank $1$.
Moreover, $\mathbb T_s^{\eps,\delta,\prime} (\lambda_*+\eps h_0)$ may have a nontrivial kernel only when $h_0=o(1)$ as $\eps\to0$.
\end{lemma}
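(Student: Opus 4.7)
The plan is to reduce the kernel problem for $\mathbb T_s^{\eps,\delta,\prime}(\lambda_*+\eps h)$ to a finite-dimensional analytic matrix problem and then apply the generalized Rouch\'e theorem of Gohberg and Sigal together with the structural analysis already carried out in Lemma~\ref{lem:Up}. Since $\tilde{\mathbb T}^{0,0,\prime}(\lambda_*)$ is Fredholm of index zero with kernel $X'=\mathrm{span}\{\vec v_1',\vec v_2'\}$ and range $Y'$ as in \eqref{eq:XY}, and since the perturbative corrections $\beta_j(\cdot)\mathbb P'$ and $\xi_j(\cdot)\mathbb Q'$ in Propositions~\ref{lem:oplimtwo}--\ref{lem:oplimdtwo} take values in $X'$, a Grushin--Schur reduction converts the kernel problem for $\mathbb T_s^{\eps,\delta,\prime}(\lambda_*+|\eps|h)$ into the singularity problem for an analytic $2\times 2$ matrix family $\mathcal M^{\eps,\delta}(h)$ which, on compact subsets of $V:=\{h\in\mathbb C:|h|<\mathfrak d|t_1|\}$, converges uniformly to $M_s'(h/|t_1|)$ as $\eps,\delta\to 0$ with $\rho$ fixed. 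This reduction preserves the rank of root functions, so both assertions of the lemma can be checked on $\mathcal M^{\eps,\delta}$.

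For the localization statement I would apply the generalized Rouch\'e theorem directly. By Lemma~\ref{lem:Up}, $M_s'(b)$ has its only characteristic value in $V$ at $b=0$, with multiplicity $2$. The uniform convergence then gives: for every $r>0$ there exists $\eps_0>0$ such that, whenever $|\eps|,|\delta|<\eps_0$ with $\rho$ fixed, all characteristic values of $\mathcal M^{\eps,\delta}$---hence of $\mathbb T_s^{\eps,\delta,\prime}(\lambda_*+\eps h)$---lie in $\{|h|<r\}$ with total multiplicity equal to $2$. This already establishes the second assertion, namely $h_0=o(1)$ as $\eps\to 0$.

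For the rank-one property I would argue by contradiction. If some nontrivial $\vec\phi\in\ker\mathcal M^{\eps,\delta}(h_0)$ had rank at least $2$, there would exist $\vec\psi\in\mathbb C^2$ with $\mathcal M^{\eps,\delta}(h_0)\vec\psi+(\partial_h\mathcal M^{\eps,\delta})(h_0)\vec\phi=0$, which forces $(\partial_h\mathcal M^{\eps,\delta})(h_0)\vec\phi\in\mathrm{Ran}\,\mathcal M^{\eps,\delta}(h_0)$. Now $\mathcal M^{\eps,\delta}(h_0)\to M_s'(0)=0$, while $(\partial_h\mathcal M^{\eps,\delta})(h_0)\to\frac{1}{|t_1|}(d/db)M_s'(0)=\frac{\im(1+a)}{2|t_1|}\,\mathrm{diag}(1,-1)$, which is nonsingular by the derivative computation in \eqref{eq:xyderivatives}. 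A quantitative estimate bounding $\|\mathcal M^{\eps,\delta}(h_0)\|$ by $O(|h_0|+|\eps|+|\delta|)$ and showing that $(\partial_h\mathcal M^{\eps,\delta})(h_0)\vec\phi$ stays bounded below in norm with a nontrivial direction then rules out the inclusion for sufficiently small parameters, hence precludes rank $\geq 2$.

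The main obstacle is precisely this quantitative perturbation step: one must exclude a coalescence in which a single characteristic value $h_0$ carries a one-dimensional kernel with a rank-two root vector, even though the limiting matrix $M_s'$ has a two-dimensional kernel with only rank-one elements. Equivalently, one must rule out the pathology that $\det\mathcal M^{\eps,\delta}$ has a double zero at some $h_0$ with $\mathcal M^{\eps,\delta}(h_0)$ nonzero of rank one. The analogous comparison arguments in Lemma~7.8 and Propositions~7.5--7.6 and 7.11 of \cite{li2024}, to which the authors defer, provide the template: they control $\mathcal M^{\eps,\delta}(h_0)$ quantitatively against the nondegenerate limit derivative $(d/db)M_s'(0)$ precisely at points where the range of $\mathcal M^{\eps,\delta}(h_0)$ is $O(|h_0|+|\eps|+|\delta|)$-small, which is the regime entered once localization from Rouch\'e has been applied.
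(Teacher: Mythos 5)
Your localization step is fine and is essentially the intended argument: analyticity of $\mathbb T_s^{\eps,\delta,\prime}(\lambda_*+\eps h)$ in $h$, the uniform operator-norm convergence to $\mathbb U_s'$, Lemma~\ref{lem:Up} (only characteristic value $h=0$, multiplicity $2$), and the generalized Rouch\'e theorem give total multiplicity $2$ inside any disk $\{|h|<r\}$, hence $h_0=o(1)$. The genuine gap is in the rank-one assertion, and it sits exactly at the step you yourself flag and then defer to the ``template''. Ruling out a rank-$\geq 2$ root vector cannot follow from the only two inputs your contradiction argument uses, namely $\mathcal M^{\eps,\delta}\to M_s'$ in norm and invertibility of $\frac{d}{db}M_s'(0)=\frac{\im(1+a)}{2}\,\mathrm{diag}(1,-1)$. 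These are compatible with the pathology: take the local model $\mathcal M(b)=\begin{pmatrix} b & \eps\\ 0 & -b\end{pmatrix}$, an $o(1)$ perturbation of $b\,\mathrm{diag}(1,-1)$. Its determinant still has a double zero at $b=0$ and $\partial_b\mathcal M(0)$ is the same invertible matrix, yet $\ker\mathcal M(0)=\mathrm{span}\{(1,0)^t\}$ and this vector has rank $2$ (the root function $(1,0)^t+b\,(0,-1/\eps)^t$ gives $\mathcal M(b)\bigl((1,0)^t+b(0,-1/\eps)^t\bigr)=(0,b^2/\eps)^t$); moreover $\partial_b\mathcal M(0)(1,0)^t=(1,0)^t$ \emph{does} lie in $\mathrm{Ran}\,\mathcal M(0)$, so the inclusion you hope to exclude actually holds. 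Note also that in this scenario the auxiliary vector $\vec\psi$ blows up like $1/\eps$, which defeats any estimate that treats $\mathbb T_s^{\eps,\delta,\prime}-\mathbb U_s'=o(1)$ as a small error against a bounded correction.

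What is missing is a structural input about the pencil that your proposal never invokes, and which is what the cited template (Lemma~7.8, Propositions~7.5, 7.6, 7.11 of \cite{li2024}, to which the present paper defers in place of a written proof) exploits: for real spectral parameter the operators $\cS^{\eps,\delta}$ and $\cN^{\eps,\delta}$ are Hermitian and $\cK^{\eps,\delta}$, $\cK^{*,\eps,\delta}$ are adjoint to one another, so $\mathbb T^{\eps,\delta,\prime}(\lambda)$ enjoys a Hermitian-type symmetry with respect to the natural $\HhalfG\times\HmhalfG$ duality, while $\partial_\lambda G^{\eps,\delta,\prime}(\bx,\by;\lambda)=\sum_n \overline{v_{n,\eps,\delta}(\by)}\,v_{n,\eps,\delta}(\bx)/(\mu_{n,\eps,\delta}-\lambda)^2$ makes the $\lambda$-derivative of the pencil sign-definite; this is the same energy-flux information that produces $c_i'(\vec v_j')=(-1)^{i-1}|\theta_*|\delta_{ij}$ in \eqref{eq:civj}. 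Pairing the second-order root-function identity with the kernel element itself, the symmetry annihilates the $\mathbb T_s^{\eps,\delta,\prime}(\lambda_*+\eps h_0)\vec\phi_1$ term and the definiteness of the derivative yields the contradiction uniformly in $\eps,\delta$ — a statement about the exact operator, not about its $o(1)$-limit. Without importing this (or an equivalent structural fact), your argument for the first assertion is incomplete, even though your treatment of the second assertion is correct and coincides with the paper's Rouch\'e-based reasoning.
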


\begin{lemma}\label{lem:uniquestn}
Consider Case 3 of Theorem~\ref{thm:edge}.  Let Assumption~\ref{lem:assNoFold} hold along $\bbeta_1$.
Suppose $\rho=\frac{t_1\eps}{t_2\delta}>0$, $t_1\eps>0$ and $t_2\delta>0$.
Let $\mathfrak d\in(0,1)$ be a constant.
When $\eps$ and $\delta$ are sufficiently small while $\rho$ is fixed, and for $|h_0|<\mathfrak d |t_1|$, the system
\begin{equation}\label{eq:simulsn}
\mathbb T_s^{\eps,\delta,\prime} (\lambda_*+\eps h)\vec\phi=0\quad\text{and} \quad\mathbb T_n^{\eps,\delta,\prime} (\lambda_*+\eps h)\vec\phi=0
\end{equation}
attains at most one solution $(h,\vec\phi)$.
The same holds for the system
\begin{equation}\label{eq:simulst}
\mathbb T_s^{\eps,\delta,\prime} (\lambda_*+\eps h)\vec\phi=0 \quad\text{and}\quad \mathbb T_t^{\eps,\delta,\prime} (\lambda_*+\eps h)\vec\phi=0.
\end{equation}
\end{lemma}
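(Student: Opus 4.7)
The strategy is a Grushin-type reduction that replaces the operator pencil problem by a $2\times 2$ matrix problem on $X' = \mathrm{span}\{\vec v_1',\vec v_2'\} = \ker \tilde{\mathbb T}^{0,0,\prime}(\lambda_*)$, combined with the generalized Rouch\'e theorem. By Lemma \ref{lem:rank1} I may restrict attention to $h_0 = o(1)$ as $\eps, \delta\to 0$, and every $\vec\phi \in \ker \mathbb T_s^{\eps,\delta,\prime}$ has rank one, so the algebraic and geometric multiplicities coincide at each characteristic value. Since $\tilde{\mathbb T}^{0,0,\prime}(\lambda_*)$ is Fredholm of index zero with kernel $X'$ and range $Y'$ (see \eqref{eq:XY}), and the finite-rank corrections in the limits \eqref{eq:stnKtwo} have range in $X'$, I decompose $\vec\phi = \alpha_1 \vec v_1' + \alpha_2 \vec v_2' + \vec\phi^{\perp}$ and eliminate $\vec\phi^{\perp}$ via the invertibility of $\tilde{\mathbb T}^{0,0,\prime}(\lambda_*)$ between suitable complements. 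Up to $o(1)$ corrections in $(\eps,\delta)$, each of the systems \eqref{eq:simulsn} and \eqref{eq:simulst} reduces to a pair of $2\times 2$ matrix equations in $\vec\alpha = (\alpha_1,\alpha_2)^T$, namely $M_s'(b)\vec\alpha = 0$ together with $M_n'(b)\vec\alpha = 0$ (resp.\ $M_t'(b)\vec\alpha = 0$), where $b = h/|t_1|$ and the matrices are those in \eqref{eq:Mbp}.

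I then analyze the limiting $2\times 2$ systems directly. From the proof of Lemma \ref{lem:Up}, $M_s'(0) = 0$, whereas $M_n'(0)$ and $M_t'(0)=\overline{M_n'(0)}$ each have a one-dimensional kernel (explicitly spanned by $(\im,1)^T$ for $M_n'$). Hence the common kernel at $b = 0$ is exactly one-dimensional in both cases. Moreover, by Lemma \ref{lem:Up}, $M_s'(b)$ is invertible for every $b \ne 0$ in the admissible disk, so no joint solution of either pair exists away from $b = 0$. This yields uniqueness (up to scalar) at the limiting level: in each case a unique pair $(0,\vec\alpha_0)$.

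The principal obstacle is to propagate this uniqueness to the perturbed operator system. Because $\mathbb U_s'$ has characteristic value $0$ of multiplicity two (Lemma \ref{lem:Up}), the small perturbation can split it into two simple characteristic values of $\mathbb T_s^{\eps,\delta,\prime}(\lambda_* + \eps h)$, producing a priori two candidate pairs $(h_0^{(j)}, \vec\phi^{(j)})$ for the $\mathbb T_s$-equation alone. I handle this via the generalized Rouch\'e theorem applied to the analytic operator pencil $h \mapsto \mathbb T_s^{\eps,\delta,\prime}(\lambda_*+\eps h)$, using the rank-one conclusion of Lemma \ref{lem:rank1} to align the operator multiplicities with those of the reduced matrix pencil. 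The additional constraint $\mathbb T_n^{\eps,\delta,\prime}\vec\phi = 0$ (resp.\ $\mathbb T_t^{\eps,\delta,\prime}\vec\phi = 0$) then encodes, via the Grushin reduction, a scalar analytic condition on $h$ whose leading-order coefficient is computed from $M_n'(b)$ (resp.\ $M_t'(b)$) applied to the $\vec\alpha$-direction singled out by $M_s'(b)\vec\alpha = 0$. Since $M_n'(b)$ and $M_t'(b)$ are invertible at every $b \ne 0$ in the admissible disk (again by Lemma \ref{lem:Up}, since $h=0$ is their only characteristic value), this scalar is nonzero at any extraneous candidate, so at most one pair $(h_0,\vec\phi)$ can satisfy both equations. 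This establishes the lemma for \eqref{eq:simulsn}; the argument for \eqref{eq:simulst} is identical with $M_t'$ in place of $M_n'$.
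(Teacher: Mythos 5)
The paper itself does not prove this lemma in-line: it states that the proof ``bridges from the limiting operators in Lemma~\ref{lem:Up} following parallel lines as Lemma~7.8, Propositions~7.5, 7.6 and 7.11 in \cite{li2024}''. Your outline is in the same spirit as that indicated route — reduce, via the kernel/range structure \eqref{eq:XY} of $\tilde{\mathbb T}^{0,0,\prime}(\lambda_*)$, to the $2\times2$ matrices $M_s'(b)$, $M_n'(b)$, $M_t'(b)$ of \eqref{eq:Mbp} and invoke the generalized Rouch\'e theorem — and your facts about the limiting matrices ($M_s'(0)=0$, one-dimensional kernels of $M_n'(0)$, $M_t'(0)$, invertibility for $b\neq0$) are correct. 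However, the decisive step of your argument is not justified, so as written the proof has a genuine gap.

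The gap is in the final propagation of uniqueness to the perturbed system. By Lemma~\ref{lem:rank1} (which you invoke) every candidate solution has $h_0=o(1)$, i.e.\ $b=h_0/|t_1|\to0$; so \emph{all} candidates lie precisely where $M_s'$, $M_n'$, $M_t'$ degenerate. Your concluding claim — ``since $M_n'(b)$ and $M_t'(b)$ are invertible at every $b\neq0$, the scalar condition is nonzero at any extraneous candidate'' — therefore gives no control: near $b=0$ one has $\det M_n'(b)=\det \overline{M_t'(b)}=O\big(b^2\big)$ (indeed $\sim -\tfrac{1}{4}(1+a)^2b^2$), while the Grushin-reduced matrices of the perturbed operators are only known up to the $o(1)$ errors coming from the operator convergence in \eqref{eq:stnKtwo} (Propositions~\ref{lem:oplimtwo}, \ref{lem:oplimdtwo}), and these errors carry no rate relative to $b^2$ at the candidates. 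Likewise, uniqueness of the limiting pair $(0,\vec\alpha_0)$ does not imply ``at most one'' for the perturbed system: two distinct perturbed solutions of \eqref{eq:simulsn} (or \eqref{eq:simulst}) could both converge to the same limiting pair — this is exactly the splitting phenomenon you acknowledge for $\mathbb T_s^{\eps,\delta,\prime}$ alone but never exclude for the simultaneous system. What is needed is a finer, quantitative argument of the type the paper delegates to \cite{li2024}: an $o(1)$-characterization of the kernels of the perturbed operators together with multiplicity bookkeeping in the sense of Gohberg--Sigal — for instance exploiting that the kernel vector $(\im,1)^t$ of $M_n'(0)$ has rank two (the first/second-derivative computation in the proof of Lemma~\ref{lem:Up}), so that a hypothetical second independent solution of the simultaneous system would force the total multiplicity of $\mathbb T_n^{\eps,\delta,\prime}$ (resp.\ $\mathbb T_t^{\eps,\delta,\prime}$) in $V$ beyond the value $2$ guaranteed by Rouch\'e. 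Without such a step, the ``at most one solution'' claim is not established by your argument.
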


Now let us consider the region $V:= \{h\in\mathbb C, |h|<\mathfrak d |t_1|\}$. 
In view of Lemma~\ref{lem:Up}, $\mathbb U_s'(h)$ has multiplicity 2 in $V$. Since $\mathbb U_s'$ and $\mathbb T_s^{\eps,\delta,\prime}
( \lambda_*+ |\eps| h)$ are analytic in $\overline{V}$, and the convergence is uniform in a neighborhood of $V$,  the generalized Rouche theorem~\cite{Ammari-book} implies that the multiplicity is two in $V$ when $\eps$ is sufficiently small. By Lemma~\ref{lem:rank1}, $\mathbb T_s^{\eps,\delta,\prime}
( \lambda_*+ |\eps| h)$ has two distinct characteristic values in $V$, each of which has a kernel of dimension one. Denote theses two characteristic values and their kernels as $(h_{0,i},\vec \phi_{0,i})$, $i=1,2$. If none of $\vec \phi_{0,i}$ generates an edge state through \eqref{eq:denstoedge}, it contradicts that \eqref{eq:simulsn} has at most one solution.  If each of $\vec \phi_{0,i}$ generates an edge state through \eqref{eq:denstoedge}, it contradicts that \eqref{eq:simulst} has at most one solution. Thus there is excatly one edge mode with $\kp=-\kp^*$.

For the proof of Cases 1 and 2 in Theorem~\ref{thm:edge}, the interface modes can be analyzed following the same steps above through the asymptotic analysis of the integral operators over the interface. In particular, when the perturbations are opposite on two sides of the interface, either with $(\eps_L,\delta_L)=(\eps, 0)$ and $(\eps_R,\delta_R)=(-\eps, 0)$, or with
$(\eps_L,\delta_L)=(0, \delta)$ and $(\eps_R,\delta_R)=(0, -\delta)$,
using the observations in Remark~\ref{lem:formcorres}, we can arrive at the set of integral operators that attain the same structure as \eqref{eq:stnKtwo} with $|\rho|=1$ at both $\kp=\kp^*$ and $\kp=-\kp^*$. As a result, the same argument as above using the Gohberg-Sigal theory implies the existence of the interface modes and their multiplicity.

\section*{Acknowledgement}
W. Li was partially supported by the NSF grant DMS-2532730. J. Lin was partially supported by the NSF grant DMS-2410645 and the Research Support Program from Auburn University. H. Zhang was partially supported by the Hong Kong RGC grant GRF 16307024 and GRF 16301625.


\begin{appendices}

\section{Proof of Lemma~\ref{lem:realization}}
\label{sec:realization}
\begin{proof} 
We only show 
$$\rflc \C(\rflc\bx) \rflc =-\C(\bx), \quad R\C(R^{-1}\bx) R^{-1} = \C(\bx),$$
as all other relations are obvious.
First,
\begin{equation*}
\begin{aligned}
\C_{k,l}(\rflc\bx)&= \nabla \A_{k,l}|_{\rflc \bx} \cdot \J \rflc \bx = 
\left(\rflc  \nabla \big(\A_{k,l}(\rflc\bx) \big)\right)^t \J \rflc \bx \\
&=\left(  \nabla \big(\A_{k,l}(\rflc\bx) \big)\right)^t \rflc\J \rflc \bx
= -\nabla \big(\A_{k,l}(\rflc\bx) \big) \cdot J\bx.
\end{aligned}
\end{equation*}
Thus
\begin{equation*}
\begin{aligned}
\rflc \C(\rflc\bx) \rflc =  - \nabla \big(\rflc\A(\rflc\bx)\rflc \big) \cdot J\bx = - \nabla\A(\bx) \cdot J\bx=-\C(\bx),
\end{aligned}
\end{equation*}
and the first identity above holds.
For the second identity, a straightforward calculation leads to
\begin{equation*}
\begin{aligned}
\C_{k,l}(R^{-1}\bx)&= \nabla \A_{k,l}|_{R^{-1} \bx} \cdot \J R^{-1} \bx = 
\left(R^{-1}  \nabla \big(\A_{k,l}(R^{-1}\bx) \big)\right)^t \J R^{-1} \bx \\
& =\left(  \nabla \big(\A_{k,l}(R^{-1}\bx) \big)\right)^t R\J R^{-1} \bx 
= \nabla \big(\A_{k,l}(R^{-1}\bx) \big) \cdot J\bx.
\end{aligned}
\end{equation*}
Thus
\begin{equation*}
\begin{aligned}
R \C(R^{-1}\bx) R^{-1} =   \nabla \big(R\A(R^{-1}\bx)R^{-1} \big) \cdot J\bx =  \nabla\A(\bx) \cdot J\bx=\C(\bx)
\end{aligned}
\end{equation*}

\end{proof}
\section{Proof of Propositions~\ref{lem:Koneeps}-\ref{lem:Ktwodelta}}\label{sec:pertDirac}
The eigenvalue problem \eqref{eq:disp_zig} is equivalent to finding $(\lambda,\tilde u)\in \mathbb C\times H^1_{\mathbf{0}}(\cC_z)$, such that
\begin{equation}\label{eq:dispper}
\langle \tilde v, \cL(\eps,\delta,\bp) \tilde u\rangle_{\cC_z} = \lambda \langle \tilde v, \tilde u\rangle_{\cC_z} \quad \text{for all } v\in H^1_{\mathbf{0}}(\cC_z).
\end{equation}
Using Lyapunov-Schmidt reduction, following the proof of Proposition~4.3 in
\cite{li2024}, the leading order behaviors of the eigenpairs are determined by the matrix 
\begin{equation}\label{eq:Lyap}
\cM :=\eps\langle \tilde\bw, \partial_{\eps}\cL (\eps,0,K_*)\tilde\bw\rangle_{\cC_z}|_{\eps=0}+\delta\langle \tilde\bw, \partial_{\delta}\cL (0,\delta,K_*)\tilde\bw\rangle_{\cC_z}|_{\delta=0} + 
\langle\tilde \bw,  (\ell \bbeta_1 + \mu\bbeta_2))\cdot\nabla_{\bp}\cL(0,0,\bp)\tilde\bw\rangle_{\cC_z}|_{\bp=K_*}
- \lambda^{(1)} I,
\end{equation}
where $I$ is the 2-by-2 identity matrix, and $\lambda^{1} = \lambda - \lambda_*$. Here the leading order term of $\lambda$ is in the $\mathbb R$-norm, and that of the eigenfunctions $\tilde u_{n,\eps,\delta}$ are in the $H^1(\cC_z)$ norm. In the following subsections, we compute the leading order terms of the dispersion pairs when $\eps\neq0$ or $\delta\neq0$, depending on the signs of $t_1\eps$ and $t_2\delta$, at $\Kone$ and $\Ktwo$.

\subsection{Perturbed eigenpairs at $\Kone$, with respect to $\eps$}
At $\Kone$, when $\eps\neq0$ and $\delta=0$, the matrix in \eqref{eq:Lyap} takes the form
\begin{equation*}
\left(\begin{matrix}
t_1\eps - \lambda^{(1)} &\overline{\theta_*}(\ell+\mu\tau)\\
\theta_*(\ell+\mu\tau) &-t_1\eps - \lambda^{(1)}
\end{matrix}\right).
\end{equation*}
Its determinant is zero when $ \lambda^{(1)} =\pm\sqrt{\eps^2 t_1^2 + |\theta_*|^2 (\ell+\mu\tau)^2}$. 
At the higher eigenvalue $ \lambda^{(1)} = \sqrt{\eps^2 t_1^2 + |\theta_*|^2 (\ell+\mu\tau)^2}$, the matrix takes the form
\begin{equation*}
\left(\begin{matrix}
t_1\eps - \sqrt{\eps^2 t_1^2 + |\theta_*|^2 (\ell+\mu\tau)^2} &\overline{\theta_*}(\ell+\mu\tau)\\
\theta_*(\ell+\mu\tau) &-t_1\eps  - \sqrt{\eps^2 t_1^2 + |\theta_*|^2 (\ell+\mu\tau)^2}
\end{matrix}\right),
\end{equation*}
and the eigenspace is spanned by
\begin{equation*}
\left(t_1\eps +\sqrt{\eps^2 t_1^2 + |\theta_*|^2 (\ell+\mu\tau)^2} , \theta_*(\ell+\mu\tau) \right) \text{or}
\left(\overline{\theta_*}(\ell+\mu\tau), -t_1\eps+ \sqrt{\eps^2 t_1^2 + |\theta_*|^2 (\ell+\mu\tau)^2}\right).
\end{equation*}
At the lower eigenvalue $ \lambda^{(1)} =-\sqrt{\eps^2 t_1^2 + |\theta_*|^2 (\ell+\mu\tau)^2}$, the matrix takes the form
\begin{equation*}
\left(\begin{matrix}
t_1\eps + \sqrt{\eps^2 t_1^2 + |\theta_*|^2 (\ell+\mu\tau)^2} &\overline{\theta_*}(\ell+\mu\tau)\\
\theta_*(\ell+\mu\tau) &-t_1\eps  +\sqrt{\eps^2 t_1^2 + |\theta_*|^2 (\ell+\mu\tau)^2}
\end{matrix}\right),
\end{equation*}
and the eigenspace is spanned by
\begin{equation*}
\left(t_1\eps -\sqrt{\eps^2 t_1^2 + |\theta_*|^2 (\ell+\mu\tau)^2} , \theta_*(\ell+\mu\tau) \right)  \text{or}
\left(\overline{\theta_*}(\ell+\mu\tau), -t_1\eps- \sqrt{\eps^2 t_1^2 + |\theta_*|^2 (\ell+\mu\tau)^2}\right).
\end{equation*}

Observe $\sqrt{\eps^2 t_1^2 + |\theta_*|^2 (\ell+\mu\tau)^2} \geq |t_1\eps|$, with the equal sign attained when $\ell=\mu=0$.
Define 
\begin{equation*}
L_1(\eps,\ell,\mu):= \frac{\theta_*(\ell+\mu\tau)}{|\eps t_1|+ \sqrt{\eps^2 t_1^2 + |\theta_*|^2  (\ell+\mu\tau)^2}}.
\end{equation*}
Thus for all $t_1\eps$,
\begin{equation}\label{eq:tepslam}
\lambda_{1,\eps,0} \sim \lambda_* - \sqrt{\eps^2 t_1^2 + |\theta_*|^2 (\ell+\mu\tau)^2},  \quad \lambda_{2,\eps,0} \sim \lambda_* + \sqrt{\eps^2 t_1^2 + |\theta_*|^2 (\ell+\mu\tau)^2}.
\end{equation}
When $t_1\eps>0$,
\begin{equation}\label{eq:tepsp}
\tilde u_{1,\eps,0}(\cdot;\bp(\Kone;0,0)) \sim (- \overline{L_1}  \tilde w_1 + \tilde w_2)/\sqrt{1+|L_1|^2}, \quad \tilde u_{2,\eps,0}(\cdot;\bp(\Kone;0,0)) \sim (\tilde w_1 + L_1 \tilde w_2)/\sqrt{1+|L_1|^2}, 
\end{equation}
and when $t_1\eps<0$,
\begin{equation}\label{eq:tepsm}
\tilde u_{1,\eps,0}(\cdot;\bp(\Kone;0,0)) \sim  (\tilde w_1 -L_1 \tilde w_2)/\sqrt{1+|L_1|^2}, \quad \tilde u_{2,\eps,0}(\cdot;\bp(\Kone;0,0)) \sim ( \overline{L_1} \tilde w_1 + \tilde w_2)/\sqrt{1+|L_1|^2}.
\end{equation}
Here, the symbol $\sim$ represents equal to up to an order of $O(|\eps|,|\ell|,|\mu|)$. Finally, using 
\begin{equation}
u_{n,\eps,0}(\cdot;\bp(\Kone;0,0))=e^{\im\Kone\cdot\bx}\tilde u_{n,\eps,0}(\cdot;\bp(\Kone;0,0)), \quad w_n= e^{\im\Kone\cdot\bx}\tilde w_n,
\end{equation}
and
\begin{equation}
u_{n,\eps,0}(\cdot;\bp(\Kone;\ell,\mu)) = e^{\im\bp(\Kone;\ell,\mu)\cdot\bx}\tilde u_{n,\eps,0}(\cdot;\bp(\Kone;0,0)) = u_{n,\eps,0}(\cdot;\bp(\Kone;0,0))+O(\ell,
\mu),
\end{equation}
we obtain the relations shown in \eqref{eq:uepsp} and \eqref{eq:uepsm}.

\subsection{Perturbed eigenpairs at $\Kone$, with respect to $\delta$}
At $\Kone$, when $\eps=0$ and $\delta\neq0$, the matrix in \eqref{eq:Lyap} takes the form
\begin{equation*}
\left(\begin{matrix}
t_2\delta - \lambda^{(1)} &\overline{\theta_*}(\ell+\mu\tau)\\
\theta_*(\ell+\mu\tau) &-t_2\delta - \lambda^{(1)}
\end{matrix}\right).
\end{equation*}
Define 
\begin{equation*}
L_2(\delta,\ell,\mu):= \frac{\theta_*(\ell+\mu\tau)}{|\delta t_2|+ \sqrt{\delta^2 t_2^2 + |\theta_*|^2  (\ell+\mu\tau)^2}}.
\end{equation*}
We obtain that when $t_2\delta>0$,
\begin{equation}\label{eq:tdeltap}
\tilde u_{1,0,\delta}(\cdot;\bp(\Kone;0,0)) \sim(- \overline{L_2}  \tilde w_1 + \tilde w_2)/\sqrt{1+|L_2|^2} , \quad \tilde u_{2,0,\delta}(\cdot;\bp(\Kone;0,0)) \sim  (\tilde w_1 + L_2 \tilde w_2)/\sqrt{1+|L_2|^2},
\end{equation}
and when $t_2\delta<0$,
\begin{equation}\label{eq:tdeltam}
\tilde u_{1,0,\delta}(\cdot;\bp(\Kone;0,0)) \sim  (\tilde w_1 -L_2 \tilde w_2)/\sqrt{1+|L_2|^2}, \quad \tilde u_{2,0,\delta}(\cdot;\bp(\Kone;0,0)) \sim   (\overline{L_2} \tilde w_1 + \tilde w_2)/\sqrt{1+|L_2|^2}.
\end{equation}

\subsection{Perturbed eigenpairs at $\Ktwo$, with respect to $\eps$}
At $\Ktwo$, when $\eps\neq0$ and $\delta=0$, the matrix in \eqref{eq:Lyap} takes the form
\begin{equation*}
\left(\begin{matrix}
-t_1\eps - \lambda^{(1)} &-\overline{\theta_*}(\ell+\mu\tau)\\
-\theta_*(\ell+\mu\tau) &t_1\eps - \lambda^{(1)}
\end{matrix}\right).
\end{equation*}
For $ \lambda^{(1)} =+ \sqrt{\eps^2 t_1^2 + |\theta_*|^2 (\ell+\mu\tau)^2}$, which gives $u_{2,\eps,0}$, the matrix is 
\begin{equation*}
\left(\begin{matrix}
-t_1\eps - \sqrt{\eps^2 t_1^2 + |\theta_*|^2 (\ell+\mu\tau)^2} &-\overline{\theta_*}(\ell+\mu\tau)\\
-\theta_*(\ell+\mu\tau) &t_1\eps  - \sqrt{\eps^2 t_1^2 + |\theta_*|^2 (\ell+\mu\tau)^2}
\end{matrix}\right).
\end{equation*}
and eigenspace is spanned by
\begin{equation*}
\left(-t_1\eps +\sqrt{\eps^2 t_1^2 + |\theta_*|^2 (\ell+\mu\tau)^2} , -\theta_*(\ell+\mu\tau) \right) \text{or}
\left(-\overline{\theta_*}(\ell+\mu\tau), t_1\eps+ \sqrt{\eps^2 t_1^2 + |\theta_*|^2 (\ell+\mu\tau)^2}\right).
\end{equation*}

For $ \lambda^{(1)} = -\sqrt{\eps^2 t_1^2 + |\theta_*|^2 (\ell+\mu\tau)^2}$, which gives $u_{1,\eps,0}$, the matrix is 
\begin{equation*}
\left(\begin{matrix}
-t_1\eps + \sqrt{\eps^2 t_1^2 + |\theta_*|^2 (\ell+\mu\tau)^2} &-\overline{\theta_*}(\ell+\mu\tau)\\
-\theta_*(\ell+\mu\tau) &t_1\eps  +\sqrt{\eps^2 t_1^2 + |\theta_*|^2 (\ell+\mu\tau)^2}
\end{matrix}\right).
\end{equation*}
and the eigenspace is spanned by
\begin{equation*}
\left(-t_1\eps -\sqrt{\eps^2 t_1^2 + |\theta_*|^2 (\ell+\mu\tau)^2} , -\theta_*(\ell+\mu\tau) \right) \text{or}
\left(-\overline{\theta_*}(\ell+\mu\tau), t_1\eps- \sqrt{\eps^2 t_1^2 + |\theta_*|^2 (\ell+\mu\tau)^2}\right).
\end{equation*}
Hence when $t_1\eps>0$,
\begin{equation}\label{eq:twotepsp}
\tilde u_{1,\eps,0}(\cdot;\bp(\Ktwo;0,0)) \sim (\tilde w_1 + L_1 \tilde w_2)/\sqrt{1+|L_1|^2}, \quad \tilde u_{2,\eps,0}(\cdot;\bp(\Ktwo;0,0)) \sim  (- \overline{L_1}  \tilde w_1 + \tilde w_2)/\sqrt{1+|L_1|^2},
\end{equation}
and when $t_1\eps<0$,
\begin{equation}\label{eq:twotepsm}
\tilde u_{1,\eps,0}(\cdot;\bp(\Ktwo;0,0)) \sim (\overline{L_1} \tilde w_1 + \tilde w_2)/\sqrt{1+|L_1|^2} , \quad \tilde u_{2,\eps,0}(\cdot;\bp(\Ktwo;0,0)) \sim   (\tilde w_1 -L_1 \tilde w_2)/\sqrt{1+|L_1|^2}.
\end{equation}

\subsection{Perturbed eigenpairs at $\Ktwo$, with respect to $\delta$}
At $\Ktwo$, when $\eps=0$ and $\delta\neq0$, the matrix in \eqref{eq:Lyap} takes the form
\begin{equation}\label{}
\left(\begin{matrix}
t_2\delta + \lambda^{(1)} &-\overline{\theta_*}(\ell+\mu\tau)\\
-\theta_*(\ell+\mu\tau) &-t_2\delta + \lambda^{(1)}
\end{matrix}\right).
\end{equation}
We obtain that when $t_2\delta>0$,
\begin{equation}\label{eq:twotdeltap}
\tilde u_{1,\delta}(\cdot;\bp(\Ktwo;0,0)) \sim(\overline{L_2}  \tilde w_1 + \tilde w_2)/\sqrt{1+|L_2|^2} , \quad \tilde u_{2,\delta}(\cdot;\bp(\Ktwo;0,0)) \sim   (\tilde w_1 - L_2 \tilde w_2)/\sqrt{1+|L_2|^2},
\end{equation}
and when $t_2\delta<0$,
\begin{equation}\label{eq:twotdeltam}
\tilde u_{1,0,\delta}(\cdot;\bp(\Ktwo;0,0)) \sim (\tilde w_1 + L_2 \tilde w_2)/\sqrt{1+|L_2|^2} , \quad \tilde u_{2,0,\delta}(\cdot;\bp(\Ktwo;0,0)) \sim  ( -\overline{L_2} \tilde w_1 + \tilde w_2)/\sqrt{1+|L_2|^2}.
\end{equation}

\end{appendices}

\bibliographystyle{siam}

\end{document}